\newtheorem{lemma}{\bf Lemma}
\newtheorem{defn}{\bf Definition}
\newtheorem{thm}{\bf Theorem}
\newtheorem{rem}{\bf Remark}
\newtheorem{cor}{\bf Corollary}
\newtheorem{ex}{\bf Example}
\newtheorem{cl}{\bf Claim}
\newtheorem{conj}{\bf Conjecture}
\begin{document}

\title{The Diversity Multiplexing Tradeoff of the MIMO Half-Duplex Relay Channel} 

\author{{\Large Sanjay Karmakar \qquad Mahesh K. Varanasi}
\thanks{S. Karmakar and M. K. Varanasi are both with the Department
of Electrical Computer and Energy Engineering, University of Colorado at Boulder, Boulder,
CO, 30809 USA e-mail: (sanjay.karmakar@colorado.edu, varanasi@colorado.edu).
}\thanks{The material in this paper was presented in part at the IEEE International Symposium on Information Theory, 2010, Austin, Texas~\cite{SKMV_ISIT2010_Sym_relay}.}
}
\markboth{Submitted, IEEE Trans. Inform. Th., Jun. 2011}%
{Shell \MakeLowercase{\textit{et al.}}: Bare Demo of IEEEtran.cls for Journals}

\maketitle

\begin{abstract}
The fundamental diversity-multiplexing tradeoff of the three-node, multi-input, multi-output (MIMO), quasi-static, Rayleigh faded, half-duplex relay channel is characterized for an arbitrary number of antennas at each node and in which opportunistic scheduling (or dynamic operation) of the relay is allowed, i.e., the relay can switch between receive and transmit modes at a channel dependent time. In this most general case, the diversity-multiplexing tradeoff is characterized as a solution to a simple, two-variable optimization problem. This problem is then solved in closed form for special classes of channels defined by certain restrictions on the numbers of antennas at the three nodes. The key mathematical tool developed here that enables the explicit characterization of the diversity-multiplexing tradeoff is the joint eigenvalue distribution of three mutually correlated random Wishart matrices. Besides being relevant here, this distribution result is interesting in its own right. Previously, without actually characterizing the diversity-multiplexing tradeoff, the optimality in this tradeoff metric of the dynamic compress-and-forward (DCF) protocol based on the classical compress-and-forward scheme of Cover and El Gamal was shown by Yuksel and Erkip. However, this scheme requires global channel state information (CSI) at the relay. In this work, the so-called quantize-map and forward (QMF) coding scheme due to Avestimehr {\em et} {\em al} is adopted as the achievability scheme with the added benefit that it achieves optimal tradeoff with only the knowledge of the (channel dependent) switching time at the relay node. Moreover, in special classes of the MIMO half-duplex relay channel, the optimal tradeoff is shown to be attainable even without this knowledge. Such a result was previously known only for the half-duplex relay channel with a single antenna at each node, also via the QMF scheme. More generally, the explicit characterization of the tradeoff curve in this work enables the in-depth comparisons herein of full-duplex versus half-duplex relaying as well as static versus dynamic relaying, both as a function of the numbers of antennas at the three nodes.



\end{abstract}

\begin{IEEEkeywords}
Diversity-multiplexing tradeoff, Half-duplex, MIMO, Outage probability, Relay channel, Wishart matrices.
\end{IEEEkeywords}
\newpage
\section{Introduction}
\label{sec_introduction}

Cooperative communication techniques can advantageously utilize the fading environment of a wireless network to provide better reliability and/or rate~\cite{SEB1,SEB2}. The simplest theoretical abstraction of a cooperative communication network is the $3$-node relay channel (RC), where the relay node helps the communication between the source and destination nodes by forwarding an appropriately processed version of the source message received at the relay node to the destination. Moreover, multiple antennas at the three nodes can markedly boost rate and reliability performance by allowing for the exploitation of the inherent combined MIMO and cooperative communication gains. 

\begin{figure}[htp]
  \begin{center}
    \subfigure[$\textrm{CN}_1$: A mobile set acts as a relay]{\label{cooperative-networks-a}\includegraphics[scale=.25]{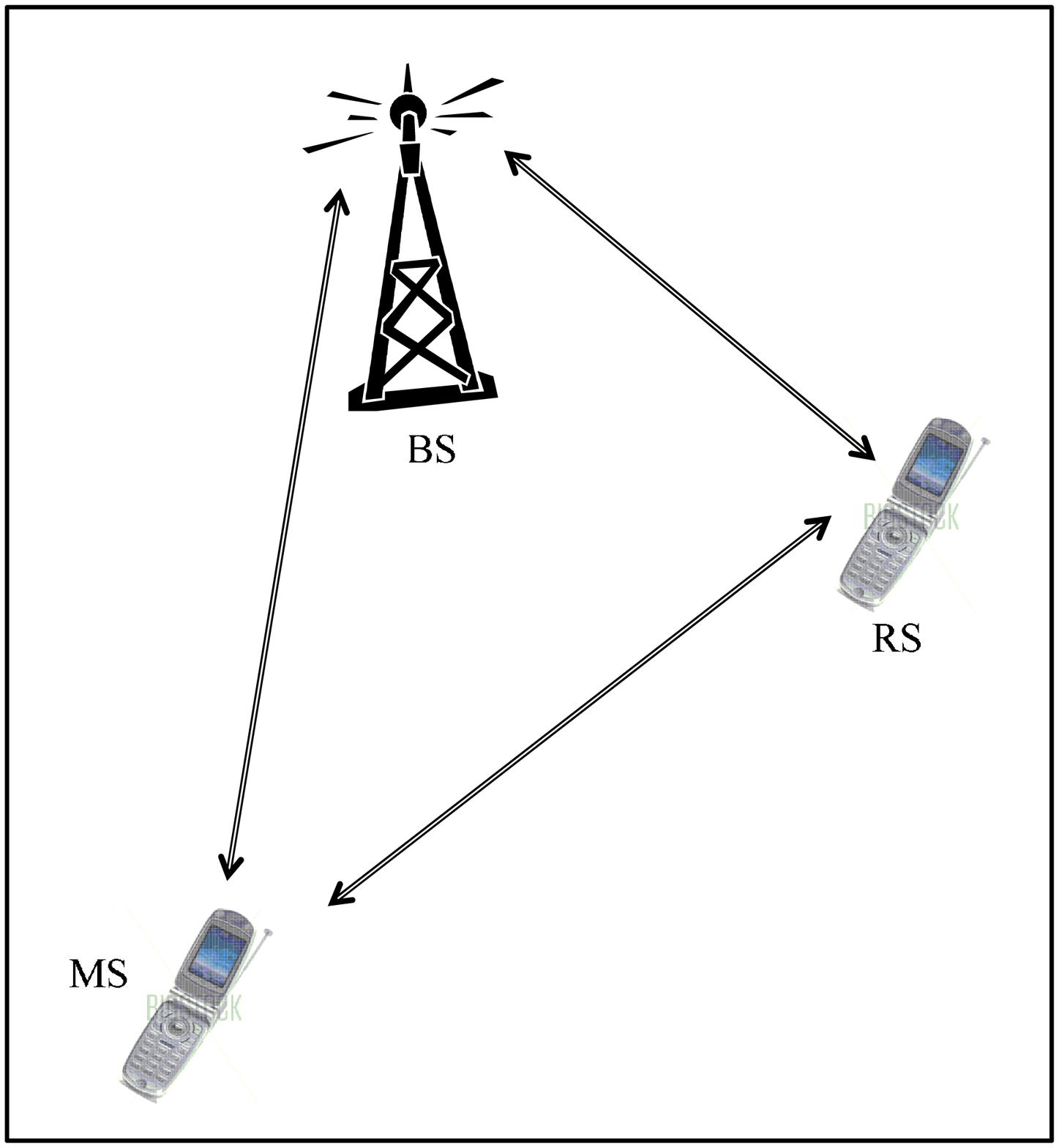}}
    \subfigure[$\textrm{CN}_2$: A smaller base station acts as a relay]{\label{cooperative-networks-b}\includegraphics[scale=.25]{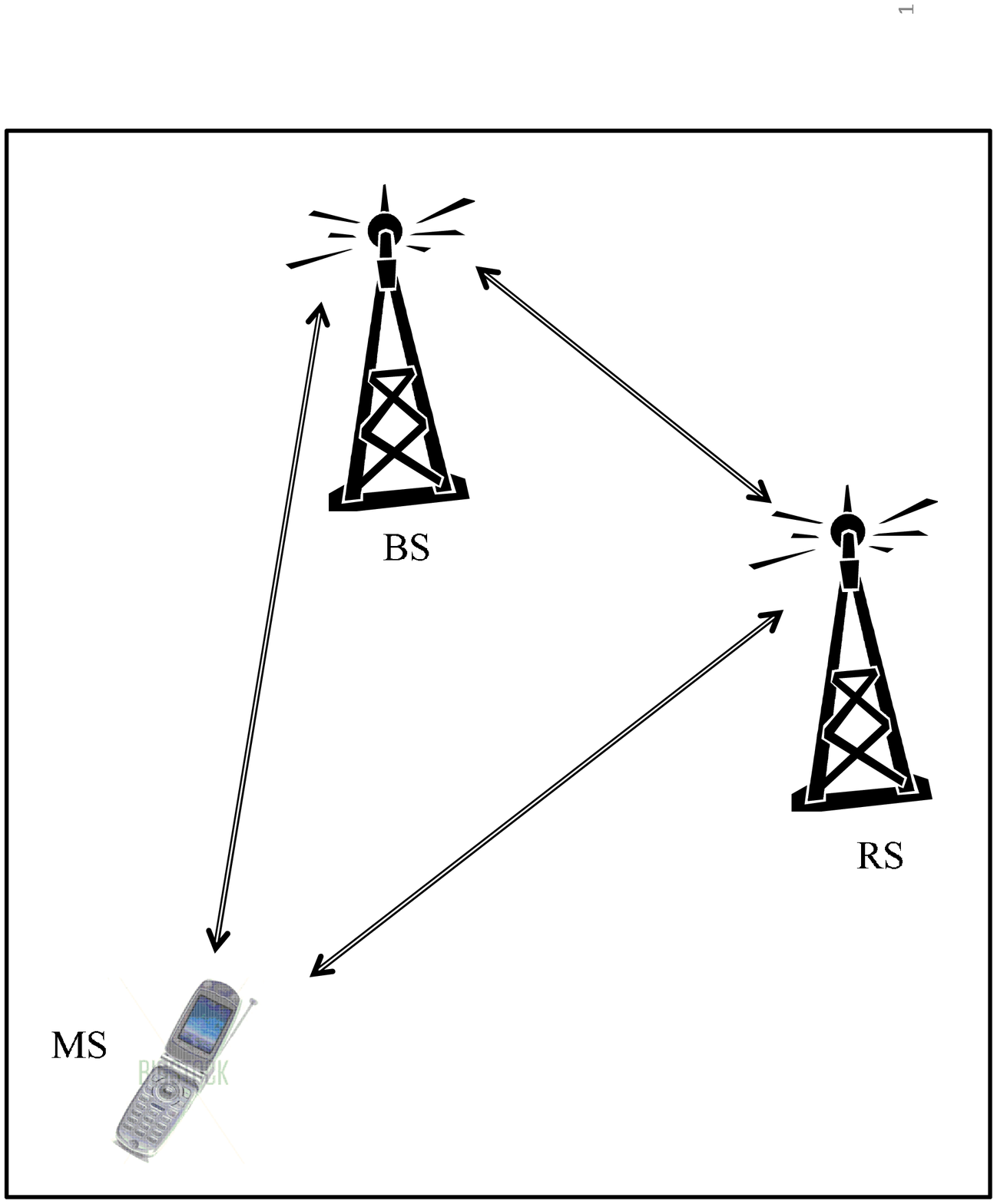}}
    \subfigure[$\textrm{CN}_3$: A sensory network with a mobile relay station (MRS)]{\label{cooperative-networks-c}\includegraphics[scale=0.25]{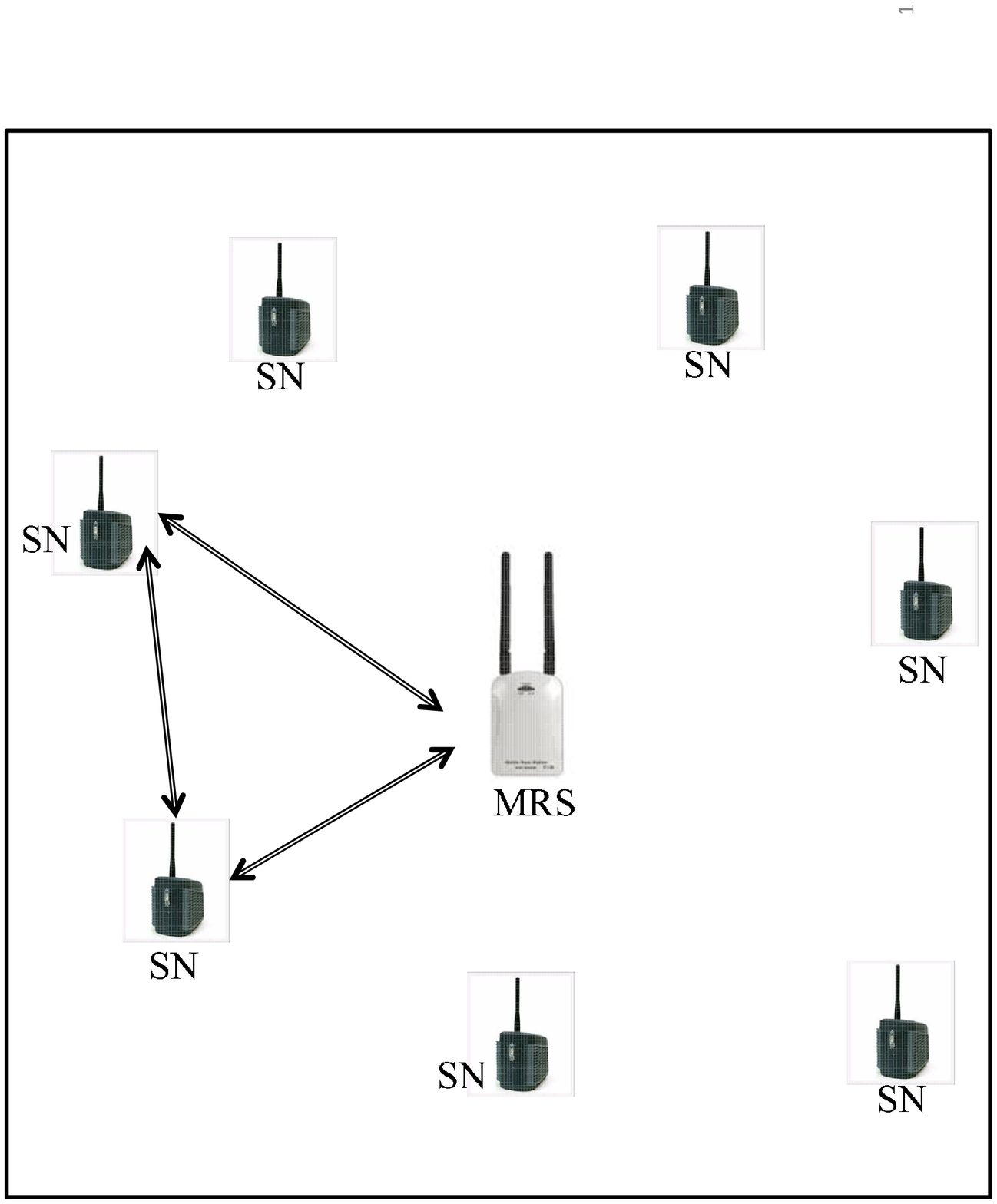}}
      \end{center}
  \caption{Three Examples of Cooperative Networks.}
  \label{cooperative-networks}
\end{figure}

MIMO relay channel communications can be considered for various applications. For instance, Fig. \ref{cooperative-networks} depicts three different cooperative communication scenarios to which the theory of this work applies. Fig. \ref{cooperative-networks-a} depicts a cellular network, denoted $\textrm{CN}_1$, wherein a mobile user (or mobile set (MS)) uses another mobile user as the relay station (RS) to communicate its message to and from the base station (BS). This cooperative model was proposed in \cite{SEB1}. Fig. \ref{cooperative-networks-b} depicts a scenario where, in a cellular network (denoted $\textrm{CN}_2)$, a particular cell area is divided into more than one sub-cell and each sub-cell is served by an additional dedicated node (a smaller BS) to provide better quality of service. Thus each user in these sub-cells can use this dedicated node to relay their messages to and from the BS. The $\textrm{CN}_2$ network is different from $\textrm{CN}_1$ in the sense that in it the relay station can host a larger number of antennas. It is under consideration to be implemented in LTE-advanced and WiMAX technologies~\cite{YHXM} and being standardized for broadband wireless access by the IEEE 802.16's relay task group~\cite{relayTG} for expanded throughput and coverage with deployment of relay stations of complexity and cost lower than that of legacy base stations but higher than that of mobile stations \cite{relayTG}.
A third example of a cooperative network (denoted $\textrm{CN}_3$) is the sensor network of Fig. \ref{cooperative-networks-c} (cf. \cite{WSC}), where a more capable mobile relay station (i.e., with more antennas) helps several less capable sensor nodes to communicate with each other. It is possible to give other examples, see for instance, the application of cooperative communication in ad-hoc networks in \cite{Wang_Zhang_Madsen}. Note that the numbers of antennas at the different nodes vary across the applications and also depend on whether the communication is uplink or downlink (such as in the $\textrm{CN}_1$ and $\textrm{CN}_2$ networks), which points to the importance of studying MIMO relay channels with an arbitrary number of antennas at each node.

The relay has two phases of operation: the listen phase, in which it receives the signal from the source, and the transmit phase, in which it transmits some version of the received signal to the transmitter. If the relay can simultaneously operate in both phases it is called a full-duplex (FD) relay and the corresponding channel is called a full duplex relay channel (FD-RC). Otherwise, if the relay can only operate in one phase at a time it is called a half-duplex (HD) relay and the corresponding channel a half duplex relay channel (HD-RC). Due to the large difference between the power levels of the transmitted and received signals however, it is difficult, if not impossible, to design FD relays cost- and space-efficiently. The focus of this paper is hence on MIMO HD-RCs.

Cooperative protocols proposed and analyzed for the HD-RC can be divided into different classes. If a protocol uses the CSI at the relay to opportunistically decide the switching time -- the time at which it switches between the listen and transmit phases -- it is called a {\em dynamic} protocol. Dynamic protocols considered in the literature include the dynamic decode-and-forward (DDF) protocol of \cite{KHP} and the dynamic compress-and-forward (DCF) protocol of \cite{YEE}. Otherwise, if the relay is restricted to switch between the listen and transmit phases at a pre-determined, channel independent time, it is called a {\em static} protocol. An important example is the static compress-and-forward (SCF) protocol of \cite{YEE,OCM}. An HD-RC on which protocols are restricted to be static is called a static HD-RC, and one on which they are not, is called a dynamic HD-RC (or simply HD-RC) since any static protocol can be thought as a special case of the family of dynamic protocols. On the other hand, the transmit-receive phases on an HD-RC can be thought of as states and additional information can be conveyed to the receiver through the sequence of these states. A cooperative protocol that uses these states to send additional information is called a {\em random} protocol, otherwise it is a {\em fixed} protocol (see \cite{GK,YEE}).

In this paper we focus on the general three-node MIMO HD-RC, i.e., in which there are an arbitrary number of antennas at each node and in which there are no constraints on the relay operation so that it can operate via the static or dynamic and random or fixed mode. In order to avoid repeated use of a complete descriptor of a channel we will use simplified ones when the meaning is unambiguous from the context. For example, we may refer to the dynamic MIMO HD-RC sometimes simply as the relay channel because this channel is the central focus of this paper. Similarly, we may refer to the  MIMO FD-RC or the static MIMO HD-RC as the FD-RC or the static RC, respectively, when the meaning is clear.

In spite of its apparent simplicity, neither the capacity nor the diversity-multiplexing tradeoff \cite{tse1} of the $3$-node MIMO HD-RC is known till date. However, in a recent paper \cite{Avestimehr_Diggavi_Tse} the capacity of this channel was characterized within a constant number of bits. It was proved that the so called quantize-map and forward (QMF) scheme can achieve a rate which is within a constant number of bits to the cut-set upper bound of the channel. On a slow fading HD-RC however, the instantaneous end-to-end mutual information, and therefore the cut-set bound of the channel, is a random quantity. A meaningful measure of performance on this channel is hence the {\em outage probability} which is a measure of reliability as a function of the communication rate in that it represents the (minimum) probability with which a particular rate cannot be supported on the channel. The result of \cite{Avestimehr_Diggavi_Tse} provides upper and lower bounds on this outage probability, both in terms of the instantaneous cut-set upper bound of the channel, denoted as $\bar{C}(\mathcal{H})$ for a channel realization of $\mathcal{H}$, as
\begin{equation*}
    \Pr\{\bar{C}(\mathcal{H})<R\}\leq \mathcal{P}_{out}\leq \Pr\{\bar{C}(\mathcal{H})<R-\kappa\},
\end{equation*}
where $R$ is the operating rate and $\kappa$ is a positive constant independent of the channel parameters and the signal-to-noise ratio (SNR) of the channel (e.g., see Theorem~8.5 in \cite{Avestimehr_Diggavi_Tse}). The exact evaluation of the outage probability requires both these bounds to be tight which in turn requires the exact capacity of the channel. Instead, in this paper we focus on the asymptotic (in SNR) behavior of the tradeoff between rate and reliability as captured by the DMT metric, first introduced in the context of the point-to-point MIMO channel by Zheng and Tse in \cite{tse1}. Since it was proved that a random protocol can increase the capacity by at most one bit in \cite{YEE}, there is no distinction between random and fixed protocols in the DMT framework. Thus, from the DMT perspective, characterizing the DMT of the HD-RC by allowing for dynamic operation of the relay but restricting it to the fixed mode still amounts to characterizing the fundamental DMT of the HD-RC. It is noted that the DMT of the static MIMO HD-RC was recently obtained by Leveque et al in \cite{OCM}.
It is shown here that in general a restriction that relay protocols be static fundamentally limits DMT performance over the MIMO HD-RC.


Since its first introduction and pioneering work by Van der Meulen~\cite{Meulen} and the subsequent significant progress made by Cover and El Gamal in \cite{Cover_Gamal}, the relay channel and its more general versions have been analyzed from both the capacity perspective in \cite{ElGamal-Aref, Wang_Zhang_Madsen,Madsen_Zhang,Kramer_Gastpar_Gupta,Avestimehr_Diggavi_Tse,Noisy_NC} and from the diversity, or more generally, the DMT perspectives for the 3-node relay network in quasi-static fading channels in \cite{LTW0,LW,NabarRU:relay,PrasadN:CO-OP:ISIT04,KHP,NpV,YEE,SKMV_DDF_journal,OCM,SKMV_ISIT2010_Sym_relay}. The earliest works demonstrating the improved reliability of the relay channel in terms of the diversity gain compared to the corresponding point-to-point (PTP) channel were reported in~\cite{LTW0,LW,NabarRU:relay,PrasadN:CO-OP:ISIT04}, where a number of simple cooperative protocols were proposed and their DMT performance was analyzed. Later in \cite{PrasadN:CO-OP:ISIT04,KHP,NpV}, more efficient protocols were introduced.  Notable among these were the dynamic decode-and-forward (DDF) protocol which is DMT optimal on a single antenna relay channel for a range of low multiplexing gains and the so called enhanced dynamic decode-and-forward (EDDF) of \cite{NpV}. All of the above protocols were analyzed for the relay channel with single antenna nodes (\cite{PrasadN:CO-OP:ISIT04} considers multiple antennas at the destination).

Multiple antenna relay channels were studied by Yuksel and Erkip in \cite{YEE}, where the DMTs of a number of protocols were evaluated and the DMT optimality of the compress-and-forward (CF) coding scheme of \cite{Cover_Gamal} for the MIMO FD- and HD-RCs was proved. In the DCF protocol of \cite{YEE}, the relay node utilizes all the instantaneous channel realizations, i.e., {\em global CSI}, to compute the quantized signal and the optimal switching time of the relay node. However, global CSI at the relay is not necessary to achieve DMT optimal performance as we discuss next.

The static QMF protocol of \cite{Avestimehr_Diggavi_Tse} can achieve the cut-set bound of the HD-RC to within a constant gap that is independent of CSI and SNR for a {\em fixed} scheduling of the relay, i.e., for an a priori fixed time $t_d$ at which the relay switches from listen to the transmit mode (e.g., see Theorem~8.3 in \cite{Avestimehr_Diggavi_Tse}), and do so without knowledge of CSI at the relay. However, on a slow fading dynamic HD-RC the cut-set bound, denoted by $\bar{C}(\mathcal{H},t_d)$, is a function of global CSI including $t_d$ and hence the optimal switching time $t_d^*$ that maximizes the cut-set bound can be a function of the instantaneous channel matrices. If (just) this switching time information is hence available at the relay node it then follows that the QMF protocol can achieve a rate that is within a constant gap to $\bar{C}(\mathcal{H},t_d^*)$ without requiring global CSI at the relay. Henceforth, the QMF protocol that operates with a {\em dynamic } (channel-dependent) switching time of the relay node will be referred to as the {\em dynamic QMF} protocol. Since a constant gap is irrelevant in the DMT metric, the dynamic QMF protocol in which the relay switches from listen to transmit modes at $t_d^*$ achieves the fundamental DMT of the MIMO HD-RC with only knowledge of $t_d^*$ at the relay, as opposed to global CSI $\mathcal{H}$ required by the DCF protocol of \cite{YEE}. The above discussion shows that the DMT of the static MIMO HD-RC found in \cite{OCM} for the static CF protocol requiring global CSI at the relay also applies to the static QMF protocol and hence to the static MIMO HD-RC without any CSI at the relay. 

In this paper, we are interested in establishing the DMT of the dynamic MIMO HD-RC. While the optimality in the DMT metric of the DCF was shown in \cite{YEE} and that of the dynamic QMF protocol \cite{Avestimehr_Diggavi_Tse} is evident from the discussion above, the characterization of this optimal performance, i.e., the fundamental DMT of the MIMO HD-RC, is not yet known and is the subject of this paper. The key mathematical tool that prevented its computation thus far is the joint eigenvalue distribution of three mutually correlated random Wishart matrices. Here we obtain this distribution as a stepping stone to characterizing the DMT of the MIMO HD-RC. Not only is this distribution result interesting in its own right as a problem in random matrices, it is also arises in establishing the DMT of the MIMO interference channel as was done by the authors in \cite{Sanjay_Varanasi_IC_DMT, Sanjay_Varanasi_ZIC_DMT}.


The explicit DMT of the MIMO HD-RC evidently would serve as a theoretical benchmark against which the performances of the various cooperative protocols proposed and analyzed in the literature can be compared. Further, cooperative protocols which are suboptimal but cost-efficient provide the system designer with an option to trade performance and complexity if their performance loss can be quantified relative to optimal performance. Moreover, the answers of a number of interesting and open questions hinge on the explicit characterization of the DMT of the MIMO HD-RC. For instance, while the DMT of the MIMO FD-RC is an upper bound to that of the MIMO HD-RC, it is not known whether the latter is strictly worse than that of the former. The question is especially intriguing in light of the result by Pawar {\em et} {\em at} in \cite{PAT} where it was shown that the DMT of the single-antenna (or single-input, single-output (SISO)) HD-RC is identical to that of the FD-RC. Comparing with the DMT of the MIMO FD-RC which was found in \cite{YEE}, this question can be resolved if the explicit DMT of the MIMO HD-RC can be characterized. There are also open questions regarding the comparative performances of the static and dynamic MIMO HD-RCs. Although intuitively it seems that the dynamic HD-RC should have a better DMT than the static HD-RC, there is no theoretical proof of this thesis to date. For instance, in the SISO case, there is no difference in the DMTs of the static and dynamic HD-RCs as shown in \cite{PAT} because the DMT of the static QMF protocol coincides with that of the SISO FD-RC. The question here is whether this result continues to hold in the more general static and dynamic MIMO HD-RCs. This question can be answered if the DMT of the (dynamic) MIMO HD-RC were to be found, since in this case, one could simply compare with the DMT of the static MIMO HD-RC of \cite{OCM}.

This paper answers the two questions raised above in the negative. In particular, the results of this paper, examples from which are shown in Fig. \ref{fig_dmt_1k1_channel-a} and Fig. \ref{fig_dmt_1k1_channel-b} depicting the DMTs of the HD-RC with single-antenna source and destinations but with two and four antennas at the relay (the $(1,2,1)$ and $(1,4,1)$ RCs (applicable for example, to $\textrm{CN}_3$ of Fig. \ref{cooperative-networks}(c)), respectively, show that in general neither is the DMT of the static MIMO HD-RC always equal to that of the corresponding dynamic MIMO HD-RC, nor is the DMT of the MIMO FD-RC always identical to that of the corresponding MIMO HD-RC.
\begin{figure}[htp]
  \begin{center}
    \subfigure[$(1,2,1)$ relay channel]{\label{fig_dmt_1k1_channel-a}\includegraphics[scale=0.5]{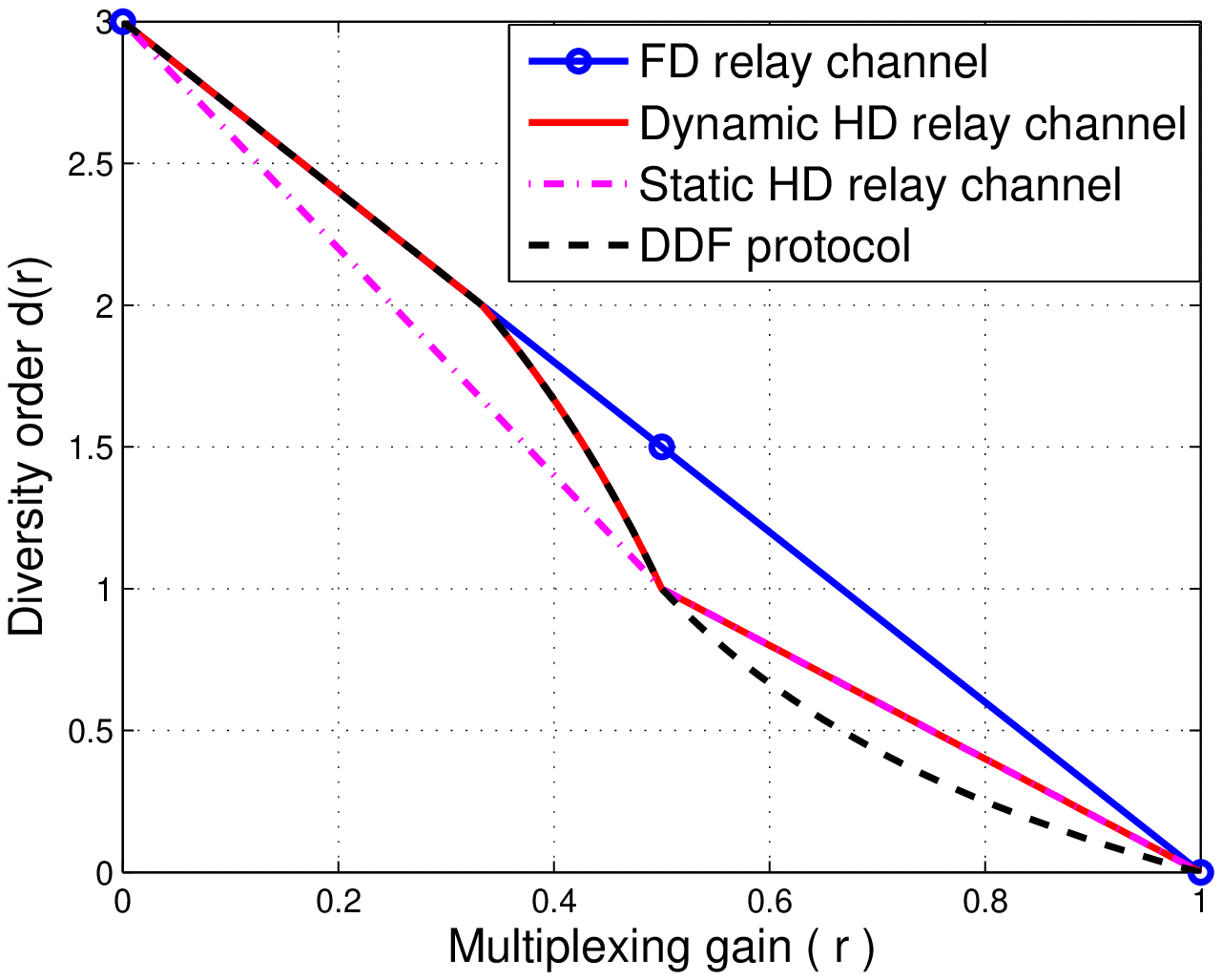}}
    \subfigure[$(1,4,1)$ relay channel]{\label{fig_dmt_1k1_channel-b}\includegraphics[scale=0.5]{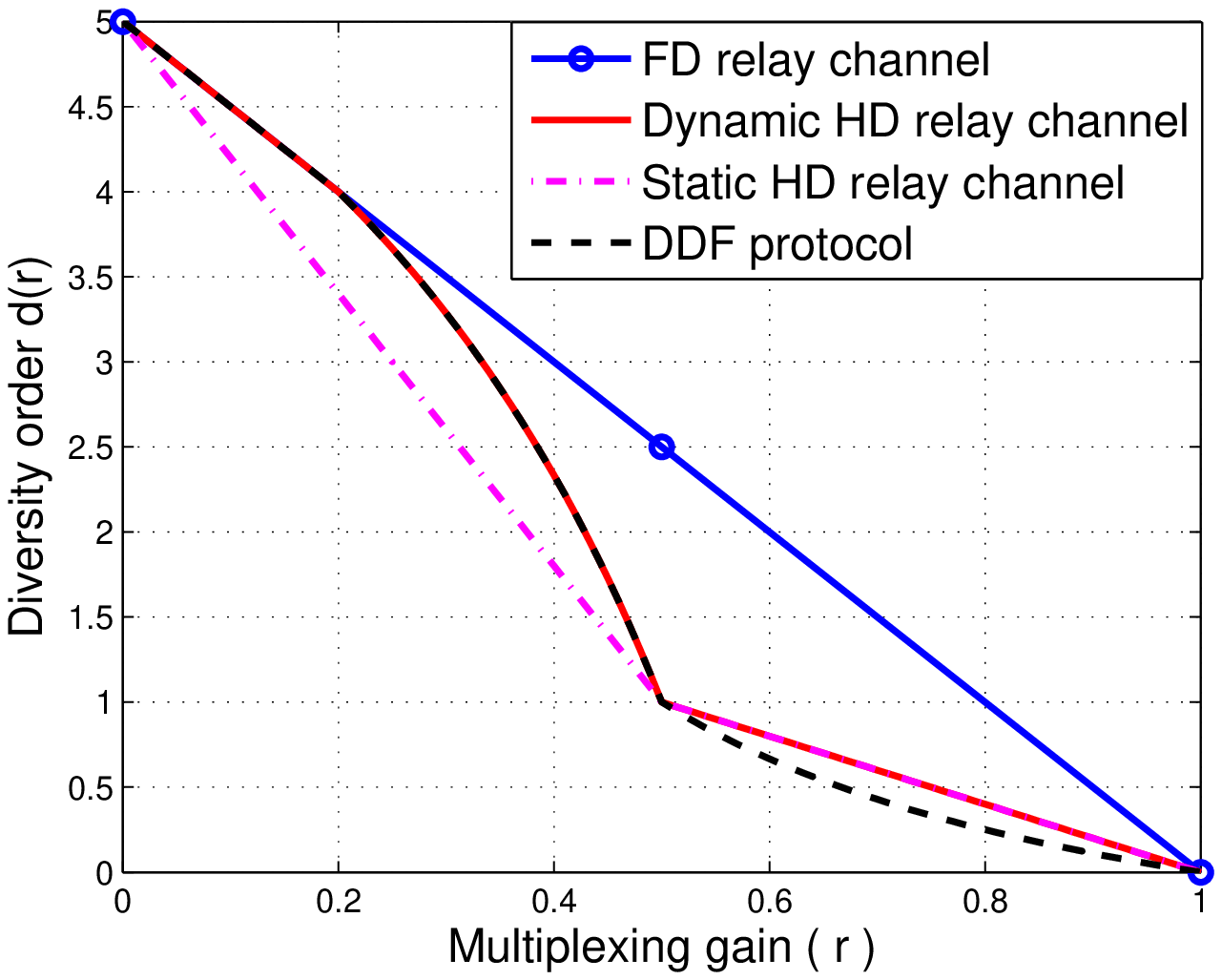}}
  \end{center}
  \caption{Comparison of DDF and SCF protocol with the fundamental DMT of the (1,k,1) relay channel.}
  \label{fig_dmt_1k1_channel}
\end{figure}

Besides resolving the above discussed problems, the explicit DMT computed in this paper provides sharper answers about the MIMO HD-RC. Denoting an RC with $m$, $k$, and $n$ antennas at the source, relay and destination as the $(m,k,n)$ RC, they include, but are not limited to the following:
\begin{itemize}
\item While in general the DMT performance of the MIMO HD-RC is inferior to that of the corresponding MIMO FD-RC, it is found that for two classes of channels, namely (a) the $(m,k,n)$ RCs with $m>n\geq k$ and (b) the $(n,1,n)$ RCs, the DMTs of the HD- and FD-RCs are identical (see Remark~\ref{rem:n1n-dmt-FD-HD-equal}). Therefore, for these classes of RCs, an FD relay does not improve the DMT performance over that of the corresponding HD-RC.

\item It is well known from \cite{YEE} that the fundamental DMT of $(m,k,n)$ FD-RC is given by
    \begin{equation}
    \label{dmt-fd-rc}
        \min \left\{d^{ptp}_{(m+k),n}(r),d^{ptp}_{m,(n+k)}(r) \right\},~0\leq r\leq \min\{m,n\},
    \end{equation}
    where $d^{ptp}_{p,q}(r)$ represents the DMT of the $p\times q$ MIMO point-to-point channel \cite{tse1}. From this it is clear that an additional antenna at the relay node strictly improves the DMT performance of an FD-RC at all multiplexing-gains. However, this is not true for the MIMO HD-RC. When $k$ is large enough, an extra antenna at the relay node does not further improve the DMT of the HD-RC for high multiplexing gains (see Remark~\ref{rem:extra-antenna-at-the-relay}).

\item In general, for a set of high multiplexing gain values, the optimal DMT of the MIMO HD-RC channel can be achieved without CSI at the relay node. As the number of antennas at the relay node increases, the size of this set increases (e.g., see Fig. \ref{fig_dmt_comparison2-b}).

\item Finally, it is proved that the DMT of the $(1,k,1)$ and the $(n,1,n)$ HD-RC can be achieved by the QMF protocol without CSI at the relay node, i.e., neither global CSI nor even the switching time information is necessary at the relay node. To the best of our knowledge, this is the first result regarding the achievability of the DMT of a non-SISO HD-RC without CSI at the relay node.
\end{itemize}

The rest of the paper is organized as follows. In Section~\ref{sec:system_model}, we describe the system model and provide some preliminaries including the asymptotic joint distribution of the eigenvalues of three specially correlated random matrices which will be used later to derive the fundamental DMT of the MIMO HD-RC. In Section~\ref{sec:DMT-HD-RC}, we characterize the fundamental DMT of the MIMO HD-RC  as the solution of a simple optimization problem in three steps: 1) in Subsection~\ref{subsec:upper-bound} we obtain an upper bound on the instantaneous capacity; 2) in Subsection \ref{subsec:lower-bound}, we obtain a lower bound on the instantaneous capacity as the achievable rate of the dynamic QMF protocol, which is within a constant gap from the upper bound, and finally, in Subsection \ref{subsec:optimization-problem}, we characterize the DMT as a solution to an optimization problem, which we subsequently simplify to a 2-variable optimization problem. In Section~\ref{sec:closed-form-expressions}, we provide closed-form expressions for the DMT of different classes of MIMO HD-RCs including the class of symmetric $(n,k,n)$ RCs and then prove in Section~\ref{sec:only-CSIR} that the DMT of $(1,k,1)$ RC and $(n,1,n)$ HD-RC can be achieved without CSI at the relay node. Section \ref{sec:conclusion} concludes the paper.

\begin{proof}[{\bf Notations}]
$(x)^+$, $x\land y$, $|\mathcal{S}|$, $\det(X)$ and $(X)^{\dagger}$ represent the
number $\max\{x,0\}$, the minimum of $x$ and $y$, the size of the set $\mathcal{S}$, the determinant and the conjugate transpose of the matrix, $X$, respectively. The symbol $\textrm{diag}(.)$ represents a square diagonal matrix of corresponding size with the elements in its argument on the diagonal. $I_n$ represents an $n\times n$ identity matrix. We denote the field of real and complex numbers by $\mathbb{R}$ and $\mathbb{C}$, respectively. The set of real numbers between $r_1\in \mathbb{R}$ and $r_2 ( \, \geq r_1 ) \in \mathbb{R}$ will be denoted by $[r_1, r_2]$. The set of all $n\times m$ matrices with complex entries is denoted as $\mathbb{C}^{n\times m}$. The distribution of a complex Gaussian random vector with zero mean and covariance matrix $\Sigma$ is denoted as $\mathcal{CN}(0,\Sigma)$. The trace of a square matrix $A$, is denoted as $\textrm{Tr}(A)$. $A\succeq B$ (or $A\succ B$) would mean that $(A-B)$ is a positive-semidefinite (psd) matrix (or positive-definite (pd) matrix), respectively. $\Pr(\mathcal{E})$ represents the probability of the event $\mathcal{E}$. All the logarithms in this text are to the base $2$. Finally, any two functions $f(\rho)$ and $g(\rho)$ of $\rho$, where $\rho$ is the signal-to-noise ratio (SNR) defined later, are said to be exponentially equal and denoted as $f(\rho)\dot{=}g(\rho)$ if,
\begin{equation}
\label{eq:exp-equality}
\lim_{\rho \to \infty} ~\frac{\log(f(\rho))}{\log(\rho)} = \lim_{\rho \to \infty} ~\frac{\log(g(\rho))}{\log(\rho)},
\end{equation}
$\dot{\leq}$ and $\dot{\geq }$ signs are defined similarly.
\end{proof}

\begin{figure}[htb]
\centering
\includegraphics[width=8cm]{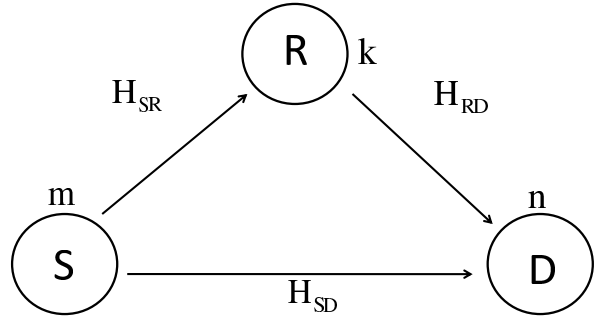}
\caption{System model of the MIMO 3-node relay channel.}
\label{fig_channel_model}
\end{figure}

\section{System model and preliminaries}
\label{sec:system_model}

We consider a quasi-static, Rayleigh faded MIMO HD-RC, with a single relay node as shown in Fig. \ref{fig_channel_model}. It is assumed that the source, destination and relay nodes have $m$, $n$ and $k$ antennas, respectively. Let $H_{SR}\in\mathbb{C}^{k\times m}$, $H_{SD}\in\mathbb{C}^{n\times m}$ and $H_{RD}\in\mathbb{C}^{n\times k}$ model the fading channel matrices between the source and relay, source and destination and relay and destination nodes, respectively. For economy of notation, the set of these channel matrices is denoted as $\mathcal{H}$. It is assumed that these matrices are mutually independent and their elements are independent and identically distributed (i.i.d.) as $\mathcal{CN}(0,1)$.

The channel matrices remain constant within a block of $L_b$ channel uses, where $L_b$ is the block length of the source codeword. Suppose that during the first $t_d L_b$ symbol times the relay node only listens to the source transmission and during the remaining $(1-t_d)L_b$ symbol times it transmits its own codeword $X_r~\in~\mathbb{C}^{k\times (1-t_d)L_b}$, where $t_d\in (0,1)$. In what follows, the listening phase and the transmitting phases of the relay node will be denoted by $p_1$ and $p_2$, respectively, and the fraction $t_d$ is called the {\em switching time}. Since the relay node operates dynamically, this switching time should be chosen to maximize the end-to-end instantaneous mutual information and can thus depend on all of $\mathcal{H}$.

We assume that the destination and relay nodes have global CSI $\mathcal{H}$, but the source node does not have any CSI. The relay node in this channel model is more capable than that on a relay channel with only receive CSI (CSIR) at all the nodes.\footnote{Note that the DMT of the relay channel with only CSIR at all the nodes can not be better than the DMT of the relay channel considered in this paper, since the relay node can choose to not use any CSI except $H_{SR}$. In fact, intuitively it seems that the latter may use the additional information ($H_{SD}$ and $H_{RD}$) at the relay for instance to optimize the time to switch from its listening mode to the transmit mode and achieve a better DMT than the former, on which the switching time can only be a function of $H_{SR}$. Interestingly, in this paper we shall prove that depending on the number of antennas at different nodes the DMT of the above two relay channels (with and without global CSI at the relay and destination node) can be identical, cf. Section~\ref{sec:only-CSIR}.} We assume short term power constraints at the source and relay, i.e., these nodes cannot allocate power across different fades of the channel as a function of $\mathcal{H}$, see equation \eqref{eq:power-constraint-at-relay}.

Further, we also assume that the source and relay nodes transmit information at fixed rates; in particular, the relay node does not use the available transmit CSI (CSIT)  to transmit information at a variable, channel dependent rate. Note that an information outage can be avoided on a communication link if CSIT is used to allocate power across different fading blocks (cf. \cite{HanlyTse,PrasadN:OUT-THMS:IT06,Kim_Skoglund}) under the so-called long-term power constraint and/or transmit information at a variable rate as a function of the instantaneous channel realizations. It was shown in \cite{Kim_Skoglund} that the DMT of a point-to-point MIMO channel can be improved by either of these two techniques.

Denoting the signals transmitted by the source at time $t$ in phases one and two as $X_{S_1}[t]$ and $X_{S_2}[t]$, respectively, and the signal transmitted by the relay at time $t$ as $X_R[t]$ (in phase two), the received signals at the destination and relay nodes in phase one are given as
\begin{eqnarray}
Y_{D_1}[t] &=& H_{SD}X_{S_1}[t]+Z_{D_1}[t], \\
Y_R[t]&=& H_{SR}X_{S_1}[t]+Z_R[t],
\end{eqnarray}
and the received signal at the destination node in phase two is given as
\begin{equation*}
Y_{D_2}[t]=H_{SD}X_{S_2}[t]+H_{RD}X_R[t]+Z_{D_2}[t],
\end{equation*}
where $Z_{D_1}[t], Z_{D_2}[t]\in\mathbb{C}^{n\times 1}$ and $Z_R[t]\in\mathbb{C}^{k\times 1}$ represent mutually independent additive noise random vectors at the destination and relay nodes, respectively. All the entries of these random vectors are assumed to be i.i.d. $\mathcal{CN}(0,1)$. The power constraints at the relay and the source nodes are\footnote{Allowing distinct powers at the source and relay nodes of $\rho$ and $c\rho$, respectively, where $c$ is a constant independent of $\rho$, does not alter the diversity-multiplexing tradeoff. We assume $c=1$ for ease of disposition.}
\begin{IEEEeqnarray}{l}
\label{eq:power-constraint-at-source}
\frac{1}{L_b}\left\{\sum_{t=1}^{\lceil t_dL_b \rceil}\textrm{Tr} \left( \mathbb{E}\left(X_{S_1}[t]X_{S_1}[t]^\dagger\right) \right) +\sum_{t=(\lceil t_d L_b\rceil+1)}^{L_b}\textrm{Tr} \left( \mathbb{E}\left(X_{S_2}[t]X_{S_2}[t]^\dagger\right) \right)
\right\} \leq  \rho; \\
\label{eq:power-constraint-at-relay}
\frac{1}{(L_b-\lceil t_dL_b \rceil)}  \sum_{t=(\lceil t_d L_b\rceil+1)}^{L_b}\textrm{Tr} \left( \mathbb{E}\left(X_R[t]X_R[t]^\dagger\right) \right) \leq  \rho.
\end{IEEEeqnarray}

Let $\{\mathcal{C}(\rho)\}$ be a sequence of codebooks, where for each $\rho$ the corresponding codebook $\mathcal{C}(\rho)$ consists of $2^{L_bR(\rho)}$ codewords, each of which is a $m\times L_b$ matrix satisfying equation \eqref{eq:power-constraint-at-source}. The sequence of codebooks is said to have a multiplexing gain of $r$ if
\begin{equation*}
    \lim_{\rho\to \infty}\frac{R(\rho)}{\log(\rho)}=r.
\end{equation*}

Further, suppose for such a coding scheme $\mathcal{C}(\rho)$, $P_e(\mathcal{C}(\rho),\rho)$ represents the average probability of decoding error at the destination node (averaged over the Gaussian noise, channel realizations and the different codewords of the codebook) at an SNR of $\rho$, then the optimal diversity order of the channel at a multiplexing gain $r$ is defined as

\begin{equation}\label{eq:def-diversity-order}
    d^*(r)\triangleq  \lim_{\rho\to \infty}\frac{-\log\left(P_e^*(\rho)\right)}{\log(\rho)},
\end{equation}
where $P_e^*(\rho)$ represents the minimum average probability of error achievable on a relay channel minimized over the collection of all possible coding schemes, $\mathscr{C}(\rho)$, i.e.,
\begin{equation}
\label{eq:pestar}
    P_e^*(\rho)\triangleq \min_{\{\mathcal{C}(\rho)\in \mathscr{C}\}}P_e(\mathcal{C}(\rho),\rho).
\end{equation}

In Subsection~\ref{subsec:optimization-problem}, we shall show that the optimal diversity order at a multiplexing gain of $r$ can be written as
\begin{IEEEeqnarray}{rl}
\label{eq:d-as-function-asymptotic-evs}
d^*(r)=\lim_{\rho \to \infty} -\frac{\log\left(\Pr\left\{r^*(\bar{\alpha},\bar{\beta},\bar{\delta})\leq r\right\}\right)}{\log(\rho)},
\end{IEEEeqnarray}
where $r^*(\bar{\alpha},\bar{\beta},\bar{\delta})$ is given by \eqref{eq:asymptotic-rate-expression} in Subsection \ref{subsec:optimization-problem} and $\bar{\alpha}$, $\bar{\beta}$ and $\bar{\delta}$ are vectors containing the negative SNR exponents of the eigenvalues (see equations \eqref{eq:transform-of-lambda-alpha}-\eqref{eq:transform-of-gamma-delta} in the following section) of $H_{SD}H_{SD}^{\dagger}$, $H_{SR}(I_m+\rho H_{SD}^{\dagger}H_{SD})^{-1}H_{SR}^{\dagger}$ and $ H_{RD}^{\dagger}(I_n+\rho H_{SD}H_{SD}^{\dagger})^{-1}H_{RD}$, respectively. Evidently, to evaluate the DMT, we need the joint probability density function (pdf) of the eigenvalues of these matrices, which we obtain next.

\subsection{Joint eigenvalue distribution of three mutually correlated Wishart matrices}
\label{subsec:eigenvalue_distribution}
Let us denote the matrices $H_{SD}H_{SD}^{\dagger}$, $H_{SR}(I_m+\rho H_{SD}^{\dagger}H_{SD})^{-1}H_{SR}^{\dagger}$ and $H_{RD}^{\dagger}(I_n+\rho H_{SD}H_{SD}^{\dagger})^{-1}H_{RD}$ as $W_1$, $W_2$ and $W_3$, respectively. It is evident from their structure that these matrices are not mutually independent. In general, finding the joint pdf of 2 or more mutually correlated random matrices is a difficult problem in the theory of random matrices. However, in this section we show that by exploiting the specific structure of these matrices and the distribution of the constituent matrices, we can compute a closed form expression for the joint pdf of their eigenvalues.

Let $0 < \lambda_u \leq \cdots \leq \lambda_1$, $0 < \mu_p \leq \cdots \leq \mu_1$ and $0 < \gamma_q \leq \cdots \leq \gamma_1$ be the ordered non-zero eigenvalues with probability 1 (w.p. 1) of $W_1$, $W_2$ and $W_3$, respectively. Define $\bar{\lambda}=[\lambda_1, \cdots \lambda_u]$, $\bar{\mu}=[\mu_1, \cdots \mu_p]$ and $\bar{\gamma}=[\gamma_1, \cdots \gamma_q]$ with $u=\min\{m,n\}$, $p=\min\{m,k\}$ and $q=\min\{n,k\}$. It is convenient to denote the joint pdf of the three sets of eigenvalues as $F_{W_1W_2W_3}(\bar{\lambda},\bar{\mu},\bar{\gamma}) $ and similarly their marginal and conditional pdfs, i.e., the marginal pdf of $\bar{\lambda}$ is denoted as $F_{W_1}(\bar{\lambda})$, the conditional pdf of $\bar{\mu}$ conditioned on $\bar{\lambda}$ is denoted as $ F_{W_2|W_1}(\bar{\mu}|\bar{\lambda}) $, etc. Consider the following lemma which provides the first step towards simplifying the problem at hand.
\begin{lemma}
\label{lem_conditional_independence}
The eigenvalues of $W_2$ are independent of the eigenvalues of $W_3$ given the eigenvalues of $W_1$, i.e.,
\begin{eqnarray}
F_{W_2W_3|W_1}(\bar{\mu},\bar{\gamma}|\bar{\lambda})=
F_{W_2|W_1}(\bar{\mu}|\bar{\lambda})F_{W_3|W_1}(\bar{\gamma}|\bar{\lambda}).
\end{eqnarray}
\end{lemma}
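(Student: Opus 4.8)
The key structural observation is that $W_2$ depends on the channel matrices only through $H_{SR}$ and $H_{SD}$, while $W_3$ depends on them only through $H_{RD}$ and $H_{SD}$. Since $H_{SD}$ is the only matrix common to both, and $W_1=H_{SD}H_{SD}^\dagger$ is a function of $H_{SD}$ alone, the plan is to show that once we condition on $H_{SD}$ (not merely on the eigenvalues $\bar\lambda$ of $W_1$), the eigenvalues of $W_2$ and $W_3$ become independent, and then to argue that the dependence on $H_{SD}$ enters only through $\bar\lambda$. The whole argument hinges on the mutual independence of $H_{SR}$, $H_{SD}$ and $H_{RD}$ stated in the system model, together with the bi-unitary invariance of the i.i.d.\ $\mathcal{CN}(0,1)$ (Gaussian) distribution.

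First I would condition on the full matrix $H_{SD}$. Given $H_{SD}$, the matrix $I_m+\rho H_{SD}^\dagger H_{SD}$ is a fixed positive-definite matrix, so $W_2=H_{SR}(I_m+\rho H_{SD}^\dagger H_{SD})^{-1}H_{SR}^\dagger$ is a deterministic function of the random matrix $H_{SR}$; likewise $W_3=H_{RD}^\dagger(I_n+\rho H_{SD}H_{SD}^\dagger)^{-1}H_{RD}$ is a deterministic function of $H_{RD}$. Because $H_{SR}$ and $H_{RD}$ are independent of each other and of $H_{SD}$, it follows that, conditioned on $H_{SD}$, the matrices $W_2$ and $W_3$ are conditionally independent, and hence so are their eigenvalue sets $\bar\mu$ and $\bar\gamma$. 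This gives
\begin{equation*}
F_{W_2W_3|H_{SD}}(\bar\mu,\bar\gamma\,|\,H_{SD})=F_{W_2|H_{SD}}(\bar\mu\,|\,H_{SD})\,F_{W_3|H_{SD}}(\bar\gamma\,|\,H_{SD}).
\end{equation*}

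The main obstacle, and the step I expect to require the most care, is to reduce the conditioning from the full matrix $H_{SD}$ to just the eigenvalues $\bar\lambda$ of $W_1$. For this I would write the singular value decomposition $H_{SD}=U\Sigma V^\dagger$, so that $I_m+\rho H_{SD}^\dagger H_{SD}=V(I_m+\rho\Sigma^\dagger\Sigma)V^\dagger$ and $I_n+\rho H_{SD}H_{SD}^\dagger=U(I_n+\rho\Sigma\Sigma^\dagger)U^\dagger$, where the diagonal entries of $\Sigma\Sigma^\dagger$ are exactly the eigenvalues $\bar\lambda$. Substituting into $W_2$ and absorbing $V^\dagger$ into $H_{SR}$ (i.e.\ setting $\tilde H_{SR}=H_{SR}V$), the eigenvalues of $W_2$ depend on $H_{SD}$ only through $\bar\lambda$, because $\tilde H_{SR}$ has the same i.i.d.\ $\mathcal{CN}(0,1)$ distribution as $H_{SR}$ by the unitary invariance of the Gaussian law and is independent of $V$. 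The same argument with $\tilde H_{RD}=U^\dagger H_{RD}$ shows the conditional law of $\bar\gamma$ depends on $H_{SD}$ only through $\bar\lambda$.

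Finally I would combine the two reductions: since the conditional distributions of $\bar\mu$ and of $\bar\gamma$ given $H_{SD}$ actually depend only on $\bar\lambda$, we have $F_{W_2|H_{SD}}(\bar\mu\,|\,H_{SD})=F_{W_2|W_1}(\bar\mu\,|\,\bar\lambda)$ and $F_{W_3|H_{SD}}(\bar\gamma\,|\,H_{SD})=F_{W_3|W_1}(\bar\gamma\,|\,\bar\lambda)$. Integrating the conditional independence identity over the conditional distribution of $H_{SD}$ given $\bar\lambda$ (equivalently, using the tower property of conditional expectation) then yields the desired factorization
\begin{equation*}
F_{W_2W_3|W_1}(\bar\mu,\bar\gamma\,|\,\bar\lambda)=F_{W_2|W_1}(\bar\mu\,|\,\bar\lambda)\,F_{W_3|W_1}(\bar\gamma\,|\,\bar\lambda),
\end{equation*}
which completes the proof. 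The delicate point throughout is justifying the unitary-invariance absorption so that the residual randomness feeding $\bar\mu$ and $\bar\gamma$ is independent of $V$ and $U$ respectively, and depends on $H_{SD}$ only through the eigenvalues collected in $\bar\lambda$.
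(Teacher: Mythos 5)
Your proposal is correct, and it reaches the conclusion by a route that differs from the paper's in the step that actually delivers independence. The paper also starts from the SVD $H_{SD}=U\Lambda_0 V^\dagger$ and absorbs the unitaries into $H_{SR}$ and $H_{RD}$ via unitary invariance, but it then conditions directly on the eigenvalues, forms $A=\widetilde{H}_{SR}(I_m+\Lambda_2)^{-1/2}$ and $B=\widetilde{H}_{RD}^{\dagger}(I_n+\Lambda)^{-1/2}$, computes $\mathbb{E}(B^{\dagger}A\mid \Lambda_2,\Lambda)=0$, and invokes the fact that jointly Gaussian, uncorrelated matrices are independent to conclude that $W_2=AA^{\dagger}$ and $W_3=BB^{\dagger}$ are conditionally independent given $\Lambda$. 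You instead condition on the full matrix $H_{SD}$, where conditional independence of $W_2$ and $W_3$ is immediate from the mutual independence of $H_{SR}$, $H_{RD}$ and $H_{SD}$ --- no Gaussian-uncorrelated-implies-independent argument needed --- and then use unitary invariance only to show that each conditional eigenvalue law depends on $H_{SD}$ solely through $\bar{\lambda}$, finishing with the tower property. Your first step is more elementary and would survive with non-Gaussian (but still mutually independent) channel matrices; Gaussianity (or at least bi-unitary invariance) is needed only in your reduction from conditioning on $H_{SD}$ to conditioning on $\bar{\lambda}$, whereas the paper's correlation computation uses Gaussianity essentially. The paper's version, on the other hand, yields the slightly stronger by-product that the whitened matrices $A$ and $B$ themselves are conditionally independent given $\Lambda$. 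Both arguments are sound; the tower-property bookkeeping you flag as the delicate point is handled correctly since $\bar{\lambda}$ is a measurable function of $H_{SD}$ and each inner conditional factor is constant on the level sets of that function.
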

\begin{proof}[Proof]
The proof is provided in Appendix~\ref{pf_lem_conditional_independence}.
\end{proof}
Using the above lemma, the joint pdf of the eigenvalues of the three matrices can be expressed as
\begin{eqnarray}
\label{eq:pdf-first-product}
F_{W_1W_2W_3}(\bar{\lambda},\bar{\mu},\bar{\gamma}) = F_{W_2|W_1}(\bar{\mu}|\bar{\lambda})F_{W_3|W_1}(\bar{\gamma}|\bar{\lambda})F_{W_1}(\bar{\lambda}).
\label{eq:pdf-second-product}
\end{eqnarray}
The joint pdf of the eigenvalues of $W_1$, which is a central Wishart matrix, can be found for example in \cite{CZZ} whereas the conditional pdfs $F_{W_2|W_1}(\bar{\mu}|\bar{\lambda})$ and $F_{W_3|W_1}(\bar{\gamma}|\bar{\lambda})$ involve complicated functions such as determinants whose components are hypergeometric functions of the eigenvalues (e.g., see the proof of Theorem 1 in \cite{SKMV_DDF_journal}). However, since we are interested only in a high SNR analysis, it is sufficient to obtain $F_{W_2|W_1}(\bar{\mu}|\bar{\lambda})$ and $F_{W_3|W_1}(\bar{\gamma}|\bar{\lambda})$ exactly just up to their SNR exponents, i.e., approximate expressions which have the same SNR exponents as the exact joint pdf. For this purpose, we use the following theorem from \cite{SKMV_DDF_journal}. 

\begin{thm}[Theorem~$1$ in \cite{SKMV_DDF_journal}]
\label{thm_eigenvalue_distribution_1}
Let $H_1 \in \mathbb{C}^{N_2\times N_1}$ and $H_2 \in \mathbb{C}^{N_2\times N_3}$ be two mutually independent random matrices with independent, identically distributed (i.i.d.) $\mathcal{CN}(0,1)$ entries. Suppose that $\xi_1 \geq \xi_2\geq \cdots \xi_v > 0$ and $\lambda_1 \geq \lambda_2 \geq \cdots \lambda_u > 0$ are the ordered non-zero eigenvalues w.p. 1 of $V_1\triangleq H_1^{\dagger}(I_{N_2}+\rho H_2H_2^{\dagger})^{-1}H_1$ and $V_2\triangleq H_2H_2^{\dagger}$, respectively, with $u=\min\{N_2, N_3\}$ and $v=\min\{N_1, N_2\}$, and where all the eigenvalues are assumed to vary exponentially with SNR in the sense of equation \eqref{eq:transform-of-lambda-alpha}.\footnote{This assumption greatly simplifies an otherwise very complicated expression of the pdf. Further, in the context of the problem being analyzed in this paper, the usefulness of this assumption will be evident shortly.} Then, the conditional asymptotic pdf of the eigenvalues $\bar{\xi}$ given $\bar{\lambda}$ is given as
\begin{eqnarray}
\mathbf{f_1}(\bar{\xi}|\bar{\lambda}) \, \dot{=} \, \prod_{j=1}^{v}(\xi_j^{(N_1+N_2-2j)}e^{-\xi_j}) \prod_{\substack{(n_1=1,n_2=1)\\
((n_1+n_2)= (N_2+1))}}^{(u,v)} \left({e}^{-\rho \xi_{n_2} \lambda_{n_1}}\right) 
\prod_{i=1}^{u}(1+\rho \lambda_i)^{N_1}  \prod_{j=1}^{v} \prod_{i=1}^{(N_2-j)\land N_3} \left(\frac{1-{e}^{-\rho \xi_j \lambda_i}}{\rho \xi_j \lambda_i}\right). \nonumber
\end{eqnarray}
\end{thm}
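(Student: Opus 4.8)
The plan is to first reduce the problem to a single correlated Wishart matrix by conditioning on $H_2$ and invoking the unitary invariance of $H_1$. Conditioned on $H_2$, the Hermitian matrix $\Sigma \triangleq (I_{N_2}+\rho H_2H_2^\dagger)^{-1}$ is fixed, and its eigenvalues are $(1+\rho\lambda_i)^{-1}$ for $1\le i\le u$ together with $1$ of multiplicity $(N_2-u)$. Writing $\Sigma=UDU^\dagger$ with $D$ diagonal and setting $\tilde H_1=U^\dagger H_1$, the rotational invariance of the i.i.d. $\mathcal{CN}(0,1)$ law gives $V_1=H_1^\dagger\Sigma H_1\stackrel{d}{=}\tilde H_1^\dagger D\tilde H_1$ with $\tilde H_1$ again i.i.d. Gaussian. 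This shows at once that the conditional law of $\bar\xi$ depends on $H_2$ only through the eigenvalues $\bar\lambda$ (which legitimizes the notation $\mathbf{f_1}(\bar\xi\mid\bar\lambda)$), and that the task reduces to finding the joint eigenvalue density of the semi-correlated Wishart matrix $\tilde H_1^\dagger D\tilde H_1$ with diagonal correlation $D=D(\bar\lambda)$.

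Next I would write down the exact joint eigenvalue density of $\tilde H_1^\dagger D\tilde H_1$. Integrating over the eigenvectors yields a Harish-Chandra--Itzykson--Zuber integral, so the density takes the standard ratio-of-determinants form, with a normalizing factor $\det(D)^{-N_1}=\prod_{i=1}^{u}(1+\rho\lambda_i)^{N_1}$, a Vandermonde $\Delta(\bar\xi)$, and a determinant whose entries are incomplete-gamma-type functions of the products $\rho\xi_j\lambda_i$. One technical wrinkle is that $D$ carries the repeated eigenvalue $1$ of multiplicity $(N_2-u)$, so the generic HCIZ formula is degenerate and must be recovered through a confluent (l'Hospital) limit in the coincident eigenvalues. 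With the exact density in hand, the normalizing factor $\det(D)^{-N_1}$ already accounts for $\prod_{i=1}^{u}(1+\rho\lambda_i)^{N_1}$.

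The crux is then the high-SNR asymptotics. Substituting the exponential scalings $\xi_j\dot{=}\rho^{-a_j}$ and $\lambda_i\dot{=}\rho^{-b_i}$, I expect two things to happen. First, the uncorrelated part of the ensemble collapses: under the ordering $\xi_1\ge\cdots\ge\xi_v$ the Vandermonde $\Delta(\bar\xi)^2$ reduces to $\prod_j\xi_j^{2(v-j)}$, which combines with the $\prod_j\xi_j^{|N_1-N_2|}e^{-\xi_j}$ factor to give exactly the first factor $\prod_j\xi_j^{(N_1+N_2-2j)}e^{-\xi_j}$. Second, applying Laplace's principle to the remaining determinant, a single permutation should dominate the determinantal sum to exponential order; identifying this dominant permutation and showing all others are exponentially negligible is the heart of the proof. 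The dominant permutation pairs the $n_2$-th largest $\xi$ with the correlation eigenvalue indexed by $n_1=N_2+1-n_2$, which is precisely the origin of the coupling factors $\prod e^{-\rho\xi_{n_2}\lambda_{n_1}}$ under the constraint $n_1+n_2=N_2+1$, while the incomplete-gamma structure of the entries collapses to the products $\prod_j\prod_i\frac{1-e^{-\rho\xi_j\lambda_i}}{\rho\xi_j\lambda_i}$. Collecting the dominant exponents of all factors then yields the claimed $\dot{=}$ expression.

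I anticipate the main obstacle to be this Laplace analysis of the determinant: one must prove that the non-dominant permutations contribute strictly smaller SNR exponents, uniformly over all admissible orderings of the scaling exponents $\bar a$ and $\bar b$, and handle the interaction between the degenerate (confluent-limit) block corresponding to the eigenvalue $1$ and the genuinely correlated block. Because the dominant permutation, and hence the set of active coupling factors, can change across different regions of $(\bar a,\bar b)$, the exponent must be computed region by region and shown to reduce to the same product form in every case.
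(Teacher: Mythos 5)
First, a point of reference: the paper you are working from does not prove this statement at all. It is imported verbatim as ``Theorem~1 in \cite{SKMV_DDF_journal}'', and the only hint the authors give about its proof is the remark that the exact conditional pdf involves ``determinants whose components are hypergeometric functions of the eigenvalues.'' So there is no in-paper proof to compare against; I can only judge your plan on its merits and against that remark. On that basis, your strategy is the natural one and is consistent with what the cited source evidently does: condition on $H_2$, use unitary invariance of $H_1$ to replace $(I_{N_2}+\rho H_2H_2^\dagger)^{-1}$ by its diagonal of eigenvalues $D$ (which correctly justifies conditioning only on $\bar\lambda$, and correctly produces the factor $\det(D)^{-N_1}=\prod_i(1+\rho\lambda_i)^{N_1}$), write the semi-correlated Wishart eigenvalue density as a Vandermonde times an HCIZ-type determinant with a confluent limit for the repeated unit eigenvalue, and then extract SNR exponents by a Laplace/dominant-term analysis. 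Your sanity checks are right: $2(v-j)+|N_1-N_2|=N_1+N_2-2j$, and the anti-diagonal pairing $n_1+n_2=N_2+1$ is indeed the only plausible origin of the coupling factors.

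That said, as a proof the proposal stops exactly where the real work begins, and there is one imprecision sitting at that spot. In the semi-correlated density only \emph{one} Vandermonde $\Delta(\bar\xi)$ appears explicitly; the other is absorbed into the ratio $\det[\,\cdot\,]/\Delta(1/\bar d)$, so the second factor of $\prod_j\xi_j^{\,v-j}$ that you need to reach the exponent $N_1+N_2-2j$ must be extracted \emph{from the determinant itself}, jointly with the coupling terms $e^{-\rho\xi_{n_2}\lambda_{n_1}}$ and the incomplete-gamma products $\frac{1-e^{-\rho\xi_j\lambda_i}}{\rho\xi_j\lambda_i}$ --- you cannot simply say ``$\Delta(\bar\xi)^2$ reduces to $\prod_j\xi_j^{2(v-j)}$'' as if the uncorrelated and correlated parts factor a priori. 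Moreover, the assertion that ``a single permutation dominates to exponential order, uniformly over all admissible orderings of $\bar a$ and $\bar b$'' is precisely the claim that needs proof, and it interacts nontrivially with the confluent block (the $N_2-u$ unit eigenvalues of $D$) and with the rank-deficient case $N_1>N_2$, where the eigenvalue density of $V_1$ must first be restricted to its $v$ nonzero eigenvalues. You identify these obstacles honestly, but none of them is resolved, so what you have is a credible and correctly oriented outline rather than a proof; to complete it you would need to carry out the determinant expansion and verify, region by region in $(\bar a,\bar b)$, that the surviving exponent always collapses to the stated product form.
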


Note that the above theorem gives the conditional pdf of the joint eigenvalues of $V_1$ given the eigenvalues of $V_2$ exactly up to its exponential order. This asymptotic distribution is simpler to obtain than its exact counterpart and is also sufficient for the DMT analysis. It can be easily verified \cite{Tulino_Verdu} that the first product term corresponds to the joint distribution of the eigenvalues of $H_1^\dagger H_1$. The additional three product terms appear because $V_1$ is a Wishart matrix with a non-identity covariance matrix. To see this, note that $V_1$ converges to $H_1^\dagger H_1$ if each of the eigenvalues of $\rho H_2H_2^\dagger$ tends to zero. Indeed, putting $\rho\lambda_i \to 0,~\forall i$ in the above expression, it is easily shown that the last three terms converge to $1$ giving the joint distribution of $H_1^\dagger H_1$.

Clearly, Theorem \ref{thm_eigenvalue_distribution_1} can be used to derive the asymptotic conditional joint pdf of the eigenvalues of $W_3$ given the eigenvalues of $W_1$. Consequently, we have
\begin{IEEEeqnarray}{ll}
F_{W_3|W_1}(\bar{\gamma}|\bar{\lambda})\dot{=}\prod_{l=1}^{q}({\gamma}_l^{(k+n-2l)}e^{-\gamma_l}) & \prod_{\substack{(i=1,l=1)\\ ((i+l)= (n+1))}}^{(u,q)} \left({e}^{-\rho \gamma_l \lambda_i}\right) 
\prod_{i=1}^{u}(1+\rho \lambda_i)^{k}  \prod_{l=1}^{q} \prod_{i=1}^{(n-l)\land m} \left(\frac{1-{e}^{-\rho \gamma_l \lambda_i}}{\rho \gamma_l \lambda_i}\right).
\label{eq:pdf-ev-F3|1}
\end{IEEEeqnarray}

Next, since for each realization of $H_{SD}$ the eigenvalues of $H_{SD}H_{SD}^\dagger$ and $H_{SD}^\dagger H_{SD}$ are the same, the conditional joint pdf of the eigenvalues of $W_2$ given the eigenvalues of $W_1$ can also be derived from Theorem \ref{thm_eigenvalue_distribution_1} and is hence given as
\begin{IEEEeqnarray}{ll}
F_{W_2|W_1}(\bar{\mu}|\bar{\lambda})\dot{=}\prod_{j=1}^{p}(\mu_j^{(k+m-2j)}e^{-\mu_j}) &\prod_{\substack{(i=1,j=1)\\ ((i+j)= (m+1))}}^{(u,p)} \left({e}^{-\rho \mu_j \lambda_i}\right) 
\prod_{i=1}^{u}(1+\rho \lambda_i)^{k}  \prod_{j=1}^{p} \prod_{i=1}^{(m-j)\land n} \left(\frac{1-{e}^{-\rho \mu_j \lambda_i}}{\rho \mu_j \lambda_i}\right).
\label{eq:pdf-ev-F2|1}
\end{IEEEeqnarray}
Substituting equations \eqref{eq:pdf-ev-F3|1} and \eqref{eq:pdf-ev-F2|1} in \eqref{eq:pdf-second-product} and importing the expression for $F_{W_1}(\bar{\lambda})$ from \cite{CZZ} the joint pdf of $(\bar{\lambda},\bar{\mu},\bar{\gamma})$ up to its exponential order can be obtained.

Recall next that for the DMT analysis we need the joint pdfs of the negative SNR exponents of these eigenvalues, i.e., those of the transformed variables $(\bar{\alpha},\bar{\beta},\bar{\delta})$ defined via
\begin{IEEEeqnarray}{rl}
\label{eq:transform-of-lambda-alpha}
\lambda_i=\rho^{-\alpha_i},~1\leq i\leq u;\\
\label{eq:transform-of-mu-beta}
\mu_j=\rho^{-\beta_j},~1\leq j\leq p;\\
\label{eq:transform-of-gamma-delta}
\gamma_l=\rho^{-\delta_l},~1\leq l\leq q.
\end{IEEEeqnarray}
Using the above change of variables in equation \eqref{eq:pdf-second-product} we get the joint pdf of $(\bar{\alpha},\bar{\beta},\bar{\delta})$ (where each vector in this triple is simply the vector of the corresponding random variables), denoted as $f_{W_1W_2W_3}(\bar{\alpha},\bar{\beta},\bar{\delta})$, given as
\begin{eqnarray}
\label{eq:pdf-third-product}
f_{W_1W_2W_3}(\bar{\alpha},\bar{\beta},\bar{\delta})=
\left(\Pi_{j=1}^p|J(\mu_j)|\right)g_{W_2|W_1} (\bar{\beta}|\bar{\alpha})\left(\Pi_{l=1}^q|J(\gamma_l)|\right)g_{W_3|W_1} (\bar{\delta}|\bar{\alpha}) \left(\Pi_{i=1}^u |J(\lambda_i)|\right) g_{W_1} (\bar{\alpha}),
\end{eqnarray}
where $ g_{W_2|W_1}(\bar{\beta}|\bar{\alpha})$, $g_{W_3|W_1}(\bar{\delta}|\bar{\alpha})$ and $g_{W_1}(\bar{\alpha})$ are obtained by replacing the three sets of arguments in $(\bar{\lambda},\bar{\mu},\bar{\gamma})$ using the transformations \eqref{eq:transform-of-lambda-alpha}-\eqref{eq:transform-of-gamma-delta} in $ F_{W_2|W_1}(\bar{\mu}|\bar{\lambda}) $, $F_{W_3|W_1}(\bar{\gamma}|\bar{\lambda})$ and $F_{W_1}(\bar{\lambda})$, respectively. The quantities $J(\lambda_i) = - {\rho}^{-\alpha_i} \ln \rho $, $J(\mu_j) = - \rho^{-\beta_j} \ln \rho $ and $J(\gamma_l) = - \rho^{-\delta_l} \ln \rho $ represent the Jacobians of the transformations in equations \eqref{eq:transform-of-lambda-alpha}-\eqref{eq:transform-of-gamma-delta}, respectively.

We next evaluate the three sets of products of Jacobians and the associated $g$ functions in the overall product expression in the right hand side of equation \eqref{eq:pdf-third-product} up to exponential order. We begin with $ \left(\Pi_{j=1}^p|J(\mu_j)|\right)g_{W_2|W_1} (\bar{\beta}|\bar{\alpha}) $ first.
Using the transformations \eqref{eq:transform-of-lambda-alpha} and  \eqref{eq:transform-of-mu-beta} in equation \eqref{eq:pdf-ev-F2|1} we get
\begin{IEEEeqnarray}{rl}
g_{W_2|W_1}(\bar{\beta}|\bar{\alpha}) \, \dot{=} \, \prod_{j=1}^{p}\left(\rho^{-(k+m-2j)\beta_j}e^{-\rho^{-\beta_j}}\right) &\prod_{\substack{(i=1,j=1)\\ (s. t. \; i+j= m+1)}}^{(u,p)} \left(e^{-\rho^{(1- \beta_j -\alpha_i)}}\right)
\prod_{i=1}^{u}(1+\rho^{-\alpha_i})^{k}  \nonumber \\ &\prod_{j=1}^{p} \prod_{i=1}^{(m-j)\land n} \left(\frac{1-{e}^{-\rho^{(1- \beta_j -\alpha_i)}} }{\rho^{(1- \beta_j -\alpha_i)}}\right).
\label{eq_temp1a}
\end{IEEEeqnarray}
For asymptotic SNR ($\rho \to \infty$) we have
\begin{IEEEeqnarray}{l}
\lim_{\rho \to \infty} e^{-\rho^{-\beta_j}}= 0 , ~\textrm{if} ~\beta_j<0 ~\textrm{for any}~ 1\leq j\leq p;\\
\lim_{\rho \to \infty} e^{-\rho^{(1- \beta_j -\alpha_i)}}=0, ~\textrm{if} ~(\alpha_i+\beta_j)<1 ~\textrm{for any}~ (i+j)\geq (m+1);\\
\label{eq_asymptotic_last_term}
\lim_{\rho \to \infty} \left(\frac{1-{e}^{-\rho^{(1- \beta_j -\alpha_i)}} }{\rho^{(1- \beta_j -\alpha_i)}}\right)=\left\{\begin{array}{cc}
\rho^{-(1- \beta_j -\alpha_i)}& ~\textrm{if} ~(\beta_j+\alpha_i)\leq 1;\\
1& \textrm{otherwise};
\end{array}\right.~\left[\because \lim_{x\to 0} \frac{1-e^{-x}}{x}=1\right].
\end{IEEEeqnarray}

Substituting the above asymptotic approximations and the fact that the limiting value of a product of convergent sequences is equal to the product of the individual limiting values in equation \eqref{eq_temp1a}, we get
\begin{equation}
\label{eq:pdf-snr-exp-F2|1}
\left(\Pi_{j=1}^p|J(\mu_j)|\right)g_{W_2|W_1}(\bar{\beta}|\bar{\alpha}) \, \dot{=} \,
\left\{\begin{array}{cc}
{\rho}^{-E_2(\bar{\alpha},\bar{\beta})}, &\textrm{if}~(\bar{\alpha},\bar{\beta})\in \mathcal{S}_2 ;\\
0, &\textrm{otherwise},
\end{array}\right.
\end{equation}
where $\mathcal{S}_2=\{(\bar{\alpha},\bar{\beta}): 0\leq \alpha_1\leq \cdots \leq \alpha_u; 0\leq \beta_1\leq \cdots \leq \beta_p; (\beta_j+\alpha_i)\geq 1,~\forall (i+j)\geq (m+1)\} $ and
\begin{IEEEeqnarray}{rl}
E_2(\bar{\alpha},\bar{\beta})=\left[\sum_{j=1}^{p}(m+k-2j+1)\beta_j-k\sum_{i=1}^{u}(1-\alpha_i)^+ +\sum_{\substack{i,j=1\\j+i\leq m}}^{u,p}(1-\alpha_i-\beta_j)^+\right].
\end{IEEEeqnarray}

Similarly, it can be shown that
\begin{equation}
\label{eq:pdf-snr-exp-F3|1}
\left(\Pi_{l=1}^q|J(\gamma_l)|\right)g_{W_3|W_1}(\bar{\delta}|\bar{\alpha})\dot{=}
\left\{\begin{array}{cc}
{\rho}^{-E_3(\bar{\alpha},\bar{\delta})}, &\textrm{if}~(\bar{\alpha},\bar{\delta})\in \mathcal{S}_3 ;\\
0, &\textrm{otherwise},
\end{array}\right.
\end{equation}
where $\mathcal{S}_3=\{(\bar{\alpha},\bar{\beta}): 0\leq \alpha_1\leq \cdots \leq \alpha_u; 0\leq \delta_1\leq \cdots \leq \delta_q; (\delta_l+\alpha_i)\geq 1,~\forall (i+l)\geq (n+1)\} $ and
\begin{IEEEeqnarray}{rl}
E_3(\bar{\alpha},\bar{\delta})=\left[\sum_{l=1}^{q}(n+k-2l+1)\delta_l-k\sum_{i=1}^{u}(1-\alpha_i)^+ +\sum_{\substack{i,l=1\\l+i\leq n}}^{u,q}(1-\alpha_i-\delta_l)^+\right].
\end{IEEEeqnarray}

Finally, using the expression for the pdf of $\bar{\alpha}$ given in \cite{tse1} we have
\begin{equation}
\label{eq:pdf-snr-exp-F1}
\left(\Pi_{i=1}^u |J(\lambda_i)|\right) g_{W_1}(\bar{\alpha})\dot{=}\left\{
\begin{array}{cc}
\rho^{-\sum_{i=1}^{u}(m+n-2i+1)\alpha_i}, &\textrm{if}~ 0\leq \alpha_1\leq \cdots \leq \alpha_u;\\
0,& \textrm{otherwise}.
\end{array}\right.
\end{equation}
Finally, substituting equations \eqref{eq:pdf-snr-exp-F2|1}, \eqref{eq:pdf-snr-exp-F3|1} and \eqref{eq:pdf-snr-exp-F1} into equation
\eqref{eq:pdf-third-product} we get the main result of this section, namely, the joint distribution of $(\bar{\alpha},\bar{\beta},\bar{\gamma})$ up to exponential order, which we state in the following theorem.

\begin{thm}
\label{thm:eigenvalue-exponent-distribution}
If $\bar{\alpha}$, $\bar{\beta}$ and $\bar{\gamma}$ are the vectors containing the negative SNR exponents of the ordered eigenvalues of the matrices $W_1$, $W_2$ and $W_3$, respectively, as defined in the transformations \eqref{eq:transform-of-lambda-alpha}-\eqref{eq:transform-of-gamma-delta}, then the joint distribution of $(\bar{\alpha},\bar{\beta},\bar{\gamma})$ is given up to exponential order as
\begin{equation}
\label{eq:pdf-abd}
f_{W_1W_2W_3}(\bar{\alpha},\bar{\beta},\bar{\delta})\dot{=}\left\{
\begin{array}{l}
\rho^{-E(\bar{\alpha},\bar{\beta},\bar{\delta})},~ \textrm{if}~(\bar{\alpha},\bar{\beta},\bar{\delta})\in \mathcal{S};\\
0,~ \textrm{if}~(\bar{\alpha},\bar{\beta},\bar{\delta})\notin \mathcal{S},
\end{array}\right.
\end{equation}
where
\begin{IEEEeqnarray}{l}
\label{eq_support_set_of_pdf}
\mathcal{S}=\mathcal{S}_2\cap \mathcal{S}_3=\left\{(\bar{\alpha},\bar{\beta},\bar{\delta}):\begin{array}{c}(\alpha_i+\beta_j)\geq 1,~ \forall (i+j)\geq(m+1);\\
(\alpha_i+\delta_l)\geq 1,~ \forall (i+l)\geq(n+1); \\
0\leq \alpha_1\leq \cdots \leq \alpha_u,\\
0\leq \beta_1\leq \cdots \leq \beta_p,  \\
0\leq \delta_1\leq \cdots \leq \delta_q,
\end{array} \right\},
\end{IEEEeqnarray}
and
\begin{IEEEeqnarray}{rl}
\label{eq_distribution_abs}
E(\bar{\alpha},\bar{\beta},\bar{\delta})=\sum_{i=1}^{u}&(n+m-2i+1)\alpha_i + \sum_{j=1}^{p}(k+m-2j+1)\beta_j+ \sum_{l=1}^{q}(k+n-2l+1)\delta_j \nonumber \\
&-2k\sum_{i=1}^{u}(1-\alpha_i)^+ +\sum_{\substack{i,j=1\\j+i\leq m}}^{u,p}(1-\alpha_i-\beta_j)^+ +\sum_{\substack{i,l=1\\l+i\leq n}}^{u,q}(1-\alpha_i-\delta_l)^+.
\end{IEEEeqnarray}
\end{thm}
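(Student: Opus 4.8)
The plan is to assemble the theorem directly from the factorized density in \eqref{eq:pdf-third-product} together with the three exponential-order evaluations already established in \eqref{eq:pdf-snr-exp-F2|1}, \eqref{eq:pdf-snr-exp-F3|1} and \eqref{eq:pdf-snr-exp-F1}. The starting point is Lemma~\ref{lem_conditional_independence}, which lets me write the joint eigenvalue density as the product $F_{W_2|W_1}F_{W_3|W_1}F_{W_1}$; after the change of variables \eqref{eq:transform-of-lambda-alpha}--\eqref{eq:transform-of-gamma-delta}, the density of $(\bar{\alpha},\bar{\beta},\bar{\delta})$ becomes the product of the three Jacobian-weighted blocks appearing in \eqref{eq:pdf-third-product}. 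Since each block has already been reduced to the form ``$\rho$ to a negative power on a polyhedral support and $0$ off it,'' the only remaining work is to multiply three such expressions and read off the resulting exponent and support.

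Next I would handle the support. A product of functions supported on $\mathcal{S}_2$, $\mathcal{S}_3$ and the ordered orthant of $\bar{\alpha}$, respectively, is nonzero precisely on the intersection of those regions. The ordering constraints $0\leq\alpha_1\leq\cdots\leq\alpha_u$ are common to all three blocks; the constraints $0\leq\beta_1\leq\cdots\leq\beta_p$ and $(\alpha_i+\beta_j)\geq 1$ for $i+j\geq m+1$ come from $\mathcal{S}_2$; and $0\leq\delta_1\leq\cdots\leq\delta_q$ together with $(\alpha_i+\delta_l)\geq 1$ for $i+l\geq n+1$ come from $\mathcal{S}_3$. Collecting these yields exactly the set $\mathcal{S}=\mathcal{S}_2\cap\mathcal{S}_3$ of \eqref{eq_support_set_of_pdf}; off this set at least one block vanishes, hence so does the product.

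On the common support the product equals $\rho^{-E}$ with $E$ the sum of the three exponents, and the bookkeeping is the only place requiring care. The linear-in-$\bar{\alpha}$ term $\sum_i(m+n-2i+1)\alpha_i$ comes from \eqref{eq:pdf-snr-exp-F1}; the $\bar{\beta}$ and $\bar{\delta}$ linear terms and the two positive-part cross sums $\sum(1-\alpha_i-\beta_j)^+$ and $\sum(1-\alpha_i-\delta_l)^+$ come from $E_2$ and $E_3$; and, crucially, the term $-k\sum_i(1-\alpha_i)^+$ appears once in each of $E_2$ and $E_3$, so their sum produces the factor $-2k\sum_i(1-\alpha_i)^+$ in \eqref{eq_distribution_abs}. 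The subexponential prefactors (the $\ln\rho$ powers from the Jacobians and the $O(1)$ constants) do not affect exponential equality and are dropped.

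The main obstacle is not conceptual but precisely this last accounting step: because the shared $-k\sum_i(1-\alpha_i)^+$ contribution originates from the $\prod_i(1+\rho\lambda_i)^k$ factor present in both conditional densities \eqref{eq:pdf-ev-F2|1} and \eqref{eq:pdf-ev-F3|1}, I must ensure it is counted exactly twice (and not absorbed into a single copy) while every other term is counted once; a sign slip or double-counting here is the likeliest source of error. Everything else follows from the facts that $\dot{=}$ is preserved under finite products and that a product of factors, each of which is either $\rho$ to a negative power or $0$, is governed on the intersection of supports by the sum of the exponents, which is exactly what yields \eqref{eq:pdf-abd}.
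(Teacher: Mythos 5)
Your proposal is correct and follows essentially the same route as the paper, which proves Theorem~\ref{thm:eigenvalue-exponent-distribution} in the text immediately preceding it by substituting \eqref{eq:pdf-snr-exp-F2|1}, \eqref{eq:pdf-snr-exp-F3|1} and \eqref{eq:pdf-snr-exp-F1} into the conditional-independence factorization \eqref{eq:pdf-third-product}, intersecting the supports and summing the exponents. Your accounting of the doubled $-k\sum_i(1-\alpha_i)^+$ term, which arises once from each of $E_2$ and $E_3$ to give the $-2k\sum_i(1-\alpha_i)^+$ in \eqref{eq_distribution_abs}, is exactly right.
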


\section{DMT of the MIMO HD-RC}
\label{sec:DMT-HD-RC}

Assuming global CSI $\mathcal{H}$ at the relay node, it was proved in \cite{YEE} that the DCF protocol based on the CF scheme of \cite{Cover_Gamal} can achieve the DMT of the MIMO HD-RC. The actual DMT was however not obtained therein. Here, using the asymptotic eigenvalue distribution result of Theorem \ref{thm:eigenvalue-exponent-distribution} of the previous section, the DMT of the MIMO HD-RC is first characterized as the solution of a convex optimization problem (see Theorem \ref{thm_optimization_problem}) and then simplified to a two-variable optimization problem (see Theorem \ref{thm_simplified_optimization}). Moreover, since it is shown that the dynamic QMF protocol achieves this fundamental DMT, only knowledge of the optimal switching time is required at the relay to achieve the DMT of the MIMO HD-RC. This is in contrast to the DCF protocol of \cite{YEE} which requires global CSI at the relay. Later in Section~\ref{sec:only-CSIR}, it is proved that that even the switching time information, while sufficient, is not necessary under certain conditions. In particular, it is shown that the DMT of the $(n,1,n)$ and $(1,k,1)$ HD-RCs can be achieved without CSI at the relay. This is also the case for more general classes of MIMO HD-RCs but only for sufficiently high multiplexing gains.

To characterize the DMT, we first prove that $P_e^*(\rho)$ (see \eqref{eq:pestar}), the minimum average probability of decoding error achievable on the channel at an SNR of $ \rho$, is exponentially equal (recall definition in \eqref{eq:exp-equality}) to the probability of an appropriately defined {\em outage} event. In Subsections \ref{subsec:upper-bound} and \ref{subsec:lower-bound} we derive an upper bound and a lower bound for the outage probability, respectively, which are in turn exponentially equal. The lower bound to the outage probability is based on an upper bound on the instantaneous cut-set bound of the channel. The upper bound on the outage probability is derived from an achievable rate expression of the QMF protocol operating dynamically on the relay channel. Finally, analyzing these bounds in Section~\ref{subsec:optimization-problem}, we derive the DMT of the channel by computing the negative SNR exponent of the outage probability.

It is well known that on a slow fading point-to-point channel the maximum rate at which information can be reliably transferred to the receiver depends on the channel realization, and is hence a random quantity. In what follows, this rate will be referred to as the instantaneous capacity of the channel. For a particular channel realization, if the rate of transmission is larger than the instantaneous capacity of a point-to-point channel, we say the channel is in outage. The same is true for a relay channel, where in addition a relay node helps the end-to-end transmission between the source and the destination nodes. Further, on a dynamic HD-RC, the instantaneous capacity of the channel also depends on the switching time of the relay node and should be chosen optimally to maximize it. Let $\hat{t}_d(\mathcal{H})$ represent the optimal switching time and let the instantaneous capacity be denoted as $C_o\left(\mathcal{H},\hat{t}_d(\mathcal{H})\right)$. Using this notation we next define the outage event.

\begin{defn}[Outage event]
\label{def:outage-event}
The HD-RC is said to be in outage if for a particular channel realization, $\mathcal{H}$ and SNR $\rho$, and the rate of transmission, $R=r\log(\rho)$ (in bits per channel use (Bpcu)) is larger than its instantaneous capacity. The corresponding outage event is denoted as $\mathcal{O}$, so that
\begin{IEEEeqnarray}{c}
\label{eq:outage-event}
\mathcal{O}\triangleq \left\{\mathcal{H}:C_o\left(\mathcal{H},\hat{t}_d(\mathcal{H})\right)<r\log(\rho)\right\}.
\end{IEEEeqnarray}
\end{defn}
Let $ \Pr(\mathcal{O} )$ denote the outage probability and let $d_{O}(r)$ denote its diversity order, i.e., $ d_{O}(r) \triangleq \lim_{\rho\to \infty}-\frac{\log(\Pr(\mathcal{O}))}{\log(\rho)} $. We have the following lemma.
\begin{lemma}
\label{lem:Pe-equalto-Po}
The minimum probability of decoding error achievable on the MIMO HD-RC, $ P_e^*(\rho) $ (see \eqref{eq:pestar}) is exponentially equal to the outage probability. Hence the corresponding diversity orders are also equal, so that
\begin{equation}\label{eq:Pe-is-exponentially-equal-to-Po}
    P_e^*(\rho) \doteq \Pr(\mathcal{O}) \quad  \Longrightarrow \quad d^*(r)=d_{O}(r),
\end{equation}
where $d^*(r)$ is defined in \eqref{eq:def-diversity-order}.
\end{lemma}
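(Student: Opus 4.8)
The plan is to establish the two-sided exponential bound $\Pr(\mathcal{O}) \, \dot{\leq} \, P_e^*(\rho) \, \dot{\leq} \, \Pr(\mathcal{O})$; once this is in hand, taking negative normalized logarithms and invoking the definition \eqref{eq:def-diversity-order} of $d^*(r)$ together with the definition of $d_O(r)$ immediately yields $d^*(r)=d_O(r)$. The lower bound is a converse (Fano-type) argument valid for \emph{every} coding scheme in $\mathscr{C}$, while the upper bound is an achievability argument in which the dynamic QMF protocol is the specific scheme that attains the outage exponent.

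For the converse, I would condition the average error probability of an arbitrary code $\mathcal{C}(\rho)$ on the channel realization and write $P_e(\mathcal{C}(\rho),\rho) \geq \Pr(\mathcal{O}) \, \inf_{\mathcal{H}\in\mathcal{O}} P_e(\mathcal{C}(\rho),\rho \mid \mathcal{H})$. By the definition \eqref{eq:outage-event} of $\mathcal{O}$, on the outage event the transmission rate $R=r\log(\rho)$ exceeds the instantaneous capacity $C_o(\mathcal{H},\hat{t}_d(\mathcal{H}))$, so Fano's inequality forces the conditional error probability to be bounded away from zero by a constant independent of $\rho$ and of the code. Minimizing over all $\mathcal{C}(\rho)\in\mathscr{C}$ then gives $P_e^*(\rho) \, \dot{\geq} \, \Pr(\mathcal{O})$. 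A minor technical point I would address is the behaviour near the outage boundary, where $C_o$ is only slightly below $R$ and the Fano bound degrades: this is handled by replacing $\mathcal{O}$ with a slightly shrunk outage region (threshold $R$ lowered by a constant), whose probability is exponentially equal to $\Pr(\mathcal{O})$ because $R=r\log(\rho)$ and the exponent is continuous in the multiplexing gain.

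For the achievability, I would take $\mathcal{C}(\rho)$ to be the dynamic QMF scheme, whose achievable rate lies within a constant gap $\kappa$ of the cut-set bound $\bar{C}(\mathcal{H},t_d^*)$, uniformly in $\mathcal{H}$ and $\rho$. Splitting over the no-outage and outage events gives $P_e^{QMF}(\rho) \leq \Pr\left(\bar{C}(\mathcal{H},t_d^*)-\kappa < R\right) + \Pr(\text{error} \mid \text{no outage})$, where on the no-outage event the constant-gap guarantee lets the conditional error probability be driven down so that it decays with $\rho$ at least as fast as $\Pr(\mathcal{O})$ and hence does not affect the exponential order. The first term is an outage probability defined by a constant-shifted threshold, and since $R=r\log(\rho)$ while $\kappa$ is SNR-independent, a shift by $\kappa$ leaves the SNR exponent unchanged, so $\Pr(\bar{C}-\kappa < R) \, \dot{=} \, \Pr(\bar{C} < R)$. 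Because, up to the constant gap, the instantaneous capacity is sandwiched as $\bar{C}-\kappa \leq C_o \leq \bar{C}$, the outage probabilities with thresholds $\bar{C}$, $C_o$, and $\bar{C}-\kappa$ all coincide to exponential order, yielding $P_e^*(\rho) \, \dot{\leq} \, \Pr(\mathcal{O})$.

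The main obstacle is the achievability step, specifically the claim that the error probability of dynamic QMF conditioned on the no-outage event decays with SNR at least as fast as $\Pr(\mathcal{O})$. This requires carefully sequencing the block-length and SNR limits and exploiting that the QMF constant gap $\kappa$ is independent of both $\mathcal{H}$ and $\rho$, so that the transition region between outage and non-outage contributes negligibly in exponential order. The insensitivity of the outage exponent to the constant shift $\kappa$, which is a consequence of $R$ scaling like $\log(\rho)$, is ultimately what closes the gap between the converse and achievability bounds and makes $P_e^*(\rho) \, \dot{=} \, \Pr(\mathcal{O})$.
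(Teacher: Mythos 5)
Your proposal is correct and follows essentially the same route as the paper: the paper's proof of this lemma is simply the standard Zheng--Tse argument (cited as ``identical to that in \cite{tse1}''), i.e., a Fano-based converse conditioned on the outage event plus an achievability step in which a scheme operating within a constant gap of the instantaneous capacity makes the conditional error probability exponentially negligible, with the constant-gap/rate-shift insensitivity you invoke being exactly what the paper itself uses in Subsection~\ref{subsec:lower-bound} to show $\Pr\{\mathcal{O}_L\}\dot{=}\Pr\{\mathcal{O}_U\}$. Your use of the dynamic QMF protocol as the achievability scheme and the sandwich $\bar{C}-\kappa\leq C_o\leq\bar{C}$ matches the paper's development, so no further comparison is needed.
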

\begin{proof}[Proof of Lemma~\ref{lem:Pe-equalto-Po}]
The proof is identical to that in \cite{tse1}.
\end{proof}

In the next section, an upper bound on the DMT of the MIMO HD-RC is obtained.
\subsection{An upper bound on instantaneous capacity (and DMT)}
\label{subsec:upper-bound}
From the discussion in Section \ref{sec_introduction} we have that an upper bound on the DMT for the family of  fixed and dynamic protocols is also an upper bound on the achievable DMT of any (cooperative) communication scheme on the MIMO HD-RC. Thus we restrict attention, without loss of generality, to the family of fixed and dynamic protocols. Assuming that the relay node switches from the listening mode to the transmit mode at time $t_d$, we have that any achievable rate $R$ on the relay channel for which the error probability can be made arbitrarily small is upper bounded using the cut-set bounds for the HD-RC
\cite{CT,Khojastepour_Sabharwal_Aazhang} so that
\begin{IEEEeqnarray}{rl}
\label{eq:rate-upper-bound-intermediate1}
R\leq \max_{\{t_d, P(X_S,X_R)\}}\min \left\{I_{C_S}(t_d), I_{C_D}(t_d)\right\}=\max_{t_d} \bar{C}(\mathcal{H},t_d),
\end{IEEEeqnarray}
 where $\bar{C}(\mathcal{H},t_d)$ denotes the cut-set bound of the channel for a given $t_d$ and
\begin{IEEEeqnarray}{rl}
\label{eq:cut-set-bound-around-source}
I_{C_S}(t_d)=t_d I\left(X_S;Y_D,Y_R|p_1\right)+(1-t_d) I\left(X_S;Y_D|X_R,p_2\right);\\
\label{eq:cut-set-bound-around-destination}
I_{C_D}(t_d)=t_d I\left(X_S;Y_D|p_1\right)+(1-t_d) I\left(X_S,X_R;Y_D|p_2\right),
\end{IEEEeqnarray}
represent the maximum mutual information that can flow across the cuts around the source and destination, respectively.

The following two-part lemma provides (i) upper bounds to both $I_{C_S}$ and $I_{C_D}$ and (ii) a lower bound to $\bar{C}(\mathcal{H},t_d)$.
\begin{lemma}
\label{lem:cut-set-upper-bound}

\begin{itemize}
\item[i.]
The cut-set mutual informations $I_{C_S}(t_d)$ and $I_{C_D}(t_d)$ in equations \eqref{eq:cut-set-bound-around-source} and \eqref{eq:cut-set-bound-around-destination}, are upper bounded as
\begin{IEEEeqnarray}{rl}
\label{eq_cutset_Bs}
\max_{\{P_{X_S,X_R}\}} I_{C_S}(t_d) \, \leq \, I^{'}_{C_S}(t_d) \, ~ \triangleq ~ & \, t_d \log\left(L_{S,RD}\right)+(1-t_d) \log\left(L_{SD}\right),\\
\label{eq_cutset_Bd}
\max_{\{P_{X_S,X_R}\}} I_{C_D}(t_d)  \leq I^{'}_{C_D}(t_d) ~ \triangleq ~ & t_d \, \log\left(L_{SD}\right)+(1-t_d) \log\left(L_{SR,D}\right),
\label{eq_expression_Lsd}
\end{IEEEeqnarray}
where $H_{S,RD} \triangleq \left[\begin{subarray}{c} H_{SR} \\ H_{SD}\end{subarray}\right]$, $H_{SR,D} \triangleq \left[H_{SD}~ H_{RD} \right]$ and
\begin{IEEEeqnarray}{rl}
L_{SD} ~ \triangleq ~ & \det\left(H_{SD}H_{SD}^{\dagger}\rho+I_n\right), \\
\label{eq_expression_Lsr_d}
L_{SR,D} ~ \triangleq ~ &\det\left(\rho H_{SR,D}H_{SR,D}^{\dagger}+I_n\right);\\
\label{eq_expression_Lsrd1}
L_{S,RD} ~ \triangleq ~ &\det\left(\rho H_{S,RD}H_{S,RD}^{\dagger}+I_{n+k}\right).
\end{IEEEeqnarray}

\item[ii.]
Moreover, the cut-set bound $\bar{C}(\mathcal{H},t_d)$ is lower bounded as follows:
\begin{IEEEeqnarray*}{rl}
\bar{C}(\mathcal{H},t_d)\geq & \min \{I_{C_S}^{'}(t_d), I_{C_D}^{'}(t_d)\}-(m+k).
\end{IEEEeqnarray*}
\end{itemize}
\end{lemma}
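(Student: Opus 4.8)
The plan is to handle the two cuts independently, view each relaying phase as a standard additive-Gaussian MIMO channel, and then replace the optimizing input covariance by the full-power matrix $\rho I$ via a trace-to-spectral-norm argument. Throughout I would first restrict to Gaussian $X_S,X_R$, since for a fixed second moment the mutual information across an additive Gaussian MIMO channel is maximized by Gaussian inputs, and use stationarity within each phase to reduce the inputs to single-letter covariances $Q_1=\mathbb{E}(X_{S_1}X_{S_1}^{\dagger})$, $Q_2=\mathbb{E}(X_{S_2}X_{S_2}^{\dagger})$ and $Q_R=\mathbb{E}(X_RX_R^{\dagger})$, satisfying the power constraints.

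For part (i) the main work is identifying the effective channel seen by each cut in each phase. In phase $p_1$ the source cut observes the stacked channel $H_{S,RD}=[H_{SR};H_{SD}]$ with $\mathcal{CN}(0,I_{n+k})$ noise, so $I(X_S;Y_D,Y_R\mid p_1)\le\log\det(I_{n+k}+H_{S,RD}Q_1H_{S,RD}^{\dagger})$; in phase $p_2$, conditioning on $X_R$ removes $H_{RD}X_R$ from $Y_{D_2}$, leaving the point-to-point channel $H_{SD}$, so $I(X_S;Y_D\mid X_R,p_2)\le\log\det(I_n+H_{SD}Q_2H_{SD}^{\dagger})$. For the destination cut, phase $p_1$ gives only $H_{SD}$ (hence $L_{SD}$), while in phase $p_2$ the pair $(X_S,X_R)$ drives the combined channel $H_{SR,D}=[H_{SD}~H_{RD}]$, giving $I(X_S,X_R;Y_D\mid p_2)\le\log\det(I_n+H_{SR,D}QH_{SR,D}^{\dagger})$ for the joint covariance $Q$. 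The closing step is the elementary fact that any covariance with $\mathrm{Tr}(Q)\le\rho$ obeys $Q\preceq\rho I$ (its largest eigenvalue is at most its trace), whence $HQH^{\dagger}\preceq\rho HH^{\dagger}$ and, by monotonicity of $\log\det$ on the positive-semidefinite cone, each determinant is bounded by the corresponding $L$-quantity. Substituting into \eqref{eq:cut-set-bound-around-source} and \eqref{eq:cut-set-bound-around-destination} produces $I'_{C_S}(t_d)$ and $I'_{C_D}(t_d)$.

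For part (ii) I would lower bound $\bar{C}(\mathcal{H},t_d)=\max_{P}\min\{I_{C_S},I_{C_D}\}$ by evaluating the min-cut at one feasible input $P_0$: take $X_{S_1},X_{S_2},X_R$ independent and isotropic with per-phase covariance $\frac{\rho}{m}I_m$ at the source and $\frac{\rho}{k}I_k$ at the relay, which meets \eqref{eq:power-constraint-at-source} and \eqref{eq:power-constraint-at-relay} with equality. Each cut mutual information then has exactly the determinant form of the matching $I'$ but with $\rho$ replaced by $\rho/m$ or $\rho/k$; bounding the loss mode by mode through $\log\frac{1+\rho\sigma}{1+(\rho/M)\sigma}\le\log M$ shows each cut stays within a constant, independent of $\rho$ and $\mathcal{H}$, of its upper bound, and a careful accounting of the ranks involved bounds this constant by $m+k$. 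Hence $I_{C_S}(P_0)\ge I'_{C_S}-(m+k)$ and $I_{C_D}(P_0)\ge I'_{C_D}-(m+k)$, and taking the minimum and then the maximum over $P$ gives the stated lower bound.

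The step I expect to be the main obstacle is the power bookkeeping rather than any deep inequality. On the source side the budget \eqref{eq:power-constraint-at-source} couples the two phases, so a priori $\mathrm{Tr}(Q_1)$ can be as large as $\rho/t_d$; the clean bound $Q_1\preceq\rho I$ requires the per-phase normalization, and carrying the joint constraint exactly perturbs $I'_{C_S}$ only by a $t_d$-dependent constant. The same caveat applies to the joint covariance $Q$ in the destination cut, whose trace is at most $2\rho$, and to aggregating the per-mode $\log M$ losses of part (ii) into a single channel-independent constant. None of these constants alter the exponential order, so they are harmless for the DMT analysis that consumes these bounds; the genuinely structural content is the correct identification of the effective matrices $H_{S,RD}$, $H_{SD}$ and $H_{SR,D}$ for each cut and phase, together with the trace-implies-spectral-bound substitution.
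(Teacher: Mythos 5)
Your proposal is correct and follows essentially the same route as the paper's proof: part (i) via Gaussian optimality of $I(HX+Z;X)$ under a covariance constraint, the implication $\mathrm{Tr}(Q)\le\rho\Rightarrow Q\preceq\rho I$, and monotonicity of $\log\det$ on the psd cone; part (ii) by evaluating the cuts at independent isotropic inputs $\mathcal{CN}(0,\tfrac{\rho}{m}I_m)$, $\mathcal{CN}(0,\tfrac{\rho}{k}I_k)$ and absorbing the loss into an SNR- and channel-independent constant (the paper extracts this constant by the matrix inequality $I_n+\tfrac{\rho}{m}A+\tfrac{\rho}{k}B\succeq\tfrac{1}{m+k}(I_n+\rho A+\rho B)$ rather than your mode-by-mode ratio bound, but the two are interchangeable). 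Your observation about the phase-coupled source power budget is a legitimate subtlety the paper silently elides, and your resolution of it is sound since the resulting constant does not scale with $\rho$.
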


\begin{proof}
See Appendix \ref{app:lem:cut-set-upper-bound}.
\end{proof}

Now, continuing from equation \eqref{eq:rate-upper-bound-intermediate1} we have
\begin{IEEEeqnarray}{rl}
R\leq & \max_{\{t_d, P(X_S,X_R)\}}\min \left\{I_{C_S}(t_d), I_{C_D}(t_d)\right\},\nonumber\\
\leq & \max_{\{t_d\}}\min \left\{ \max_{\{P(X_S,X_R)\}} I_{C_S}(t_d), \max_{\{P(X_S,X_R)\}} I_{C_D}(t_d)\right\},\nonumber\\
\label{eq_dmt_rate_upper_bound_1}
\stackrel{(a)}{\leq} &\max_{\{t_d\}}\min \left\{I^{'}_{C_S}(t_d), I^{'}_{C_D}(t_d)\right\}=\max_{t_d} R_U(t_d),
\end{IEEEeqnarray}
where in step $(a)$ we used the set of upper bounds from the first part of Lemma~\ref{lem:cut-set-upper-bound} and  the definition $R_U(t_d)=\min \left\{I^{'}_{C_S}(t_d), I^{'}_{C_D}(t_d)\right\}$. Note that $t_d$ in equation \eqref{eq_dmt_rate_upper_bound_1} can be a function of the channel matrices since we are considering a dynamic HD-RC. Since the right hand side of equation \eqref{eq_dmt_rate_upper_bound_1} is maximized when $I^{'}_{C_S}(t_d)= I^{'}_{C_D}(t_d)$, equating equations \eqref{eq_cutset_Bs} and \eqref{eq_cutset_Bd} we get the optimal value for the switching time as
\begin{IEEEeqnarray}{rl}
\label{eq:cut-set-optimal-switching-time}
t_d^*=\frac{\log\left(\frac{L_{SR,D}}{L_{SD}}\right)}{\log\left(\frac{L_{SR,D}}{L_{SD}}\right)+\log\left(\frac{L_{S,RD}}{L_{SD}}\right)}.
\end{IEEEeqnarray}
Putting this value of $t_d$ in equation \eqref{eq_dmt_rate_upper_bound_1} we get
\begin{IEEEeqnarray}{rl}
\label{eq:rate-upper-bound}
R \leq \frac{\log\left(\frac{L_{SR,D}}{L_{SD}}\right) \log\left(\frac{L_{S,RD}}{L_{SD}}\right)}{\log\left(\frac{L_{SR,D}}{L_{SD}}\right)+\log\left(\frac{L_{S,RD}}{L_{SD}}\right)}
+\log\left(L_{SD}\right)\triangleq R^*_U.
\end{IEEEeqnarray}
Since any rate up to the instantaneous capacity $C_o\left(\mathcal{H},\hat{t}_d(\mathcal{H})\right)$ is achievable, we have $ C_o\left(\mathcal{H},\hat{t}_d(\mathcal{H})\right)\leq R^*_U $.
This inequality when used along with the definition of the outage probability in \eqref{eq:outage-event} yields
\begin{equation}
\label{eq-ou}
    \mathcal{O}=\{\mathcal{H}:C_o\left(\mathcal{H},\hat{t}_d(\mathcal{H})\right)<r\log(\rho)\}\supseteq \{\mathcal{H}:R_U^*<r\log(\rho)\}\triangleq \mathcal{O}_U,
\end{equation}
from which we have a lower bound on the outage probability, $ \Pr\{\mathcal{O}\}\geq \Pr\{\mathcal{O}_U\} $. Using \eqref{eq:Pe-is-exponentially-equal-to-Po}, we then obtain an exponential lower bound on the minimum achievable probability of decoding error, and hence an upper bound on the DMT as
\begin{IEEEeqnarray}{rl}
P_e^*(\rho) \dot{\geq } \Pr\{\mathcal{O}_U\} \quad \Longrightarrow \quad d^*(r) \leq d_U(r)
\label{eq:Pe-lower-bound}
\end{IEEEeqnarray}
where $ d_U(r)$ is the diversity order of $ \Pr\{\mathcal{O}_U\} $, i.e., $ d_{U}(r) \triangleq \lim_{\rho\to \infty}-\frac{\log(\Pr(\mathcal{O}_U))}{\log(\rho)} $.


\subsection{A lower bound on instantaneous capacity (and the DMT) via the QMF scheme}
\label{subsec:lower-bound}
Since the instantaneous capacity of a slow fading channel is the supremum of the achievable rates of all possible coding schemes, the achievable rate of a particular coding scheme yields a lower bound to it. We first derive such a lower bound for the HD-RC by computing the achievable rate of the QMF protocol~\cite{Avestimehr_Diggavi_Tse} which when substituted in the definition of the outage event results in an upper bound to the outage probability yielding in turn the desired lower bound to the DMT. But first a brief review of the QMF coding scheme.

\subsubsection{The coding and decoding strategies of the QMF scheme} ~The encoding method at the source node of the QMF scheme is a two step procedure involving two different codes. The inner codebook, denoted by $\mathcal{T}_{x_S}$ has $2^{RT}$ mutually independent codewords, i.e.,
\begin{equation*}
    \mathcal{T}_{x_S}=\{x_S^{(w)}:w=1,\cdots,  2^{RT}\},
\end{equation*}
where $x_S^{(w)}$ for each $w\in \{1,\cdots,  2^{RT}\}$ is a random $T$ length vector with i.i.d. $\mathcal{CN}(0,1)$ components. Each codeword of the inner codebook is treated as a symbol of the outer code. To transmit a message $\mathcal{U}\in \{1,\cdots,  2^{NRT}\}$, the source node first maps it onto a $N$ length sequence of symbols of the outer code, i.e., $w_1, w_2, \cdots, w_N$. Each of these symbols are then encoded into a codeword in $\mathcal{T}_{x_S}$, i.e., $w_k \to x_S^{(w_k)}$ for $1\leq k\leq N$. Finally, the message $\mathcal{U}$ is transmitted over $NT$ channel uses and hence at a rate of $R$ bits per channel use.

The relay node operates on each codeword of the inner codebook in two phases (listen and transmit). For the first $t_d$ fraction of each codeword $x_S^{(w_k)}$ the relay node listens to the source transmission. Let $y_R^{(w_k)}$ represent the received signal at the relay during the listening phase with the transmission of $x_S^{(w_k)}$. The relay node first quantizes $y_R^{(w_k)}$ at the noise floor and randomly maps the resulting quantized signal $\hat{y}_R^{(w_k)}$ to a random Gaussian codeword, $x_R^{(w_k)}$, i.e., $x_R^{(w_k)}=f_R(\hat{y}_R^{(w_k)})$, where $f_R(.)$ represents a random mapping scheme. The random vector $x_R^{(w_k)}$ is then transmitted by the relay during the remaining $(1-t_d)T$ channel uses. The same procedure is repeated by the relay for all $N$ inner codewords that make up the source codeword. Given the knowledge of the relay mapping, $f_R(.)$, global CSI $\mathcal{H}$, and the received sequence $y_D^{(w_k)}\in \mathbb{C}^{n\times T}$ for $k=1,\cdots, N$, the destination node decodes the message.


\subsubsection{Achievable rate of the QMF protocol}
Recall that the cut-set upper bound to the instantaneous capacity of the channel for a given listen-transmit scheduling of the relay (i.e., fixed $t_d$) node was denoted by $\bar{C}(\mathcal{H},t_d)$~\cite{CT,Khojastepour_Sabharwal_Aazhang}. In \cite{Avestimehr_Diggavi_Tse} it was proved that for a given $t_d$, the QMF protocol can achieve a rate $R_q(\mathcal{H},t_d)$ on a relay channel with channel matrices  $\mathcal{H}$, where

\begin{IEEEeqnarray}{l}
\label{eq:lower-bound-a}
  R_q(\mathcal{H},t_d)\geq \bar{C}(\mathcal{H},t_d)-\tau,
\end{IEEEeqnarray}
and $\tau$ is independent of both the channel matrices and $\rho$. The above rate satisfying equation \eqref{eq:lower-bound-a} can be achieved by the QMF protocol for any given $t_d$ as long as it is known to the relay node. In particular, putting $t_d=t_d^*$ (given by equation \eqref{eq:cut-set-optimal-switching-time}) in equation \eqref{eq:lower-bound-a} we get

\begin{IEEEeqnarray}{l}
\label{eq:lower-bound-b}
  R_q(\mathcal{H},t_d^*)\geq \bar{C}(\mathcal{H},t_d^*)-\tau.
\end{IEEEeqnarray}
In other words, a rate which is within constant number of bits to $\bar{C}(\mathcal{H},t_d^*)$ can be achieved by the QMF protocol. Note that, $t_d^*$ is a function of the instantaneous channel realizations, $\mathcal{H}$ (e.g., see equation \eqref{eq:cut-set-optimal-switching-time}) and can be computed by the relay node since we assume global CSI at the relay node.

From the second part of Lemma~\ref{lem:cut-set-upper-bound} we have
\begin{IEEEeqnarray}{rl}
\label{eq:lower-bound-c}
  \bar{C}(\mathcal{H},t_d^*)\geq & \min\{I_{C_S}^{'}(t_d^*),I_{C_D}^{'}(t_d^*)\}-(m+k),\nonumber \\
  \geq  & R_U^*-(m+k),\nonumber
\end{IEEEeqnarray}
where the last step follows from the fact that $I_{C_S}^{'}(t_d^*)=I_{C_D}^{'}(t_d^*)=R_U^*$ (see equation \eqref{eq:cut-set-optimal-switching-time}). Now, substituting the last lower bound to $\bar{C}(\mathcal{H},t_d^*)$ in equation \eqref{eq:lower-bound-b} we get

\begin{IEEEeqnarray}{rl}
  R_q(\mathcal{H},t_d^*)\geq & R_U^*-\underbrace{(m+k+\tau)}_{R_0}, \nonumber \\
= & R_U^* - R_0, \nonumber \\
\label{eq:rate-lower-bound}
\triangleq & R_L^*,
\end{IEEEeqnarray}
where $   R_0= (m+k+\tau) $.

Clearly, the instantaneous capacity $C_o\left(\mathcal{H},\hat{t}_d(\mathcal{H})\right)$ is larger than or equal to any achievable rate on the channel, i.e., $
    C_o\left(\mathcal{H},\hat{t}_d(\mathcal{H})\right)\geq R^*_L $.
This inequality along with the definition of outage probability in \eqref{eq:outage-event} yields
\begin{equation*}
    \mathcal{O}=\{\mathcal{H}:C_o\left(\mathcal{H},\hat{t}_d(\mathcal{H})\right)<r\log(\rho)\}\subseteq \{\mathcal{H}:R_L^*<r\log(\rho)\}\triangleq \mathcal{O}_L,
\end{equation*}
which in turn implies that $ P_e^*(\rho) \doteq \Pr\{\mathcal{O}\}\leq \Pr\{\mathcal{O}_L\} $,
where the exponential equality is from \eqref{eq:Pe-is-exponentially-equal-to-Po}. Now, since $R_L^* = R^*_U - R_0$ and $R_0$ is independent of the SNR ($\rho$) and  $\mathcal{H}$, we have at asymptotically high SNR that
\begin{IEEEeqnarray*}{rl}
 \Pr\{\mathcal{O}_L\}= & \Pr\{R^*_L<r \log(\rho)\}= \Pr\{R^*_U-R_0<r \log(\rho)\}\\
 \dot{=}& \Pr\{R^*_U<r \log(\rho)\}=\Pr\{\mathcal{O}_U\}.
\end{IEEEeqnarray*}
Hence, $ P_e^*(\rho)\dot{\leq} \Pr\{\mathcal{O}_U\}$ and combining with \eqref{eq:Pe-lower-bound} we have that $\Pr\{\mathcal{O}_U\}$ characterizes $P_e^*(\rho)$ exactly up to exponential order, i.e.,
\begin{equation*}
    P_e^*(\rho)\dot{=}\Pr\{\mathcal{O}_U\}
    \end{equation*}
so that the DMT of the MIMO HD-RC can be expressed as
\begin{equation}
\label{eq:d-and-Ou-relation}
    d^*(r) = d_U(r) = \lim_{\rho\to \infty}-\frac{\log\left(\Pr\{\mathcal{O}_U\}\right)}{\log(\rho)}.
\end{equation}
with $ \mathcal{O}_U$ defined in \eqref{eq-ou} in terms of $R^*_U $ which in turn is defined in \eqref{eq:rate-upper-bound}. In the next section, we evaluate this DMT.

\begin{rem}
\label{thm:optimality-of-QMF-without-global-CSI}
The QMF protocol can achieve the DMT of the MIMO HD-RC with knowledge of only $t_d^*$, the switching time that maximizes the cut-set bound (in lieu of the true optimal switching time $ \hat{t}_d $)., i.e., it does not require the explicit knowledge of $H_{SR}$, $H_{SD}$ and $H_{RD}$.
\end{rem}

\subsection{The DMT as a solution to an optimization problem}
\label{subsec:optimization-problem}

Evidently, to obtain $d^*(r)$ the probability distribution of $R_U^*$, which is a function of the three channel matrices, is needed. However, by simplifying the expression for $R_U^*$, it is shown that just the joint eigenvalue distribution of the three composite channel matrices $W_i$, for $1\leq i\leq 3$, defined in Section \ref{subsec:eigenvalue_distribution}, suffices. Further simplification shows that only the joint distribution of the SNR exponents of these eigenvalues is sufficient to obtain $d^*(r)$.

\begin{lemma}
\label{lem:d-in-terms-of-R*}
The optimal diversity order of the MIMO HD-RC can be written as
\begin{IEEEeqnarray}{rl}
\label{eq:d-as-function-asymptotic-evs}
d^*(r)=\lim_{\rho \to \infty} -\frac{\log\left(\Pr\left\{r^*(\bar{\alpha},\bar{\beta},\bar{\delta})\leq r\right\}\right)}{\log(\rho)},
\end{IEEEeqnarray}
where $r^*(\bar{\alpha},\bar{\beta},\bar{\delta})$ is given as
\begin{equation}
\label{eq:asymptotic-rate-expression}
r^*(\bar{\alpha},\bar{\beta},\bar{\delta}) \triangleq  ~\left[  \frac{\sum_{l=1}^{q}(1-\delta_l)^+ \sum_{j=1}^{p}(1-\beta_j)^+}{\sum_{l=1}^{q}(1-\delta_l)^+ +\sum_{j=1}^{p}(1-\beta_j)^+} \right]+ \left(\sum_{i=1}^{u}(1-\alpha_i)^+\right);
\end{equation}
and $\alpha_i$'s, $\beta_j$'s and $\delta_l$'s are the negative SNR exponents of the eigenvalues of $H_{SD}H_{SD}^{\dagger}$, $H_{SR}(I_m+\rho H_{SD}^{\dagger}H_{SD})^{-1}H_{SR}^{\dagger}$ and $ H_{RD}^{\dagger}(I_n+\rho H_{SD}H_{SD}^{\dagger})^{-1}H_{RD}$, respectively.
\end{lemma}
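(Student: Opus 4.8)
The plan is to begin from the relation $d^*(r)=d_U(r)=\lim_{\rho\to\infty}-\log\left(\Pr\{\mathcal{O}_U\}\right)/\log(\rho)$ already established in \eqref{eq:d-and-Ou-relation}, with $\mathcal{O}_U=\{\mathcal{H}:R_U^*<r\log(\rho)\}$ from \eqref{eq-ou} and $R_U^*$ from \eqref{eq:rate-upper-bound}, and to show that the normalized rate $R_U^*/\log(\rho)$ converges, uniformly up to an additive $O(1)$ term, to $r^*(\bar{\alpha},\bar{\beta},\bar{\delta})$. The conceptual heart is the observation that each log-determinant in $R_U^*$ can be written purely in terms of the eigenvalues of $W_1$, $W_2$, $W_3$ from Section~\ref{subsec:eigenvalue_distribution}. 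First I would combine the identity $\det(I+XY)=\det(I+YX)$ with the factorization $\det(P+Q)=\det(P)\det(I+P^{-1}Q)$. Writing $H_{S,RD}^{\dagger}H_{S,RD}=H_{SR}^{\dagger}H_{SR}+H_{SD}^{\dagger}H_{SD}$ and factoring out $A:=I_m+\rho H_{SD}^{\dagger}H_{SD}$, whose determinant equals $L_{SD}$, gives
\[
\frac{L_{S,RD}}{L_{SD}}=\det\left(I_k+\rho H_{SR}A^{-1}H_{SR}^{\dagger}\right)=\det(I_k+\rho W_2)=\prod_{j=1}^{p}(1+\rho\mu_j),
\]
and a parallel computation with $B:=I_n+\rho H_{SD}H_{SD}^{\dagger}$ yields
\[
\frac{L_{SR,D}}{L_{SD}}=\det(I_k+\rho W_3)=\prod_{l=1}^{q}(1+\rho\gamma_l),\qquad L_{SD}=\prod_{i=1}^{u}(1+\rho\lambda_i).
\]
This step is exactly where the seemingly ad hoc matrices $W_2$ and $W_3$ are revealed to be the natural objects governing the two cut-set terms.

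Second, I would substitute these three products into \eqref{eq:rate-upper-bound}, writing $R_U^*=\tfrac{XY}{X+Y}+Z$ with $X=\sum_{l=1}^{q}\log(1+\rho\gamma_l)$, $Y=\sum_{j=1}^{p}\log(1+\rho\mu_j)$ and $Z=\sum_{i=1}^{u}\log(1+\rho\lambda_i)$. Invoking the change of variables \eqref{eq:transform-of-lambda-alpha}--\eqref{eq:transform-of-gamma-delta} together with the elementary two-sided bound $(1-x)^+\log(\rho)\le\log(1+\rho^{1-x})\le(1-x)^+\log(\rho)+\log 2$, valid uniformly in $x$, shows that $X$, $Y$, $Z$ equal $\left(\sum_l(1-\delta_l)^+\right)\log(\rho)$, $\left(\sum_j(1-\beta_j)^+\right)\log(\rho)$, $\left(\sum_i(1-\alpha_i)^+\right)\log(\rho)$, each up to a bounded additive constant independent of the realization and of $\rho$. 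Feeding these into the parallel combination and simplifying then delivers
\[
R_U^*=r^*(\bar{\alpha},\bar{\beta},\bar{\delta})\log(\rho)+O(1),
\]
with $r^*$ exactly as in \eqref{eq:asymptotic-rate-expression}; the positivity of the eigenvalues w.p.~$1$ guarantees $X+Y>0$, so the combination is well defined.

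Finally, I would convert this uniform estimate into an identity of SNR exponents. Bounding the additive term by a constant $c$ yields the sandwich $\Pr\{r^*<r-c/\log(\rho)\}\le\Pr\{\mathcal{O}_U\}\le\Pr\{r^*<r+c/\log(\rho)\}$, and letting $\rho\to\infty$ (so that $c/\log(\rho)\to 0$) shows that $\Pr\{\mathcal{O}_U\}$ and $\Pr\{r^*(\bar{\alpha},\bar{\beta},\bar{\delta})\le r\}$ have the same SNR exponent; together with \eqref{eq:d-and-Ou-relation} this is precisely \eqref{eq:d-as-function-asymptotic-evs}. I expect the main obstacle to lie not in the determinant algebra, which is routine once the correct factorization is chosen, but in the care required in the last two steps: controlling the $O(1)$ gaps \emph{uniformly} through the nonlinear map $(X,Y)\mapsto XY/(X+Y)$ (including the degenerate regime where $X$ and $Y$ are both $O(1)$, where the first term of $r^*$ must be read as zero), and justifying that the vanishing-margin perturbation $\pm c/\log(\rho)$ leaves the diversity exponent unchanged, which rests on the continuity in $r$ of the exponent of $\Pr\{r^*\le r\}$.
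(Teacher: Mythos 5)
Your proposal follows essentially the same route as the paper's Appendix~\ref{App:lem:d-in-terms-of-R*}: the same determinant identities ($\det(I+XY)=\det(I+YX)$ and factoring out $I_m+\rho H_{SD}^{\dagger}H_{SD}$) reveal $L_{S,RD}/L_{SD}=\det(I_k+\rho W_2)$ and $L_{SR,D}/L_{SD}=\det(I_k+\rho W_3)$, after which the eigenvalue substitution and the transformations \eqref{eq:transform-of-lambda-alpha}--\eqref{eq:transform-of-gamma-delta} give $R_U^*=r^*(\bar{\alpha},\bar{\beta},\bar{\delta})\log(\rho)$ and the conclusion follows from \eqref{eq:d-and-Ou-relation}. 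Your explicit tracking of the $O(1)$ gaps and the closing sandwich argument is a (welcome) refinement in rigor of the paper's direct asymptotic equality, not a different approach.
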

\begin{proof}[Proof]
The proof is given in Appendix~\ref{App:lem:d-in-terms-of-R*}.
\end{proof}



Using the joint pdf of $\{\bar{\alpha},\bar{\beta},\bar{\delta}\}$ given by equation~\eqref{eq:pdf-abd}, $\Pr\left\{r^*(\bar{\alpha},\bar{\beta},\bar{\delta})\leq r\right\}$ is evaluated and using equation \eqref{eq:d-as-function-asymptotic-evs}, the optimal diversity order $d^*(r)$ is obtained, leading to the following theorem.

\begin{thm}
\label{thm_optimization_problem}
The solution of the following optimization problem yields the fundamental DMT of the MIMO HD-RC:
\begin{IEEEeqnarray}{l}
\label{general_opt}
\min_{(\bar{\alpha},\bar{\beta},\bar{\delta})}~ F\left(\bar{\alpha},\bar{\beta},\bar{\delta}\right)
\end{IEEEeqnarray}
subject to the following constraints
\begin{IEEEeqnarray}{rl}
\label{mi_constraint}
\sum_{i=1}^{u}(1-\alpha_i)+\frac{\sum_{j=1}^{p}(1-\beta_j)\sum_{l=1}^{q}(1-\delta_l)}{\sum_{j=1}^{p}(1-\beta_j)+\sum_{l=1}^{q}(1-\delta_l)}\leq & r,\\
\label{redun1}
\alpha_{m-j+1}+\beta_j \geq  1~ \forall 1\leq j \leq &p,\\
\label{redun12}
\alpha_{n-l+1}+\delta_l \geq  1~ \forall 1\leq l \leq &q,\\
\label{redun2}
0\leq \alpha_1 \leq \cdots \leq \alpha_u \leq & 1,\\
0\leq \beta_1 \leq \cdots \leq \beta_p \leq & 1, \\
\label{redun4}
0\leq \delta_1 \leq \cdots \leq \delta_q \leq & 1;
\end{IEEEeqnarray}
where
\begin{IEEEeqnarray}{ll}
\label{eq_objective_func}
F\left(\bar{\alpha},\bar{\beta},\bar{\delta}\right)=\sum_{i=1}^{u}(n+m+2k-2i+1)\alpha_i &+ \sum_{j=1}^{p}(k+m-2j+1)\beta_j+ \sum_{l=1}^{q}(k+n-2l+1)\delta_j -2ku \nonumber \\
&+\sum_{\substack{i,j=1\\j+i\leq m}}^{u,p}(1-\alpha_i-\beta_j)^+ +\sum_{\substack{i,l=1\\l+i\leq n}}^{u,q}(1-\alpha_i-\delta_l)^+.
\end{IEEEeqnarray}
\end{thm}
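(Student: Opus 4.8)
The plan is to compute the limit of Lemma~\ref{lem:d-in-terms-of-R*} explicitly by feeding the asymptotic joint density of Theorem~\ref{thm:eigenvalue-exponent-distribution} into a Laplace-type (saddle-point) evaluation of the outage integral, in the spirit of Zheng and Tse. First I would write the probability in \eqref{eq:d-as-function-asymptotic-evs} as an integral of the joint pdf over the sublevel set of the asymptotic rate,
\[
\Pr\left\{r^*(\bar{\alpha},\bar{\beta},\bar{\delta})\leq r\right\}=\int_{\mathcal{A}} f_{W_1W_2W_3}(\bar{\alpha},\bar{\beta},\bar{\delta})\,d\bar{\alpha}\,d\bar{\beta}\,d\bar{\delta},\qquad \mathcal{A}\triangleq\left\{r^*(\bar{\alpha},\bar{\beta},\bar{\delta})\leq r\right\}.
\]
Substituting the exponential-order expression $f_{W_1W_2W_3}\dot{=}\rho^{-E(\bar{\alpha},\bar{\beta},\bar{\delta})}$ on $\mathcal{S}$ (and $0$ off $\mathcal{S}$) from \eqref{eq:pdf-abd} confines the integral to $\mathcal{A}\cap\mathcal{S}$. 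Because every coefficient multiplying $\alpha_i,\beta_j,\delta_l$ in $E$ is strictly positive (for indices in the relevant ranges) while the $(\cdot)^+$ terms are bounded, the integrand decays as any coordinate grows, so the tails contribute negligibly and the region may be treated as a compact polytope.

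Next I would invoke the Laplace principle: since $E$ is continuous (piecewise linear) and the effective domain is compact, the exponential order of the integral is governed by the smallest value of the exponent,
\[
\int_{\mathcal{A}\cap\mathcal{S}}\rho^{-E(\bar{\alpha},\bar{\beta},\bar{\delta})}\,d\bar{\alpha}\,d\bar{\beta}\,d\bar{\delta}\;\dot{=}\;\rho^{-\min_{(\bar{\alpha},\bar{\beta},\bar{\delta})\in\mathcal{A}\cap\mathcal{S}}E(\bar{\alpha},\bar{\beta},\bar{\delta})}.
\]
This I would establish by the usual two-sided bound: the upper bound by dominating the integrand by its maximum times the finite volume of the polytope, and the matching lower bound by restricting the integral to a small, fixed-volume neighborhood of the constrained minimizer on which $E$ stays within $\epsilon$ of its minimum, then letting $\epsilon\to 0$. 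Combining this with \eqref{eq:d-as-function-asymptotic-evs} gives $d^*(r)=\min_{(\bar{\alpha},\bar{\beta},\bar{\delta})\in\mathcal{A}\cap\mathcal{S}}E(\bar{\alpha},\bar{\beta},\bar{\delta})$.

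It then remains to verify that this minimization is literally the problem stated in the theorem. For the objective, I would note that on the feasible set all coordinates may be taken in $[0,1]$, so each $(1-\alpha_i)^+$ equals $1-\alpha_i$; substituting $-2k\sum_{i}(1-\alpha_i)^+=-2ku+2k\sum_{i}\alpha_i$ into $E$ of \eqref{eq_distribution_abs} absorbs the extra $2k\alpha_i$ into the leading $\alpha$-coefficient and converts $E$ into $F$ of \eqref{eq_objective_func}. For the constraints, dropping the $(\cdot)^+$ on $[0,1]$ turns $r^*(\bar{\alpha},\bar{\beta},\bar{\delta})\leq r$ into \eqref{mi_constraint}; the support inequalities $\alpha_i+\beta_j\geq 1$ for all $i+j\geq m+1$ collapse, by the monotone ordering $\alpha_1\leq\cdots\leq\alpha_u$, to their tightest instances $\alpha_{m-j+1}+\beta_j\geq 1$ of \eqref{redun1}, and similarly $\alpha_{n-l+1}+\delta_l\geq 1$ of \eqref{redun12}; the ordering relations of $\mathcal{S}$ together with the box then give \eqref{redun2}--\eqref{redun4}. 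The only nontrivial point here is that imposing $\alpha_i,\beta_j,\delta_l\leq 1$ is without loss: since the relevant coefficients in $E$ are positive and each $(1-\cdot)^+$ term is nonincreasing in these variables, any coordinate exceeding $1$ can be lowered to $1$ without violating the support or rate constraints and without increasing $E$, so a minimizer always lies in the unit box.

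The main obstacle I anticipate is the rigorous justification of the Laplace step rather than merely reading off the dominant exponent. The delicate part is the lower bound: one must exhibit a neighborhood of the constrained minimizer that lies inside $\mathcal{A}\cap\mathcal{S}$ and whose volume is bounded away from zero uniformly in $\rho$. When the minimizer sits on the boundary of the feasible polytope (which is typical here, since the active rate constraint \eqref{mi_constraint} and several support constraints will bind), one must argue that a positive-volume wedge of the polytope around that corner still captures the dominant mass, so that the exponent of the integral is exactly the minimum of $E$ and not something strictly larger. Once this is in place, the identification of the objective and constraints with \eqref{eq_objective_func}--\eqref{redun4} is routine algebraic bookkeeping, completed entirely by resolving the $(\cdot)^+$ terms on the unit box.
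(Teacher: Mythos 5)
Your proposal is correct and follows essentially the same route as the paper's proof: express $\Pr\{r^*(\bar{\alpha},\bar{\beta},\bar{\delta})\leq r\}$ as an integral of the asymptotic joint pdf over $\mathcal{O}_U\cap\mathcal{S}$, apply Laplace's method to identify the exponent with $\min E$, and then argue that any coordinate exceeding $1$ can be reduced to $1$ without leaving $\mathcal{S}$ or increasing $E$, which resolves the $(\cdot)^+$ terms and yields the stated objective $F$ and constraints. Your treatment of the Laplace step is in fact more careful than the paper's (which invokes it only informally), but the underlying argument is identical.
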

\begin{proof}[Outline of proof]
It is clear from equation \eqref{eq:d-as-function-asymptotic-evs} that the optimal diversity order is equal to the negative SNR exponent of $\Pr\left\{r^*(\bar{\alpha},\bar{\beta},\bar{\delta})\leq r\right\}$. Using the joint pdf of $(\bar{\alpha},\bar{\beta},\bar{\delta})$ obtained in Subsection \ref{subsec:eigenvalue_distribution}, this probability can be written as an integral of the pdf over the subset of the sample space of $(\bar{\alpha},\bar{\beta},\bar{\delta})$ where $r^*(\bar{\alpha},\bar{\beta},\bar{\delta}) \leq r$ (call it $\mathcal{D}$). From Laplace's method it follows that this integral is dominated by a term having the minimum negative SNR exponent over $\mathcal{D}$. The details are provided in Appendix \ref{app:thm_optimization_problem}.
\end{proof}

\begin{rem}
It is well known that the fundamental DMTs of the $(m,n)$ and the $(n,m)$ point-to-point MIMO channels are identical. From \eqref{dmt-fd-rc} it is also clear that the DMTs of the $(m,k,n)$ and the $(n,k,m)$ MIMO FD-RCs are identical. The above theorem proves that this {\em reciprocity} property of DMT extends to the MIMO HD-RC as well as can be seen from the symmetry in $m$ and $n$ of the optimization problem of \eqref{general_opt}. In other words, the fundamental DMTs on the $(m,k,n)$ and $(n,k,m)$ MIMO HD-RCs are identical. Henceforth, we let $m \geq n$ without loss of generality.
\end{rem}

Note that $\sum_{i=1}^{u}\alpha_i$, $\sum_{j=1}^{p}\beta_j$ and $\sum_{l=1}^{q}\delta_l$ are affine functions of the $\alpha_i$'s, $\beta_j$'s and $\delta_l$'s, respectively. Furthermore, by computing its Hessian, it can be easily proved that the function $\frac{(p-x)(q-y)}{(p-x)+(q-y)}$ is convex with respect to $x$ and $y$. Further, since $f(g(V))$ is convex whenever $f(.)$ is convex and $g(.)$ is affine, it is evident that $\frac{(p-\sum_{j=1}^{p}\beta_j)(q-\sum_{l=1}^{q}\delta_l)}{(p-\sum_{j=1}^{p}\beta_j)+(q-\sum_{l=1}^{q}\delta_l)}$ is a convex function. Hence the left hand side of the inequality constraint \eqref{mi_constraint} is a convex function. It is also clear that the objective function in \eqref{eq_objective_func} is convex and that all the other constraints \eqref{redun1}-\eqref{redun4} are convex. Thus the optimization problem of Theorem~\ref{thm_optimization_problem} is a convex optimization problem and it can be solved using convex programming methods \cite{BV}.

The number of variables in the optimization problem in \eqref{general_opt} however, increases with $m, k$ and $n$ linearly. In what follows, we show that the problem can be simplified to an optimization problem with only two variables, independently of $m, k$ and $n$.

\begin{figure}[htp]
  \begin{center}
    \subfigure[Cases where DMTs of FD- and HD-RCs are identical]{\label{fig_dmt_comparison1-a}\includegraphics[scale=0.5]{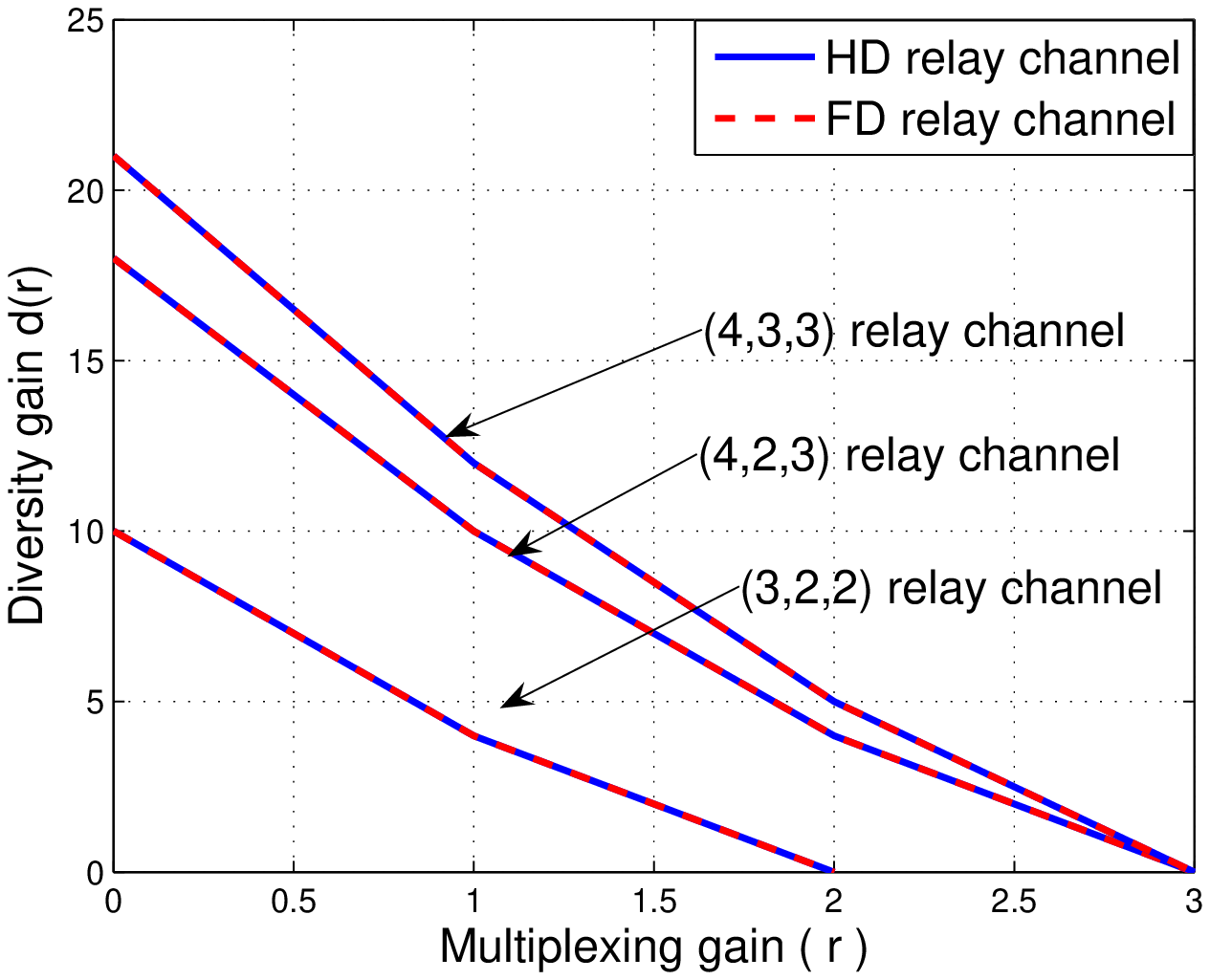}}
    \subfigure[Cases where DMTs of FD- and HD-RCs are not identical]{\label{fig_dmt_comparison1-b}\includegraphics[scale=0.5]{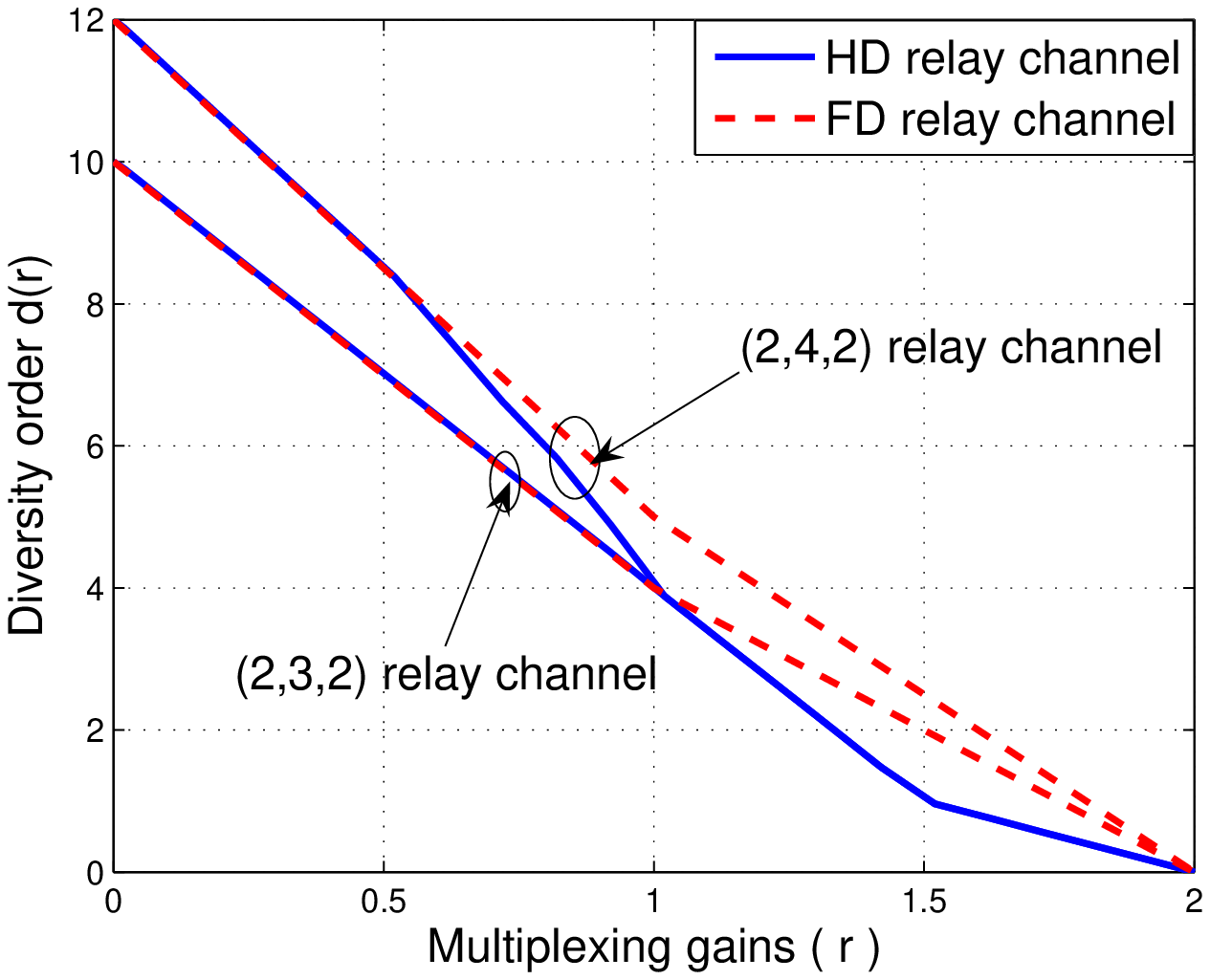}}
  \end{center}
  \caption{DMTs Comparisons for MIMO HD- vs. FD-RCs.}
  \label{fig_dmt_comparison1}
\end{figure}

\begin{thm}
\label{thm_simplified_optimization}
The fundamental diversity-muliplexing tradeoff of the $(m,k,n)$ HD-RC is given as
\begin{IEEEeqnarray}{l}
d^*(r)=\min_{\left\{a\in \mathcal{R},~b\in \mathcal{B} \right\}} F\left(\phi_{{\alpha}}(a),\phi_{{\beta}}(b),\phi_{{\delta}}\left(\frac{b(r-a)}{(b-r+a)}\right)\right),
\end{IEEEeqnarray}
where the interval $\mathcal{R}$ is specified in equation~\eqref{eq_region_R} in Appendix~\ref{pf_thm_simplified_opt}, $\mathcal{B}=\left[\frac{s_m(r-a)}{(s_m-r+a)}, b_m\right]$, $b_m=\min \{p,(m-a)\}$, $s_m=\min \{q,(n-a)\}$ and $\phi_i$'s are as defined in equations \eqref{eq_phi_alpha}-\eqref{eq_phi_gamma} in Appendix~\ref{pf_thm_simplified_opt}.
\end{thm}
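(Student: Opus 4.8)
The plan is to reduce the convex program of Theorem~\ref{thm_optimization_problem}, whose number of variables grows linearly in $m,k,n$, to a minimization over only the two aggregate ``degrees of freedom'' $a=\sum_{i=1}^{u}(1-\alpha_i)$ and $b=\sum_{j=1}^{p}(1-\beta_j)$. The starting observation is that both the rate constraint~\eqref{mi_constraint} and the \emph{separable} part of the objective~\eqref{eq_objective_func} see the three vectors $\bar\alpha,\bar\beta,\bar\delta$ only through the three scalars $a$, $b$ and $c=\sum_{l=1}^{q}(1-\delta_l)$, which are the effective DoFs on the direct, source--relay and relay--destination links. I would therefore perform the minimization in two nested stages: an \emph{inner} minimization of $F$ over the ``shapes'' of $\bar\alpha,\bar\beta,\bar\delta$ for fixed aggregates $(a,b,c)$, followed by an \emph{outer} minimization over the scalars $(a,b,c)$.

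For the inner stage I would show that, for fixed $(a,b,c)$, the minimizing vectors are exactly the piecewise-linear (Zheng--Tse type) profiles $\phi_{\alpha}(a)$, $\phi_{\beta}(b)$ and $\phi_{\delta}(c)$ of~\eqref{eq_phi_alpha}--\eqref{eq_phi_gamma}. The mechanism is that the coefficients of $\alpha_i,\beta_j,\delta_l$ in~\eqref{eq_objective_func} are affine and monotone in the index, so that, under the ordering constraints~\eqref{redun2}--\eqref{redun4} and a fixed coordinate sum, each separable linear part is minimized by loading the available DoF onto the smallest-coefficient coordinates, which yields the staircase profiles $\phi$. The delicate point, and what I expect to be the main obstacle, is that $F$ is \emph{not} fully separable: the penalty terms $\sum_{i+j\le m}(1-\alpha_i-\beta_j)^{+}$ and $\sum_{i+l\le n}(1-\alpha_i-\delta_l)^{+}$ couple $\bar\alpha$ to $\bar\beta$ and to $\bar\delta$, and the constraints~\eqref{redun1}--\eqref{redun12} impose the further coupling $\alpha_{m-j+1}+\beta_j\ge 1$ and $\alpha_{n-l+1}+\delta_l\ge 1$. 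Hence it is not a priori clear that the optimal shape of each vector depends only on its own aggregate. The crux is thus to verify that the decoupled triple $\bigl(\phi_{\alpha}(a),\phi_{\beta}(b),\phi_{\delta}(c)\bigr)$ is \emph{jointly} feasible for the coupling constraints and that no perturbation preserving $(a,b,c)$ lowers $F$; this legitimizes replacing the full vectors by the three scalars together with the fixed $\phi$ maps. I would carry out this exchange/optimality argument (a monotonicity or KKT verification) in Appendix~\ref{pf_thm_simplified_opt}, where the bulk of the technical work resides.

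After the inner minimization the objective becomes $\tilde F(a,b,c)=F\bigl(\phi_{\alpha}(a),\phi_{\beta}(b),\phi_{\delta}(c)\bigr)$ and the rate constraint reads $a+\tfrac{bc}{b+c}\le r$. I would then argue, via monotonicity of $\tilde F$ in each aggregate (a configuration with larger DoFs is ``closer to no-fade'', hence more probable and carries a smaller exponent), that the rate constraint~\eqref{mi_constraint} is active at the optimum, $a+\tfrac{bc}{b+c}=r$. Solving this for $c$ gives $c=\tfrac{b(r-a)}{b-r+a}$, which is precisely the argument fed to $\phi_{\delta}$ in the statement, and eliminates the third scalar, leaving the two variables $a$ and $b$. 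It remains to pin down their feasible ranges: the domain of $\phi_{\alpha}$ yields $a\in\mathcal{R}$ (cf.~\eqref{eq_region_R}); the requirement $c\le s_m=\min\{q,n-a\}$, forced by $c\le q$ together with the $\alpha$--$\delta$ coupling, gives, upon substituting $c=\tfrac{b(r-a)}{b-r+a}$ and using that $c$ is decreasing in $b$, the lower endpoint $\tfrac{s_m(r-a)}{s_m-r+a}$ of $\mathcal{B}$; and $b\le b_m=\min\{p,m-a\}$, from $b\le p$ together with the $\alpha$--$\beta$ coupling, gives the upper endpoint. This produces exactly the two-variable problem asserted in the theorem.
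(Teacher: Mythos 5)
Your proposal is correct and follows essentially the same route as the paper's proof: the inner Zheng--Tse-type shape optimization yielding the $\phi$ profiles together with the verification that they automatically satisfy the coupling constraints \eqref{redun1}--\eqref{redun12}, followed by activating the rate constraint to eliminate the third aggregate and reading off the ranges $\mathcal{R}$ and $\mathcal{B}$ from feasibility of the resulting curve in the $(b,s)$ box. The only imprecision is attributing $a\in\mathcal{R}$ to ``the domain of $\phi_\alpha$''; in the paper $\mathcal{R}$ arises as the set of $a$ for which the active-constraint hyperbola meets the box $[0,b_m]\times[0,s_m]$, and excluding $a\in\mathcal{R}^c$ requires the additional monotonicity-in-$a$ argument of Claim~\ref{cl_a_plus_b_equal_1}, which your monotonicity remarks essentially supply.
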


\begin{proof}[Proof]
The proof is given in Appendix~\ref{pf_thm_simplified_opt}.
\end{proof}

\begin{ex}
We illustrate the advantage of relaying relative to point-to-point communication by considering the networks $\textrm{CN}_1$ and $\textrm{CN}_2$ of Fig. \ref{cooperative-networks}(a) and Fig. \ref{cooperative-networks}(b). Fig. \ref{scenario1-a} applies to the uplink of $\textrm{CN}_1$ (in which $m = k < n$) and depicts the DMT performance of the relay channel with respect to that achievable on the corresponding point-to-point channel. Similarly, Fig. \ref{scenario1-b} applies to the uplink of $\textrm{CN}_2$ (in which $m < k < n$). These figures clearly demonstrate the superior performance of cooperative MIMO over point-to-point MIMO communication.

\end{ex}

\begin{figure}[htp]
  \begin{center}
    \subfigure[Relay channels of $\textrm{CN}_1$]{\label{scenario1-a}\includegraphics[scale=0.5]{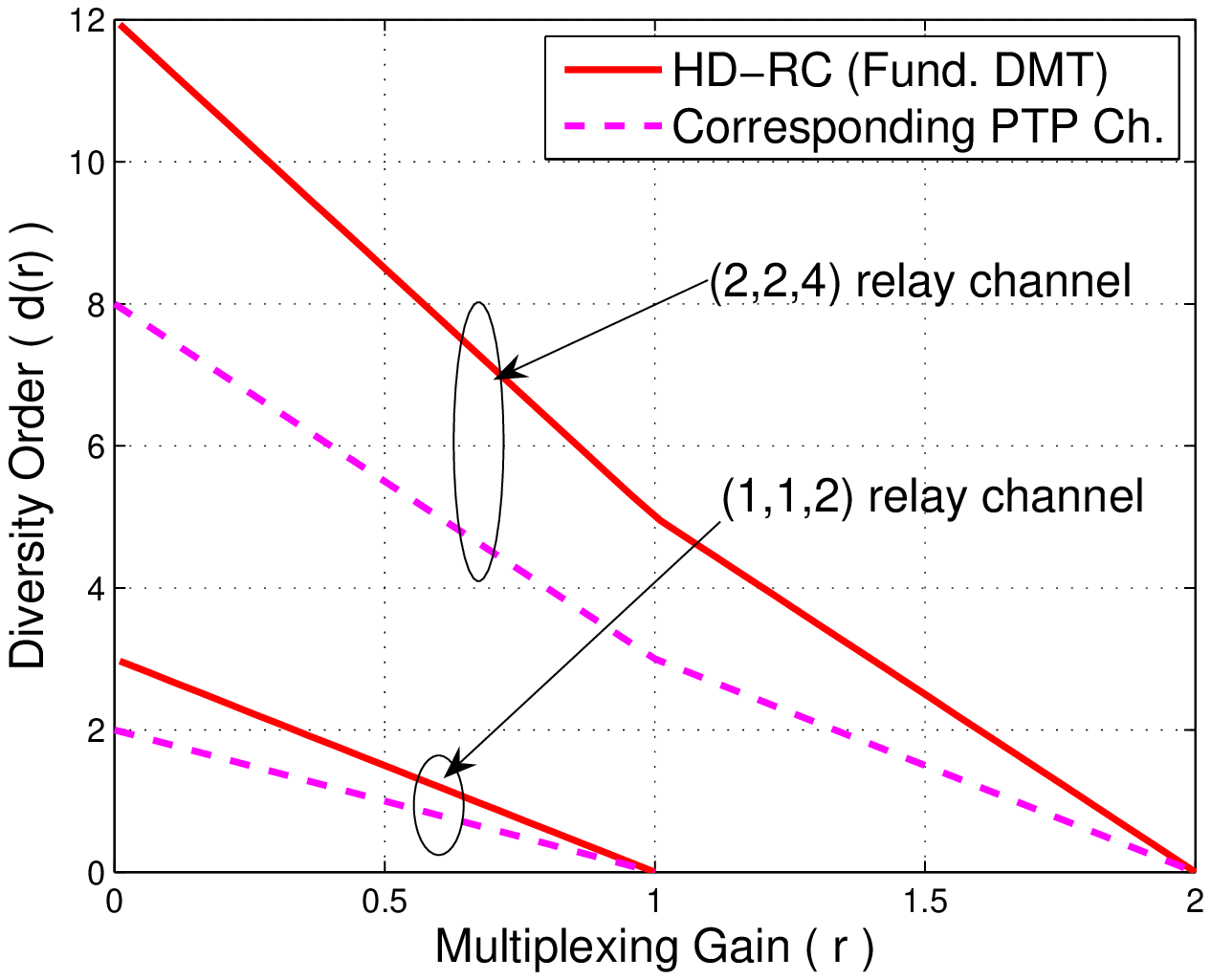}}
    \subfigure[Relay channels of $\textrm{CN}_2$]{\label{scenario1-b}\includegraphics[scale=0.5]{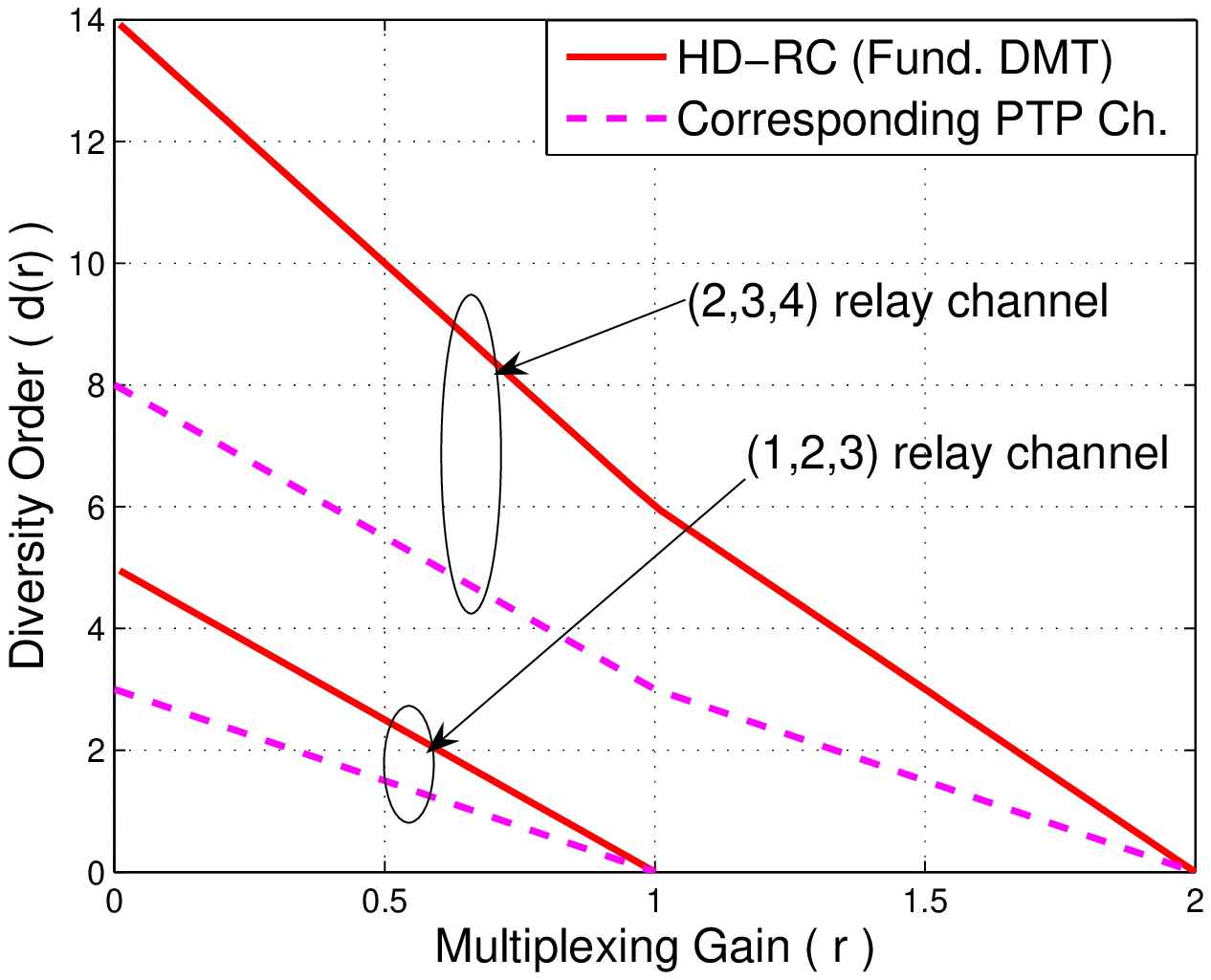}}
  \end{center}
  \caption{The fundamental DMTs of the MIMO HD-RC and the corresponding point-to-point MIMO channel.}
  \label{scenario1}
\end{figure}

\begin{rem}
\label{rem:fd=hd}
The explicit numerical computation of the fundamental DMT reveals several interesting characteristics of the MIMO HD-RC. For example, for the class of $(m,k,n)$ HD-RCs where $m>n\geq k$ we found that the DMT is identical to that of the corresponding MIMO FD-RC. It appears to be difficult however to show this analytically.  Note that this scenario applies to the downlink of the two networks $\textrm{CN}_1$ (with $m>n=k$) and $\textrm{CN}_2$ (with $m>n>k$) of Fig. \ref{cooperative-networks}(a) and Fig. \ref{cooperative-networks}(b), respectively.  Fig. \ref{fig_dmt_comparison1-a} illustrates this fact for a few specific examples of MIMO RCs. Thus, for the class of MIMO HD-RCs for which $m>n\geq k$, the half-duplex constraint does not appear to be restrictive in terms of DMT performance. In general however, the MIMO HD-RC has different DMT characteristics than the MIMO FD-RC. For instance, see Fig. \ref{fig_dmt_comparison1-b} (which is relevant for the sensor network $\textrm{CN}_3$ of Fig. \ref{cooperative-networks}(c)). This will be further discussed in the next section.
\end{rem}
\begin{conj}
For the class of $(m,k,n)$ HD-RCs for which $m>n\geq k$ the DMT is equal to that of the corresponding MIMO FD-RC.
\end{conj}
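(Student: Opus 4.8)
The plan is to establish the reverse of the already-known bound $d^*(r)\le d_{FD}(r)$, where $d_{FD}(r)$ is the FD-RC tradeoff \eqref{dmt-fd-rc}. First I would collapse the minimum in \eqref{dmt-fd-rc}. For $m>n\ge k$ one has $u=\min\{m,n\}=n$ and $p=q=k$, and both point-to-point curves $d^{ptp}_{(m+k),n}$ and $d^{ptp}_{m,(n+k)}$ lie in their linear regime on all of $0\le r\le n$ (since $n\le\min\{m+k,n\}$ and $n\le\min\{m,n+k\}$). On each integer subinterval $[i,i+1]$ both have slope $-(m+n+k-2i-1)$, so their difference is the constant $d^{ptp}_{m,(n+k)}(0)-d^{ptp}_{(m+k),n}(0)=m(n+k)-(m+k)n=k(m-n)\ge 0$. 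Hence $d_{FD}(r)=d^{ptp}_{(m+k),n}(r)$ throughout, the destination cut dominates, and it suffices to prove $d^*(r)\ge d^{ptp}_{(m+k),n}(r)$.

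Next I would cast this as a comparison of two outage regions inside the convex program of Theorem~\ref{thm_optimization_problem}. Writing $R_\alpha=\sum_{i=1}^{n}(1-\alpha_i)^+$, $R_\beta=\sum_{j=1}^{k}(1-\beta_j)^+$ and $R_\delta=\sum_{l=1}^{k}(1-\delta_l)^+$, the constraint \eqref{mi_constraint} reads $R_\alpha+\tfrac{R_\beta R_\delta}{R_\beta+R_\delta}\le r$. A key preliminary observation is that on the feasible set (where $\alpha_i\le 1$) the objective \eqref{eq_objective_func} coincides with the joint exponent $E(\bar\alpha,\bar\beta,\bar\delta)$ of Theorem~\ref{thm:eigenvalue-exponent-distribution}. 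Consequently $d^{ptp}_{(m+k),n}(r)$ equals the minimum of \eqref{eq_objective_func} over the destination-cut region $\{R_\alpha+R_\delta\le r\}$: this is the point-to-point outage exponent of the combined channel $H_{SR,D}=[H_{SD}\ H_{RD}]\in\mathbb{C}^{n\times(m+k)}$, whose mutual information splits as $R_\alpha+R_\delta$ and whose Gram matrix satisfies $H_{SR,D}H_{SR,D}^\dagger=H_{SD}H_{SD}^\dagger+H_{RD}H_{RD}^\dagger$. Since $\tfrac{R_\beta R_\delta}{R_\beta+R_\delta}\le R_\delta$, the half-duplex region $\{r^*\le r\}$ of \eqref{eq:asymptotic-rate-expression} contains $\{R_\alpha+R_\delta\le r\}$, which re-proves $d^*(r)\le d^{ptp}_{(m+k),n}(r)$; the whole task therefore reduces to showing that the \emph{extra} configurations, those with $R_\alpha+R_\delta>r$ but $r^*\le r$, cannot make the objective smaller than $d^{ptp}_{(m+k),n}(r)$.

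The mechanism I would exploit is that an extra configuration satisfies $r^*\le r$ only by driving the harmonic mean strictly below $R_\delta$, which forces $R_\beta$ to be small, i.e. the source–relay exponents $\bar\beta$ to be large. I would quantify its price: for fixed $\bar\alpha$, minimizing the $\beta$-dependent part $\sum_{j=1}^{k}(k+m-2j+1)\beta_j+\sum_{i+j\le m}(1-\alpha_i-\beta_j)^+$ of \eqref{eq_objective_func} under a prescribed cap on $R_\beta$ yields a convex, decreasing function whose marginal slope is governed by the large coefficients $(k+m-2j+1)$. Because $m>n\ge k$ makes the source–relay link a $k\times m$ channel with many more transmit than receive antennas (hence of high diversity and expensive to drive into outage), this marginal price of suppressing $R_\beta$ exceeds the marginal saving obtainable in the $(\bar\alpha,\bar\delta)$-part by relaxing $R_\delta$ through the harmonic mean. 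Carrying the comparison through should show the minimizer can be taken with the harmonic term inactive, collapsing the extra region onto the destination-cut region and giving $d^*(r)\ge d^{ptp}_{(m+k),n}(r)$.

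I expect this quantitative trade-off to be the main obstacle, and it is presumably what has kept the statement a conjecture. Both \eqref{eq_objective_func} and \eqref{mi_constraint} are piecewise linear, with kinks governed by the index conditions $i+j\le m$, $i+l\le n$ and by the orderings $0\le\alpha_1\le\cdots$, so a clean global comparison requires either a careful case analysis of which positive-part terms are active (further split according to whether $m\ge n+k$ or $m<n+k$), or a KKT analysis of the two-variable reduction of Theorem~\ref{thm_simplified_optimization} verifying that the optimal inner variable $b$ is pushed to the endpoint of $\mathcal{B}$ at which the harmonic term vanishes. A cleaner but more delicate alternative would bypass the harmonic mean altogether: using eigenvalue interlacing for $H_{SD}H_{SD}^\dagger+H_{RD}H_{RD}^\dagger$, show that the relay–destination exponents can be redistributed onto the combined channel $H_{SR,D}$ without increasing the objective, thereby matching the point-to-point exponent of the $(m+k)\times n$ channel term by term.
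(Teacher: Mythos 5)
There is a genuine gap, and it is worth noting at the outset that the paper itself offers no proof of this statement: it is stated explicitly as a conjecture, and Remark~\ref{rem:fd=hd} says only that the equality was \emph{observed} from numerical evaluation of the two-variable problem of Theorem~\ref{thm_simplified_optimization} and that "it appears to be difficult to show this analytically." Your preliminary reductions are correct and useful: for $m>n\geq k$ the two point-to-point curves in \eqref{dmt-fd-rc} differ by the constant $k(m-n)\geq 0$ at every integer $r\in[0,n]$ (and hence everywhere, since both are piecewise linear with the same breakpoints), so $d_{FD}(r)=d^{ptp}_{(m+k),n}(r)$; and since $\tfrac{R_\beta R_\delta}{R_\beta+R_\delta}\leq R_\delta$, the destination-cut outage region is contained in the half-duplex outage region, which recovers $d^*(r)\leq d^{ptp}_{(m+k),n}(r)$. (Even here, the identification of $d^{ptp}_{(m+k),n}(r)$ with the minimum of \eqref{eq_objective_func} over $\{R_\alpha+R_\delta\leq r\}$ with $\bar\beta$ unconstrained is asserted rather than proved; it amounts to showing that integrating out $\bar\beta$ from $E(\bar\alpha,\bar\beta,\bar\delta)$ recovers the marginal exponent of the eigenvalues of $W_1$ and $W_3$, and that this marginal reproduces the outage exponent of the $n\times(m+k)$ channel $H_{SR,D}$. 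This is plausible but requires an argument.)

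The actual content of the conjecture is the converse inequality, and that is exactly where your proposal stops being a proof. The paragraph beginning "The mechanism I would exploit" asserts that the marginal cost of suppressing $R_\beta$ (driven by the coefficients $(k+m-2j+1)$ of the $k\times m$ source--relay link) exceeds the marginal saving obtainable in the $(\bar\alpha,\bar\delta)$ part by relaxing $R_\delta$ through the harmonic mean, and concludes with "carrying the comparison through should show." No such comparison is carried out: there is no computation of the two marginal rates, no identification of which positive-part terms $(1-\alpha_i-\beta_j)^+$ and $(1-\alpha_i-\delta_l)^+$ are active at the candidate optimum, and no treatment of the case split $m\geq n+k$ versus $m<n+k$ that you yourself flag as necessary. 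This quantitative trade-off is precisely the step that the authors could not complete, so the heuristic that the source--relay link is "expensive to drive into outage" cannot be accepted as a substitute for it. As it stands, your proposal is a credible reduction of the conjecture to a sharper, more concrete optimization claim, together with an accurate diagnosis of why that claim is hard --- but it does not prove the statement, and neither does the paper.
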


\section{Closed form expressions for the DMTs of a few classes of relay channels}
\label{sec:closed-form-expressions}
A closed form expression of the DMT would provide more insights about the system than a numerical solution. Motivated by this fact, we next provide closed-form solutions for the fundamental DMT of special classes of MIMO HD-RCs specified by the relationship between the numbers of antennas at the three nodes, including the $(n,k,n)$ (henceforth, called symmetric since $m=n$) HD-RC.

\begin{thm}
\label{thm_closedform_sol_symmetric_case}
The optimal diversity order $d^*(r)$, at a multiplexing gain of $r$, of the $(n,k,n)$ HD-RC is upper bounded by $d_U^s(r)$ (defined below) as
\begin{equation}
\label{eq_dmt_symmetric}
d^*(r)\leq d_U^s(r) \triangleq \left\{\begin{array}{c}
\min_{\{1\leq i\leq (3+p)\}}d_{U_i}(r), ~~k\leq n,\\
\min_{\{1\leq i\leq (3+2p)\}}d_{U_i}(r), ~~k \geq n,
\end{array}\right.
\end{equation}
where recall that $p=\min\{m,k\}=\min\{k,n\}$ (since $m=n$) and for $N=1, \cdots , p $, and recalling that $d_{(n_t,n_r)}(\cdot)$ represents the fundamental
DMT of a MIMO point-to-point channel with $n_t$ transmit and $n_r$ receive antennas, we define
\begin{IEEEeqnarray}{l}
d_{U_1}(r)=d_{(n,n+k)}(r), \quad {\rm for} \quad 0\leq r\leq n,\\
\label{eq_dmt_symmetric_bound2}
d_{U_2}(r)=d_{(2n,2n)}(2r), \quad {\rm for} \quad n-\frac{p}{2}\leq r\leq n,\\
\label{eq_dmt_symmetric_bound3}
d_{U_3}(r)=n^2+\sum_{l=1}^{p}(n+k-2l+1)\left(1-\left(\frac{pr}{(p-r)}-l+1\right)^+\right)^+,
\quad {\rm for} \quad 0\leq r\leq \frac{p}{2},\\
d_{U_{(3+N)}}(r)=N^2+d_{(n-N),(n+2k-N)}\left(r-\frac{N}{2}\right),
\quad {\rm for} \quad \frac{N}{2}\leq r\leq \min \left\{n-\frac{N}{2}, n-\frac{N^2}{(2p-N)} \right\},\\
\label{eq_dmt_symmetric_bound_32p}
d_{(3+p+N)}(r)=\sum_{i=1}^{(n-N)}(2n+k-N-2i+1)\left(1-\left(a_N-i+1\right)^+\right)^++N^2,
\; {\rm for} \; \frac{Nn}{(N+n)}\leq r\leq n-\frac{N}{2},
\end{IEEEeqnarray}
and $a_N$ is given by equation \eqref{eq_exp_for_a_N} in Appendix \ref{app-E}.
\end{thm}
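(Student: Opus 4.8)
\emph{Plan.} The starting point is the convex characterization of Theorem~\ref{thm_optimization_problem}: $d^*(r)=\min F(\bar\alpha,\bar\beta,\bar\delta)$ over the feasible set defined by \eqref{mi_constraint}--\eqref{redun4}, with $F$ as in \eqref{eq_objective_func}. Specializing to the symmetric case $m=n$ gives $u=n$ and $p=q=\min\{n,k\}$, and renders $F$ together with the constraints symmetric under interchange of $\bar\beta$ and $\bar\delta$. The key observation is that restricting the minimization to any structured subfamily of feasible points produces a value no smaller than $d^*(r)$; hence each such restricted minimum is a valid upper bound, and $d_U^s(r)$ in \eqref{eq_dmt_symmetric} is simply the pointwise minimum of finitely many such bounds. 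I would therefore produce each $d_{U_i}(r)$ by fixing an \emph{outage mode} (an ansatz that freezes part of $(\bar\alpha,\bar\beta,\bar\delta)$) and minimizing $F$ over the residual freedom, recognizing the residual problem as a point-to-point (Zheng--Tse) DMT optimization whose solution is known in closed form.

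\emph{Prototype computation ($d_{U_3}$).} First I would treat the mode in which the direct link is entirely in outage. Setting $\bar\alpha=\mathbf 1$ makes $\sum_i(1-\alpha_i)^+=0$, annihilates every cross term $(1-\alpha_i-\beta_j)^+$ and $(1-\alpha_i-\delta_l)^+$ in \eqref{eq_objective_func}, and satisfies \eqref{redun1}--\eqref{redun12} automatically; a short calculation collapses the $\bar\alpha$-dependent part together with the constant $-2ku$ to exactly $n^2$. With the direct term removed, the rate constraint \eqref{mi_constraint} reduces to the harmonic-mean condition on the two relay links. Loading the source--relay link maximally ($\bar\beta=\mathbf 0$, so $\sum_j(1-\beta_j)^+=p$) turns this condition into an effective multiplexing gain $\tfrac{pr}{p-r}$ for the relay--destination link, and minimizing the residual $\sum_{l}(n+k-2l+1)\delta_l$ subject to $\sum_l(1-\delta_l)^+\le \tfrac{pr}{p-r}$ is precisely the DMT of a $\big(p,(n+k-p)\big)$ point-to-point channel. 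This yields $d_{U_3}(r)=n^2+d_{(p,\,n+k-p)}\!\big(\tfrac{pr}{p-r}\big)$, matching \eqref{eq_dmt_symmetric_bound3}; the feasibility bound $\tfrac{pr}{p-r}\le q=p$ reproduces the stated range $0\le r\le p/2$.

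\emph{The remaining modes.} The other bounds follow the same template with different ans\"atze. The bound $d_{U_1}=d_{(n,n+k)}(r)$ is the full-duplex cut bound, since the HD-RC DMT can never exceed the FD-RC DMT \eqref{dmt-fd-rc}, recovered by the mode in which the source--relay link is unconstrained. The bound $d_{U_2}=d_{(2n,2n)}(2r)$ arises from the symmetric relay loading $\bar\beta=\bar\delta$, for which the harmonic mean equals one half of the common link gain, so the effective multiplexing doubles to $2r$ and the direct and symmetrically loaded relay contributions combine into an effective $2n\times 2n$ channel; feasibility of this mode gives the range $n-\tfrac p2\le r\le n$. Finally, the two parametrized families $d_{U_{3+N}}$ and $d_{(3+p+N)}$, for $N=1,\dots,p$, come from partial outage of the direct link: freezing $N$ of the ordered $\alpha_i$'s at $1$ contributes the term $N^2$, while the remaining $n-N$ direct modes together with the two relay links behave as a reduced point-to-point channel, from which the shift $r-\tfrac N2$ and the reduced dimensions $(n-N)$ and $(n+2k-N)$ emerge after the residual minimization; the split between $k\le n$ (indices up to $3+p$) and $k\ge n$ (indices up to $3+2p$) records which of these families remain feasible, and the auxiliary quantity $a_N$ of \eqref{eq_exp_for_a_N} is the effective multiplexing gain delivered to the reduced channel.

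\emph{Main obstacle.} The routine part is evaluating $F$ once an ansatz is fixed; the delicate part is twofold. First, one must correctly \emph{identify} the outage modes, in particular the two $N$-indexed families and the reason the admissible index set changes between $k\le n$ and $k\ge n$. Second, each residual minimization must be reduced in closed form to a Zheng--Tse DMT, which requires carefully tracking the piecewise-linear $(\cdot)^+$ terms appearing both in the objective \eqref{eq_objective_func} and in the rate constraint \eqref{mi_constraint}, and propagating the harmonic-mean coupling into the effective multiplexing gains ($\tfrac{pr}{p-r}$, $2r$, $r-\tfrac N2$, and $a_N$) while pinning down the exact feasibility windows, including the upper limit $\min\{n-\tfrac N2,\,n-\tfrac{N^2}{2p-N}\}$. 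Once all modes are evaluated, taking their pointwise minimum gives $d^*(r)\le d_U^s(r)$, establishing the theorem.
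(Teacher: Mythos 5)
Your overall strategy is exactly the paper's: each $d_{U_i}$ is obtained by minimizing the objective over a carefully chosen subset of the feasible set, the restricted minimum is an upper bound, and $d_U^s$ is the pointwise minimum of finitely many such bounds. Your prototype computation of $d_{U_3}$ is correct and coincides with the paper's Case~3 (in the $(a,b,s)$ parametrization of Theorem~\ref{thm_simplified_optimization}, it is the restriction $a=0$, $b=p$, with the residual $\bar{\delta}$-problem being the Zheng--Tse problem at effective gain $pr/(p-r)$), and your accounts of $d_{U_1}$ (via the FD-RC bound rather than the paper's restriction $a=r$, $s=0$) and $d_{U_2}$ (symmetric loading $b=s=n-a$) are sound. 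One presentational difference: the paper runs all cases through the two-variable reduction of Theorem~\ref{thm_simplified_optimization} and uses Claim~\ref{cl_a_plus_b_equal_1} to collapse the $(\cdot)^+$ cross terms when $(a+b)=m$ or $(a+s)=n$; working directly with Theorem~\ref{thm_optimization_problem} as you propose is feasible but makes that bookkeeping considerably heavier.

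The genuine gap is in the two $N$-indexed families, which you flag as the delicate part but then misidentify. For $d_{U_{(3+N)}}$ the correct ansatz is \emph{not} to freeze $N$ of the $\alpha_i$'s at $1$: doing so contributes $\sum_{i\in S}(2n+2k-2i+1)$ to the objective, which is not $N^2$, and it produces no $N/2$ shift in the rate constraint. The paper's Case~4 instead freezes the \emph{relay-link} exponents, $b=s=N$ (equivalently, $N$ of the $\hat\beta_j$'s and $\hat\delta_l$'s at $0$ and the remaining $p-N$ at $1$); the harmonic mean then equals $N/2$, forcing $a=r-N/2$, and the $N^2$ and the reduced dimensions $(n-N,\,n+2k-N)$ emerge from combining the frozen $\beta,\delta$ terms with the constant $-2kn$ and the surviving cross terms. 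The second family $d_{(3+p+N)}$ requires a distinct, asymmetric ansatz $b=n-a$, $s=N$, is derived only under $k\geq n$ (which is precisely what generates the split between index sets $3+p$ and $3+2p$ in \eqref{eq_dmt_symmetric}), and yields $a=a_N$ as the root of the quadratic $a+\tfrac{N(n-a)}{N+n-a}=r$. Without these specific restrictions your residual minimizations will not reproduce \eqref{eq_dmt_symmetric_bound_32p} or the feasibility windows $\min\{n-\tfrac N2,\,n-\tfrac{N^2}{2p-N}\}$ and $[\tfrac{Nn}{N+n},\,n-\tfrac N2]$, so as written the proof of the two parametrized bounds does not go through.
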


\begin{proof}[Proof]
The proof is given in Appendix~\ref{app-E}.
\end{proof}

Evaluating the exact DMT of the $(n,k,n)$ relay channel using Theorem~\ref{thm_simplified_optimization} for several values of $n$ and $k$ we found that $d^*(r)=d_U^s(r)$ which leads us to make the following conjecture that the upper bound $d_U^s(r)$ is in fact tight.
\begin{conj}\label{conj:nkn}
On a symmetric $(n,k,n)$ relay channel $d^*(r)=d_U^s(r)$, where $d_U^s(r)$ is given by equation \eqref{eq_dmt_symmetric}.
\end{conj}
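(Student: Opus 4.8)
The plan is to upgrade the inequality $d^*(r)\le d_U^s(r)$ of Theorem~\ref{thm_closedform_sol_symmetric_case} to an equality by establishing the reverse bound $d^*(r)\ge d_U^s(r)$. Since $d^*(r)$ is exactly the value of the convex program of Theorem~\ref{thm_optimization_problem}, equivalently of its two-variable reduction in Theorem~\ref{thm_simplified_optimization}, and each $d_{U_i}(r)$ arises as the value of the objective $F$ of \eqref{eq_objective_func} at an explicit (dominant-outage) feasible allocation, the task reduces to proving that one of these candidate allocations is a global minimizer. I would work throughout with the two-variable form of Theorem~\ref{thm_simplified_optimization} specialized to $m=n$, so that the only free scalars are $a$, the direct-link budget $\sum_i(1-\alpha_i)^+$, and $b$, the relay-side budget, with $\bar\delta$ slaved to $(a,b)$ through $\tfrac{b(r-a)}{b-r+a}$ and the vectors $\bar\alpha,\bar\beta,\bar\delta$ recovered by the waterfilling maps $\phi_\alpha,\phi_\beta,\phi_\delta$. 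A useful simplification of the symmetric case is that $p=q$ and hence $s_m=b_m=\min\{p,n-a\}$, which collapses $\mathcal{B}$ to a single interval whose endpoints carry clear operational meaning.

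First I would solve the inner minimization over $b\in\mathcal B$ for fixed $a$. Because $F\circ(\phi_\alpha,\phi_\beta,\phi_\delta)$ is convex and piecewise smooth in $b$, its minimizer is either the stationary point of the active smooth piece or an endpoint of $\mathcal B=[\,s_m(r-a)/(s_m-r+a),\,b_m\,]$, the two endpoints encoding respectively the regime in which the relay-to-destination cut saturates and the regime in which $b$ is capped at $b_m$. Substituting the optimal $b^*(a)$ yields a reduced one-dimensional convex function $G(a)$, which I would minimize over the interval $\mathcal R$ of \eqref{eq_region_R} by the same stationary-point-versus-endpoint dichotomy. Each combination of (inner interior/endpoint) and (outer interior/endpoint) corresponds to a definite active-constraint pattern of the original program, and carrying the resulting closed-form allocation back through $F$ should reproduce exactly one of the pieces $d_{U_1},\dots,d_{U_{3+2p}}$ on its stated $r$-interval, with the breakpoints in $r$ marking the transitions between patterns. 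Matching these intervals and checking that the pointwise minimum over patterns equals $\min_i d_{U_i}(r)$ would complete the equality.

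A cleaner route to the same conclusion, which I would run in parallel as a cross-check, is to treat the $d_{U_i}$ as guesses for the optimizer and supply a dual certificate. For the convex program of Theorem~\ref{thm_optimization_problem} I would, in each regime, exhibit nonnegative KKT multipliers for the active constraints among \eqref{mi_constraint}--\eqref{redun4} and verify stationarity, in the subgradient sense at the kinks of the $(\cdot)^+$ terms, together with complementary slackness. By convexity any point admitting such a certificate is globally optimal, so this converts ``solve the optimization'' into the more mechanical ``verify a candidate'' and directly yields $d^*(r)=d_{U_i}(r)$ on the relevant piece.

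The main obstacle is the case analysis. Both the objective and the maps $\phi_\alpha,\phi_\beta,\phi_\delta$ are only piecewise smooth: the maps fill up integer numbers of components of $\bar\alpha,\bar\beta,\bar\delta$ as the budgets grow, and each $(\cdot)^+$ in \eqref{eq_objective_func} and \eqref{mi_constraint} contributes its own kink, so the feasible region splits into a combinatorially large family of faces indexed by how many exponents sit at $0$ or $1$ and by the relative sizes of $k$ and $n$, which is the source of the two branches $k\le n$ and $k\ge n$ in \eqref{eq_dmt_symmetric}. Producing a valid stationary allocation, or a consistent multiplier vector, uniformly across every face, and in particular never missing a corner case near the transition multiplexing gains $r=N/2$, $r=Nn/(N+n)$ and $r=p/2$, is exactly the delicate and error-prone bookkeeping that presumably led the authors to leave the statement as Conjecture~\ref{conj:nkn} despite the clean numerical agreement; I expect this enumeration, rather than any single estimate, to be where the real difficulty lies.
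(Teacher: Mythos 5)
The statement you are asked to prove is left as a \emph{conjecture} in the paper: the authors establish only the one-sided bound $d^*(r)\le d_U^s(r)$ (Theorem~\ref{thm_closedform_sol_symmetric_case}), obtained by evaluating the objective $F$ on a handful of hand-picked subsets of the feasible region (the five cases of Appendix~\ref{app-E}), and they explicitly note that ``the union of these sets might not be equal to the feasible set.'' The equality $d^*(r)=d_U^s(r)$ is supported in the paper only by numerical evaluation of the two-variable program of Theorem~\ref{thm_simplified_optimization} for several $(n,k)$ pairs. Your proposed strategy for the missing direction --- show that the global minimizer of the convex program lies in one of those subsets, either by explicitly carrying out the inner-then-outer minimization over $(b,a)$ and matching the resulting active-constraint patterns to the pieces $d_{U_1},\dots,d_{U_{3+2p}}$, or by exhibiting KKT/subgradient certificates at the candidate allocations --- is the natural and, in principle, correct route. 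But as written it is a plan, not a proof: the step that actually does the work (enumerating the faces induced by the $(\cdot)^+$ kinks in \eqref{eq_objective_func} and in the maps $\phi_\alpha,\phi_\beta,\phi_\delta$, producing a valid allocation or multiplier vector on each face for all admissible $r$, $n$, $k$, and verifying that the resulting value is never strictly below $\min_i d_{U_i}(r)$) is deferred, and you acknowledge as much in your final paragraph. Without that enumeration nothing rules out a global minimizer sitting on a face not covered by the five cases of Appendix~\ref{app-E}, which is precisely why the claim remains open.

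Two concrete cautions if you pursue this. First, the pointwise minimum $\min_i d_{U_i}(r)$ of the explicit pieces is what you must match, so a certificate argument has to be run separately on each $r$-subinterval where a different $d_{U_i}$ is active, including the branch split between $k\le n$ and $k\ge n$; a single uniform multiplier vector will not exist across breakpoints. Second, the objective after composition with $\phi_\alpha,\phi_\beta,\phi_\delta$ and the substitution $s=\tfrac{b(r-a)}{b-r+a}$ is convex in $(\bar\alpha,\bar\beta,\bar\delta)$ but its convexity in the reduced variables $(a,b)$ is not verified in the paper (only monotonicity properties are, via Claim~\ref{cl_a_plus_b_equal_1}), so the ``stationary point versus endpoint'' dichotomy you invoke for the inner minimization over $b$ needs its own justification. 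The paper does prove the conjectured equality for the two boundary subclasses $(1,k,1)$ and $(n,1,n)$ (Theorems~\ref{lem:DMT_1k1_channel} and~\ref{lem:DMT_n1n_channel}) by exactly the kind of explicit minimization you describe, which suggests the approach is feasible but also illustrates how much case-specific work each instance requires.
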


\begin{rem}
From the expression of $d_{U_2}(r)$ in equation \eqref{eq_dmt_symmetric_bound2} we see that this particular upper bound does not depend on $k$ for $k\geq n$. Thus, when this bound is active, adding an extra antenna at the relay node does not improve the DMT performance of the channel. This is an interesting difference between the HD- and FD-RCs since for FD-RCs every additional antenna at the relay improves the diversity order for all values of multiplexing gains
(recall \eqref{dmt-fd-rc}). Empirical results show that $d_{U_2}(r)$ is a tight bound for the DMT on the $(n,k,n)$ HD-RC for $r\geq \frac{n}{2}$ and $k\geq \lceil\frac{3n}{2}\rceil$. For example, Fig. \ref{fig_dmt_comparison1-b} illustrates this fact by showing that while adding an extra antenna on the $(2,3,2)$ FD-RC uniformly increases the achievable diversity orders at all multiplexing gains, the achievable diversity order on the corresponding HD-RC does not change for $r\geq 1$.
\end{rem}

\begin{rem}
\label{rem:extra-antenna-at-the-relay}
Another interesting fact revealed by Theorem~\ref{thm_closedform_sol_symmetric_case} is that, as the number of antennas increases at the relay node, the difference in the DMT performance between the FD-RC and the HD-RC increases. From the expression of the upper bound $d_{U_3}(r)$ in equation \eqref{eq_dmt_symmetric_bound3} we see that at a multiplexing gain of $r=\frac{p}{2}$ the diversity order achievable on the HD-RC is upper bounded by $n^2$ but on an FD-RC this is clearly not the case where the diversity order increases with $k$. Fig. \ref{fig_dmt_comparison2-a} demonstrates this phenomenon on a $(2,k,2)$ relay channel which is applicable to the $\textrm{CN}_3$ scenario of Fig. \ref{cooperative-networks}(c). Intuitively, the above phenomenon occurs because as the number of antennas at the relay increases the signal forwarded by the relay node can significantly contribute to enhancing the diversity of the received signal at the destination node and hence the half-duplex constraint becomes increasingly more restrictive (relative to the FD relay) because the relay node can not transmit in the listening phase.

\end{rem}

\begin{figure}[htp]
  \begin{center}
    \subfigure[MIMO HD-RC Vs. MIMO FD-RC.]{\label{fig_dmt_comparison2-a}\includegraphics[scale=0.5]{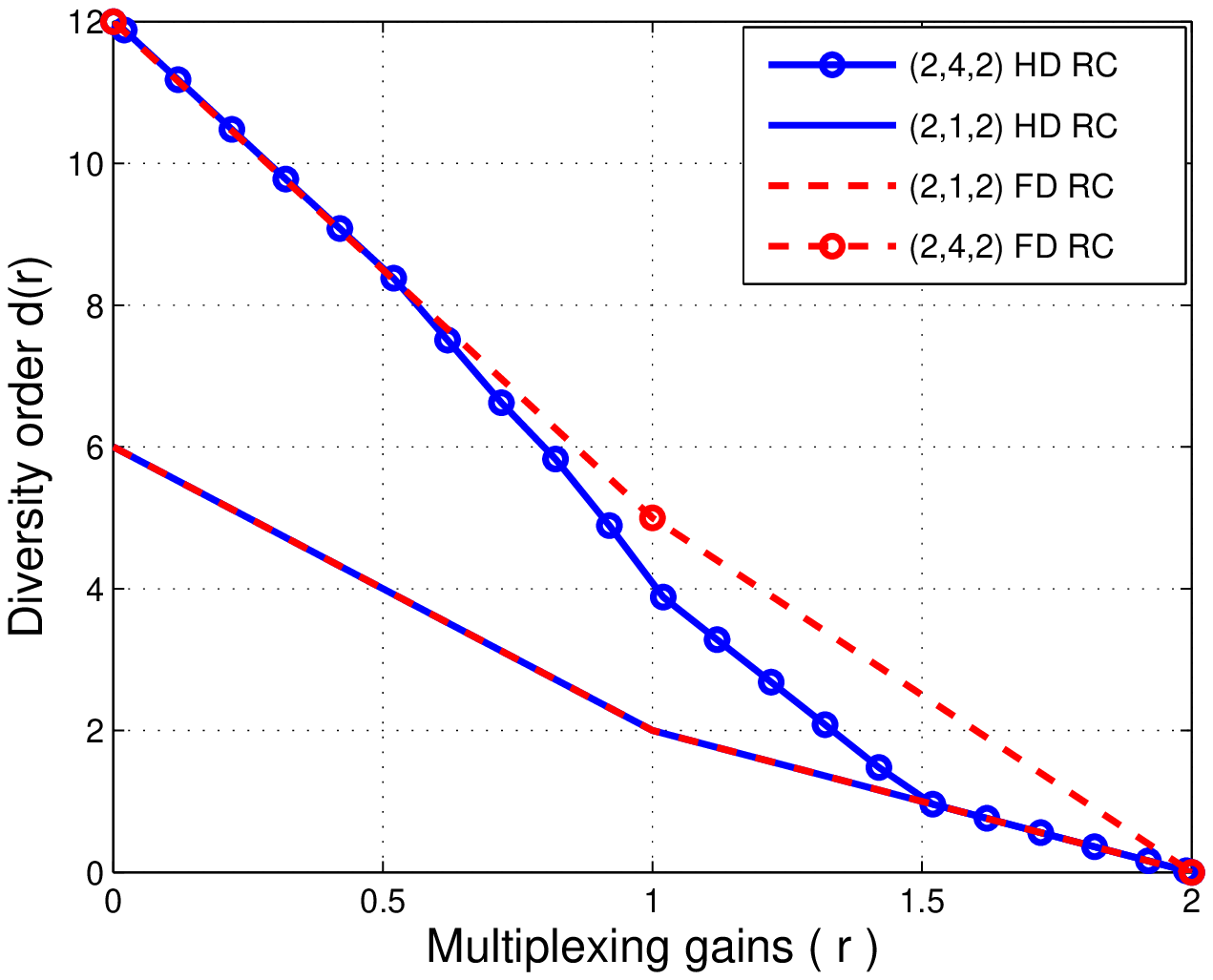}}
    \subfigure[Static Vs. Dynamic HD-RCs.]{\label{fig_dmt_comparison2-b}\includegraphics[scale=0.5]{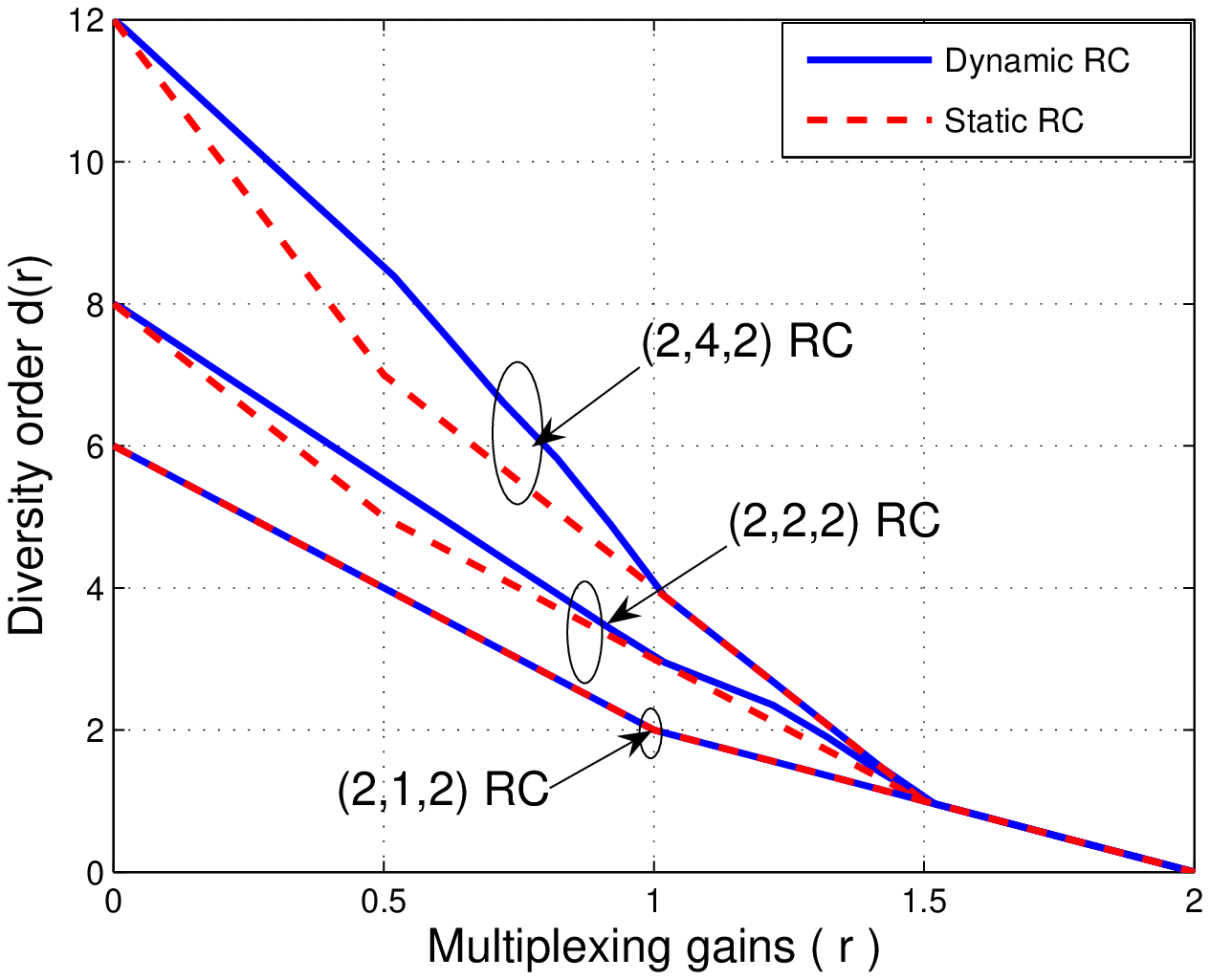}}
  \end{center}
  \caption{A comparison of DMTs on Dynamic vs. Static and HD- vs. FD-RCs.}
  \label{fig_dmt_comparison2}
\end{figure}

\begin{rem}
A similar argument holds for the comparison between the DMT performance of the static and dynamic HD-RCs. In contrast to a static channel, since on a dynamic channel the switching time varies depending on the instantaneous channel matrices, it is expected that a larger number of antennas at the relay node make a bigger difference on the DMT performance. 
\end{rem}

The following result gives an explicit formula for the DMT of the $(1,k,1)$ HD-RC.

\begin{thm}
\label{lem:DMT_1k1_channel}
The optimal DMT of a half-duplex $(1,k,1)$ HD-RC is given as
\begin{equation}
d^*_{(1,k,1)}(r)=\left\{\begin{array}{c}
(k+1)(1-r), ~0\leq r \leq \frac{1}{(k+1)};\\
1+ k \left(\frac{1-2r}{1-r}\right),~\frac{1}{(k+1)}\leq r \leq \frac{1}{2};\\
2(1-r),~\frac{1}{2}\leq r \leq 1.
\end{array}\right.
\label{eq:dmt1k1rc}
\end{equation}
\end{thm}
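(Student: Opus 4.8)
The plan is to specialize the optimization of Theorem~\ref{thm_optimization_problem} to $m=n=1$ and solve it explicitly. With $m=n=1$ we have $u=p=q=1$, so the three exponent vectors collapse to the single scalars $\alpha_1,\beta_1,\delta_1$. In the objective \eqref{eq_objective_func} the two cross sums are empty, since the index conditions $i+j\le m=1$ and $i+l\le n=1$ cannot hold for $i,j,l\ge 1$, leaving
\begin{equation*}
F = (2k+1)\alpha_1 + k(\beta_1+\delta_1) - 2k .
\end{equation*}
The constraints \eqref{mi_constraint}--\eqref{redun4} reduce to the single coupling constraint $(1-\alpha_1)+\tfrac{(1-\beta_1)(1-\delta_1)}{(1-\beta_1)+(1-\delta_1)}\le r$, the support constraints $\alpha_1+\beta_1\ge 1$ and $\alpha_1+\delta_1\ge 1$, and the box constraints $0\le\alpha_1,\beta_1,\delta_1\le 1$.

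First I would change variables to $x=1-\alpha_1$, $y=1-\beta_1$, $z=1-\delta_1$, all in $[0,1]$. Minimizing $F$ is then equivalent to maximizing $G:=(2k+1)x+k(y+z)$, since $F=(2k+1)-G$, subject to $x+\tfrac{yz}{y+z}\le r$, $x+y\le 1$ and $x+z\le 1$. Fixing $x$ (so that $y,z\le 1-x=:M$ and the relay ``rate budget'' is $\rho:=r-x\ge 0$), the inner step is to maximize $y+z$ under $\tfrac{yz}{y+z}\le\rho$. The key observation is that for a fixed sum $y+z$ the quantity $\tfrac{yz}{y+z}$ is smallest at the most unequal split, so the largest $y+z$ compatible with the budget is obtained by saturating one variable at $M$ and shrinking the other. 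A short computation then gives $s_{\max}(x)=\tfrac{M^2}{M-\rho}$ when $\rho\le M/2$ (one link saturated, the constraint tight), and $s_{\max}(x)=2M$, attained at $y=z=M$, when $M/2<\rho<M$.

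Substituting $s_{\max}(x)$ into $G$ reduces the problem to a one-dimensional optimization over $x$. In the symmetric regime one obtains the affine $G(x)=x+2k$, while in the saturated regime $G(x)=(2k+1)x+\tfrac{k}{1-r}(1-x)^2$, which is \emph{convex} in $x$; its maximum over the admissible interval $[\max\{0,2r-1\},\,r]$ is therefore attained at an endpoint. Evaluating at $x=r$ (the relay link off) gives $F=(k+1)(1-r)$; at $x=0$, valid for $r\le\tfrac12$, gives $F=1+k\tfrac{1-2r}{1-r}$; and at $x=2r-1$, valid for $r\ge\tfrac12$ and matching the symmetric regime, gives $F=2(1-r)$. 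Comparing these three candidate values shows that the first two cross exactly at $r=\tfrac{1}{k+1}$ (where both equal $k$), and that the symmetric value $2(1-r)$ dominates for $r\ge\tfrac12$ because $k+1\ge 2$; continuity at $r=\tfrac12$ holds since $1+k\tfrac{1-2r}{1-r}=1=2(1-r)$ there. Assembling the dominating piece over each subinterval reproduces \eqref{eq:dmt1k1rc}.

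I expect the main obstacle to be the inner split optimization. Because the relay's contribution $\tfrac{yz}{y+z}$ to the rate constraint behaves concavely, the budget-feasible set is not convex and the optimizing split is forced to a corner, either fully asymmetric or fully symmetric; the bookkeeping of which corner is active, and correspondingly which endpoint in $x$ maximizes the convex $G(x)$, is exactly what generates the two breakpoints $r=\tfrac{1}{k+1}$ and $r=\tfrac12$. Everything else is a routine evaluation of $F$ at the resulting boundary points.
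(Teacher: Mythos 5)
Your proof is correct and follows essentially the same route as the paper's: after specializing to $m=n=1$, both arguments saturate one relay-link exponent (your $y=M$ is the paper's choice $b=\max\mathcal{B}=1-a$), reduce to a one-dimensional concave/convex endpoint analysis in the source--destination exponent, and compare the same three candidate values arising at $x\in\{0,\,r,\,2r-1\}$. The only cosmetic difference is that you start from the three-variable problem of Theorem~\ref{thm_optimization_problem} and re-derive the two-stage reduction, whereas the paper works directly from the two-variable form of Theorem~\ref{thm_simplified_optimization}.
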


\begin{proof}[Proof]
The proof is given in Appendix~\ref{pf:lem:DMT_1k1_channel}.
\end{proof}

\begin{rem}
A comparison with the DMT of the class of static $(1,k,1)$ HD-RCs (derived in~\cite{OCM}), numerical examples of which are given in Fig. \ref{fig_dmt_1k1_channel-a} and Fig. \ref{fig_dmt_1k1_channel-b}, reveals that the DMT of such static HD-RCs are strictly smaller than that for their
dynamic counterparts for $ r \leq \frac{1}{2}$.
\end{rem}

The next result gives an explicit DMT formula for the class of symmetric HD-RCs with single-antenna relays.
\begin{thm}
\label{lem:DMT_n1n_channel}
The DMT of the $(n,1,n)$ HD-RC is given by a piece-wise linear curve whose corner points at integer values of $r$ are given as
\begin{eqnarray}
\label{eq_dmt_n1n_channel}
d^*_{(n,1,n)}(r) &= & d_{(n+1,n)}(r) \nonumber \\
&=& (n-r)(n+1-r), ~~0\leq r\leq n.
\end{eqnarray}
\end{thm}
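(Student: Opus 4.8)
The plan is to specialize the convex program of Theorem~\ref{thm_optimization_problem} (equivalently its two-variable reduction in Theorem~\ref{thm_simplified_optimization}) to the parameters $m=n$, $k=1$ and to solve it in closed form. With these values $u=\min\{m,n\}=n$, while $p=\min\{m,k\}=1$ and $q=\min\{n,k\}=1$, so that $\bar\beta=\beta_1$ and $\bar\delta=\delta_1$ collapse to scalars. The objective \eqref{eq_objective_func} becomes
\[
F=\sum_{i=1}^{n}(2n+3-2i)\,\alpha_i+n\beta_1+n\delta_1-2n+\sum_{i=1}^{n-1}(1-\alpha_i-\beta_1)^+ +\sum_{i=1}^{n-1}(1-\alpha_i-\delta_1)^+ ,
\]
the multiplexing constraint \eqref{mi_constraint} reads $\sum_{i=1}^{n}(1-\alpha_i)+\frac{(1-\beta_1)(1-\delta_1)}{(1-\beta_1)+(1-\delta_1)}\le r$, and \eqref{redun1}--\eqref{redun4} reduce to $\alpha_n+\beta_1\ge1$, $\alpha_n+\delta_1\ge1$, together with the orderings $0\le\alpha_1\le\cdots\le\alpha_n\le1$ and $0\le\beta_1,\delta_1\le1$.

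The first simplification is to exploit the evident symmetry of both $F$ and the feasible set under the interchange $\beta_1\leftrightarrow\delta_1$. Since the program is convex (as established following Theorem~\ref{thm_optimization_problem}), averaging any optimizer with its $\beta_1\leftrightarrow\delta_1$ image produces a feasible optimizer with $\beta_1=\delta_1$, so one may impose $\beta_1=\delta_1$ without loss. The harmonic-mean term then collapses to $\tfrac12(1-\beta_1)$, which in the language of Theorem~\ref{thm_simplified_optimization} is exactly the condition $c=b$, where $c=\tfrac{b(r-a)}{b-r+a}$ is the $\delta$-variable and $a=\sum_i(1-\alpha_i)^+$, $b=\sum_j(1-\beta_j)^+$; solving $c=b$ gives $b=2(r-a)$. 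The problem thereby reduces to a one-dimensional minimization over $a$ of $F$ evaluated along the optimal profiles $\phi_\alpha,\phi_\beta,\phi_\delta$.

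It then remains to determine the optimal $\alpha$-profile and to minimize over $a$. Because the cost coefficient $2n+3-2i$ of $\alpha_i$ is strictly decreasing in $i$, the cheapest way to realize a prescribed deficit $a$ is to raise the highest-index variables first, subject to $\alpha_1\le\cdots\le\alpha_n$ and the coupling $\alpha_n+\beta_1\ge1$; this is the threshold structure encoded by $\phi_\alpha$. I would substitute this profile, carry out the scalar minimization over $a$, and verify that at each integer $r\in\{0,1,\dots,n\}$ the optimal value equals $(n-r)(n+1-r)$. Finally, since $d^*(r)$ is the value of a convex program in which $r$ enters only through the right-hand side of the linear constraint \eqref{mi_constraint}, it is convex in $r$; checking that it is affine on each interval $[i,i+1]$ with no interior breakpoints identifies it with the piecewise-linear interpolation of its integer corner points, namely $d_{(n+1,n)}(r)$.

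The main obstacle I anticipate is the careful bookkeeping of the $(\cdot)^+$ terms and the active/inactive ordering constraints while characterizing $\phi_\alpha$ and minimizing over $a$: these determine precisely which $\alpha_i$ are driven onto the coupling boundary $\alpha_i+\beta_1=1$ and which stay at zero, and it is getting this case analysis right that pins the value to $(n-r)(n+1-r)$. A useful consistency check is furnished by \eqref{dmt-fd-rc}: the $(n,1,n)$ FD-RC DMT equals $\min\{d^{ptp}_{(n+1),n}(r),d^{ptp}_{n,(n+1)}(r)\}=d_{(n+1,n)}(r)$, which upper-bounds the HD-RC DMT. Hence the optimum cannot exceed $d_{(n+1,n)}(r)$, and only the matching lower bound produced by solving the optimization is substantive.
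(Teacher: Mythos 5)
Your central reduction fails. You impose $\beta_1=\delta_1$ ``without loss'' by averaging an optimizer with its $\beta_1\leftrightarrow\delta_1$ image, which requires the feasible set to be convex in the $(\beta_1,\delta_1)$ directions. It is not: the rate constraint \eqref{mi_constraint} contains the parallel-sum term $\frac{(1-\beta_1)(1-\delta_1)}{(1-\beta_1)+(1-\delta_1)}$, and the function $g(u,v)=\frac{uv}{u+v}$ is \emph{concave} on the positive orthant (its Hessian is $\frac{-2}{(u+v)^3}\left[\begin{smallmatrix} v^2 & -uv\\ -uv & u^2\end{smallmatrix}\right]\preceq 0$), notwithstanding the paper's remark after Theorem~\ref{thm_optimization_problem} that it is convex. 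Concretely, $(\beta_1,\delta_1)=(0,1)$ and $(1,0)$ both make this term zero, while the midpoint $(\tfrac12,\tfrac12)$ makes it $\tfrac14$, so the average of two feasible points can violate \eqref{mi_constraint}. The damage is not hypothetical: the true optimizer is genuinely asymmetric. For $n=1$, $k=1$, $r=0$ the objective is $3\alpha_1+\beta_1+\delta_1-2$; the asymmetric point $\alpha_1=1,\beta_1=0,\delta_1=1$ is feasible and gives the correct value $2=d^{ptp}_{2,1}(0)$, whereas your symmetric slice forces $\beta_1=\delta_1=1$ and yields $3$. Since restricting the feasible set can only increase the minimum, your one-dimensional problem over $a$ with $b=s=2(r-a)$ produces a quantity that is an upper bound on $d^*$ and is strictly larger than $(n-r)(n+1-r)$ at low $r$; it will never evaluate to the claimed answer. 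Your fallback via \eqref{dmt-fd-rc} does not rescue this, because the FD-RC DMT is also only an upper bound on $d^*_{(n,1,n)}(r)$; nothing in your plan supplies the matching lower bound.

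The paper's proof avoids both problems. For the upper bound it restricts the feasible set asymmetrically, imposing $\delta_1=1$ (killing the parallel-sum term entirely) and $\alpha_n+\beta_1=1$, after which eliminating $\beta_1$ collapses the program to the $(n+1)\times n$ point-to-point problem of \cite{tse1}, giving $d^*_{(n,1,n)}(r)\le d^{ptp}_{(n+1),n}(r)$. The lower bound is the substantive half and comes from a different source altogether: the static $(n,1,n)$ HD-RC (whose DMT lower-bounds the dynamic one) is shown by induction on $n$, starting from the optimization problem of \cite{OCM}, to have DMT exactly $d^{ptp}_{(n+1),n}(r)$ (Theorem~\ref{lem:dmt-static-n1n} and Appendix~\ref{pf:lem:dmt-static-n1n}). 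If you want to repair your approach, you must either find the correct asymmetric optimizer of the full program or supply an independent lower bound of this kind; the symmetric ansatz cannot be the starting point.
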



\begin{proof}[Proof]
The proof is given in Appendix~\ref{pf:lem:DMT_n1n_channel}.
\end{proof}

Before proceeding to the next section, we summarize the findings of the previous two sections. The explicit computation of the DMT of the MIMO HD-RC enabled the proof of the existence of classes of HD-RCs whose DMT performances are (a) strictly inferior to that of the corresponding FD-RCs (with certain $m,k,n$) and (b) equal to that of the corresponding FD-RCs (when $m>n\geq k$, see Remark \ref{rem:fd=hd}). Furthermore, closed-form solutions for the two-variable optimization problem are obtained for two classes of symmetric HD-RCs, namely, the $(n,1,n)$ and the $(1,k,1)$ HD-RCs (see Theorems \ref{lem:DMT_1k1_channel} and \ref{lem:DMT_n1n_channel}). More generally, the explicit DMT is upper bounded for the symmetric $(n,k,n)$ HD-RCs and is conjectured to be tight (see Theorem \ref{thm_closedform_sol_symmetric_case} and Conjecture \ref{conj:nkn}). These solutions reveal certain special characteristics of the HD-RC which are different from that of the FD-RC and the static HD-RC. For example, while an extra antenna at the relay node uniformly improves the DMT for an FD-RC this is not the case for the HD-RC. Moreover, the greater the number of antennas at the relay, the greater is the difference between the DMT performances of the FD- and HD-RCs. It is also observed that within the class of HD-RCs, static operation of the relay in general limits performance relative to unconstrained (or dynamic) operation of the relay with the difference in DMT performance becoming more pronounced with an increasing number of antennas at the relay.

\section{Achievability of the DMT without switching time at the relay node}
\label{sec:only-CSIR}

In the previous section we established the fundamental DMT for the MIMO HD-RC. It hence represents the DMT achievable by the best cooperative protocol among all admissible ones with global CSI at both the relay and the destination. The QMF scheme of Section \ref{subsec:lower-bound} however achieves this fundamental tradeoff with just the relay having knowledge of the switching time $t_d^*$ defined in \eqref{eq:cut-set-optimal-switching-time} (with the destination node having global CSI). In this section, we explore the question of whether there are situations in which even the switching time knowledge at the relay is not necessary. Note that the DCF protocol was shown to be optimal from the DMT perspective in \cite{YEE} so that the DMT of the previous section is achievable by this protocol too but it requires global CSI at the relay. In the case of the SISO HD-RC, the QMF scheme of \cite{Avestimehr_Diggavi_Tse} was shown in \cite{PAT} to achieve, with switching time of $1/2$ (and hence without knowledge of switching time at relay), the fundamental DMT of the SISO FD-RC which in turn is an upper bound for the SISO HD-RC, so that dynamic operation of the relay in this case does not help from the DMT perspective. Does this result generalize to MIMO HD-RCs? It turns out that it does in some cases. In particular, we show that for the $(n,1,n)$ MIMO HD-RCs the DMT (given by Theorem \ref{lem:DMT_n1n_channel}) can indeed be achieved by the QMF protocol with a channel independent switching time. In other words, on this class of RCs even global CSI at the relay node does not help in terms of DMT performance, generalizing the SISO HD-RC result of \cite{PAT}. Moreover, for the class of $(1,k,1)$ HD-RCs we show that for all multiplexing gains $r \in [0,1/2]$ the optimal tradeoff curve can be achieved by the dynamic decode-and-forward (DDF) protocol analyzed for the general MIMO HD-RC by the authors in \cite{SKMV_DDF_journal} and for all multiplexing gains $r \in [1/2,1]$ it is achieved by the static QMF protocol and neither of these protocols requires knowledge of $H_{SD}$ and $H_{RD}$ at the relay node.

We shall show that while on the two specific classes of RCs as specified earlier the optimal diversity order at all multiplexing gains can be achieved without any CSI at the relay node, the same is also true in general but only for higher multiplexing gain values (e.g., see Fig. \ref{fig_dmt_comparison2-b}).

\begin{rem}
\label{thm:optimality_of_SCF}
The DMT of the static MIMO HD-RC was obtained in \cite{OCM} with the static CF protocol as the achievability scheme which requires global CSI at the relay. From Theorem~8.5 in \cite{Avestimehr_Diggavi_Tse} however, we have that on the static HD-RC, the QMF protocol, which doesn't require any CSI at the relay node (cf. Section~VIII-A of \cite{Avestimehr_Diggavi_Tse}), can achieve the instantaneous capacity within a constant gap. Since a constant number of bits is insignificant in DMT metric, the QMF protocol can hence achieve the DMT of the static HD-RC.
\end{rem}

\begin{rem}
\label{cor:dmt-when-CSIR-is-sufficient}
If the DMT of the static and dynamic HD-RCs are identical over some range of multiplexing gains then the optimal diversity orders at those values of the multiplexing gains can be achieved without any CSI at the relay node.
\end{rem}

\begin{ex}
\label{ex:dynamic-static-eq-at-high-MG}
From Fig. \ref{fig_dmt_comparison2-b} it is clear that the optimal diversity order achievable on the $(2,2,2)$ and $(2,4,2)$ HD-RCs can be achieved by the QMF protocol without any CSI at the relay for $r\in [1.5,~ 2]$ and $r\in [1,~ 2]$, respectively. For all other values of $r$, the relay node requires the optimal switching time (or $t_d^*$) to achieve the maximum diversity orders achievable by a dynamic protocol.
\end{ex}

We turn our attention now to the MIMO FD-RC. The optimality of the CF protocol in the DMT metric was proved in \cite{YEE} (and the DMT itself was found to be given by \eqref{dmt-fd-rc}) but this protocol requires global CSI as discussed earlier. 

\begin{rem}
\label{lem:optimality_of_FD_CF}
It was shown in \cite{Avestimehr_Diggavi_Tse} that the (full-duplex version of the) QMF protocol can achieve the instantaneous capacity of the FD-RC to within a constant number of bits. Hence, the DMT of the MIMO FD-RC can be achieved by the QMF protocol without any CSI at the relay node and global CSI at the destination node.
\end{rem}

As an immediate application of Remark \ref{thm:optimality_of_SCF} and Theorem \ref{lem:DMT_1k1_channel} we get the following result.

\begin{cor}
\label{thm:No_CSIT_1k1_channel}
The fundamental DMT of the $(1,k,1)$ relay channel can be achieved by the DDF protocol, for multiplexing gains in $\left[0, \frac{1}{2}\right]$ and by the static
QMF protocol for multiplexing gains in the interval $\left[\frac{1}{2}, 1\right]$. While the DDF protocol requires only CSIR, the QMF protocol does not require any CSI at the relay node, i.e., neither the DDF nor the QMF protocol requires global CSI at the relay node.
\end{cor}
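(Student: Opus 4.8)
The plan is to establish the two claims separately over the two multiplexing-gain sub-intervals and then observe that in each regime the achieving protocol uses no more than CSIR at the relay. The key simplification is that Theorem~\ref{lem:DMT_1k1_channel} already supplies the target curve $d^*_{(1,k,1)}(r)$ in closed form, so the whole task reduces to matching the DMT of each candidate protocol to the appropriate piece of that curve.

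For the high-multiplexing regime $r \in [1/2, 1]$, the target is the affine branch $2(1-r)$. First I would invoke Remark~\ref{thm:optimality_of_SCF}, which guarantees that the static QMF protocol attains the DMT of the static $(1,k,1)$ HD-RC with no CSI whatsoever at the relay. It then remains to show that on $[1/2,1]$ the static HD-RC DMT of \cite{OCM} coincides with the dynamic DMT $2(1-r)$ of Theorem~\ref{lem:DMT_1k1_channel}; equivalently (cf. Remark~\ref{cor:dmt-when-CSIR-is-sufficient}), that static and dynamic relaying yield the same diversity order for $r \geq 1/2$. This is a direct comparison of two explicit curves: the dynamic curve's third branch is $2(1-r)$, and I would verify that the static formula of \cite{OCM} reduces to the same segment on this interval (the relay's marginal contribution saturates once the residual rate can be carried by the two point-to-point links alone). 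Since the two agree there, the static QMF is DMT-optimal for the dynamic channel on $[1/2,1]$ and requires no CSI at the relay.

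For the low-multiplexing regime $r \in [0,1/2]$, the target is the two-piece curve $(k+1)(1-r)$ on $[0,1/(k+1)]$ and $1 + k\frac{1-2r}{1-r}$ on $[1/(k+1),1/2]$. Here I would specialize the general MIMO DDF DMT derived by the authors in \cite{SKMV_DDF_journal} to the antenna counts $(m,k,n)=(1,k,1)$. In DDF the relay listens until it can decode the source message, so its switching fraction is governed only by the source--relay ($k\times 1$) mutual information and the transmission rate; consequently the relay needs to know only $H_{SR}$, i.e. CSIR, and neither $H_{SD}$ nor $H_{RD}$. Carrying out the outage/SNR-exponent computation of \cite{SKMV_DDF_journal} for these dimensions, I would show the resulting DDF diversity order equals exactly the first two branches of $d^*_{(1,k,1)}(r)$ on $[0,1/2]$.

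The main obstacle is this second step: reducing the multi-variable DDF DMT expression of \cite{SKMV_DDF_journal} to a closed form for the $(1,k,1)$ channel and confirming it reproduces both sub-branches, including the correct breakpoint at $r=1/(k+1)$. This amounts to optimizing over the (channel-dependent) decoding time of the relay together with the eigenvalue exponents, and identifying the two minimizing regimes with the affine and rational pieces of Theorem~\ref{lem:DMT_1k1_channel}. The remaining claims---no CSI for the static QMF and only-CSIR for DDF---are structural and follow immediately from the definitions of the two protocols, so once the piecewise matching is verified the corollary is complete.
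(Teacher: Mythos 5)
Your proposal follows essentially the same route as the paper: on $[0,\tfrac{1}{2}]$ it matches the DDF DMT of \cite{SKMV_DDF_journal} (which needs only $H_{SR}$ at the relay) to the first two branches of Theorem~\ref{lem:DMT_1k1_channel}, and on $[\tfrac{1}{2},1]$ it matches the static DMT $2(1-r)$ of \cite{OCM} to the dynamic curve and invokes Remarks~\ref{thm:optimality_of_SCF} and \ref{cor:dmt-when-CSIR-is-sufficient} so that static QMF needs no CSI at the relay. The only difference is that what you flag as the ``main obstacle''---specializing the DDF DMT to $(1,k,1)$---is not re-derived in the paper but simply quoted in closed form from \cite{SKMV_DDF_journal}, so your argument is correct and, if anything, does extra work.
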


\begin{proof}[Proof]
Theorem~\ref{lem:DMT_1k1_channel} provides the optimal DMT on a $(1,k,1)$ relay channel. Comparing it with the
DMT of the DDF protocol on this channel derived in \cite{SKMV_DDF_journal} which is restated here for convenience, namely,
\begin{equation*}
d_{(1,k,1)}^{\textrm{DDF}}(r) = \left\{\begin{array}{c}
(k+1)(1-r),  ~0\leq r \leq \frac{1}{(k+1)}; \\
1+ k \left(\frac{1-2r}{1-r}\right),~\frac{1}{(k+1)}\leq r \leq \frac{1}{2};\\
\left(\frac{(1-r)}{r}\right),  ~\frac{1}{2}\leq r\leq 1,
\end{array}\right.
\end{equation*}
it is evident that the fundamental DMT of the $(1,k,1)$ HD-RC can be achieved by the DDF protocol for $0\leq r\leq \frac{1}{2}$. Further this DMT can be achieved by the DDF protocol with only the knowledge of $H_{SR}$ at the relay node.

On the other hand, it was proved in \cite{OCM} that the DMT of the static $(1,k,1)$ HD-RC is $2(1-r)$ for $\frac{1}{2}\leq r\leq 1$. For $\frac{1}{2}\leq r\leq 1$, this DMT can be achieved by the QMF protocol without any CSI at the relay by Remark~\ref{cor:dmt-when-CSIR-is-sufficient}.
\end{proof}

The key enabling result for Corollary \ref{thm:No_CSIT_1k1_channel} beyond the DMTs of the DDF and the static HD-RC is the explicit DMT of the $(1,k,1)$ HD-RC of Theorem \ref{lem:DMT_1k1_channel}. Moreover, to the best of our knowledge, Corollary \ref{thm:No_CSIT_1k1_channel} is the first result on the achievability of the DMT of a non-SISO HD-RC without global CSI at the relay node. This result however requires two different protocols for the two ranges of multiplexing gains. In this sense, the above result doesn't truly generalize the result of \cite{PAT} in which it is shown that the QMF protocol achieves the DMT of the SISO HD-RC.

\begin{ex}
Comparing the DMT of the DDF protocol with the fundamental DMT of the $(1,2,1)$ relay channel depicted in Fig. \ref{fig_dmt_1k1_channel-b}, we see that the DDF protocol is DMT optimal on this channel for a multiplexing gain in the range $[0, \frac{1}{2}]$. Moreover, since the static DMT is strictly smaller than that of the corresponding dynamic channel in this range, the CF and QMF protocols require global CSI and the optimal switching time information at the relay node, respectively, to achieve optimal DMT performance. However, the DDF protocol needs only source-to-relay CSI at the relay node. Clearly, the cooperative protocol of choice in this case (i.e., with $ r \in [0, \frac{1}{2}]$ ) is the DDF protocol.
\end{ex}


In what follows, we identify a class of non-SISO RCs, namely the $(n,1,n)$ HD-RCs, on which the DMT of the channel can be achieved by a single protocol, namely the static QMF protocol with no CSI at the relay node, thereby generalizing the result of \cite{PAT}. This result is shown by proving that the DMTs of the static and dynamic $(n,1,n)$ HD-RCs are identical. In other words, for this class of HD-RCs dynamic operation of the relay does not help from the DMT perspective. We start by first finding in closed form the DMT of the static $(n,1,n)$ HD-RC.

\begin{thm}
\label{lem:dmt-static-n1n}
The DMT of the static and dynamic $(n,1,n)$ HD-RCs are identical, i.e., is given by
\begin{equation}
\label{eq:dmt-static-n1n}
d^{stat}_{(n,1,n)}(r)=d^{*}_{(n,1,n)}(r) = d^{ptp}_{(n+1),n}(r), ~~0\leq r\leq n.
\end{equation}
Hence, the DMT of the $(n,1,n)$ HD-RC can be achieved by the static QMF protocol with no CSI at the relay node (i.e., without the knowledge of even the optimal switching time).
\end{thm}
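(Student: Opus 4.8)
The plan is to prove the two tradeoffs coincide by sandwiching. Since every static protocol is a special case of a dynamic one, the static DMT is upper bounded by the dynamic DMT, which Theorem~\ref{lem:DMT_n1n_channel} already identifies as $d^{ptp}_{(n+1),n}(r)$. It therefore remains to establish the matching lower bound, i.e.\ that a static QMF scheme with a \emph{channel-independent} switching time achieves $d^{ptp}_{(n+1),n}(r)$. Guided by the $m=n$ symmetry of the channel I would fix $t_d=\tfrac12$. By the constant-gap guarantee of QMF (equation~\eqref{eq:lower-bound-a}) together with part~(ii) of Lemma~\ref{lem:cut-set-upper-bound} — exactly as in Subsection~\ref{subsec:lower-bound} but with $t_d=\tfrac12$ in place of $t_d^*$ — the static outage exponent equals the negative SNR exponent of $\Pr\{R_U(\tfrac12)<r\log\rho\}$. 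Since $\log L_{SD}\dot{=}A\log\rho$, $\log(L_{S,RD}/L_{SD})\dot{=}B\log\rho$, $\log(L_{SR,D}/L_{SD})\dot{=}D\log\rho$ with $A=\sum_i(1-\alpha_i)^+$, $B=(1-\beta)^+$, $D=(1-\delta)^+$ (here $p=q=1$), and $R_U(\tfrac12)\dot{=}[A+\tfrac12\min\{B,D\}]\log\rho$, this reduces to computing $\min F$ over $\mathcal S\cap\{A+\tfrac12\min\{B,D\}\le r\}$, with $F$ and $\mathcal S$ as in Theorem~\ref{thm_optimization_problem}.

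The crucial structural fact is that, for $m=n$ and $p=q=1$, both $F$ and $\mathcal S$ are invariant under the swap $\beta\leftrightarrow\delta$, and so is the static constraint at $t_d=\tfrac12$. Because the dynamic problem is convex (as shown after Theorem~\ref{thm_optimization_problem}) and symmetric, its minimizer can be taken symmetric, $\beta^*=\delta^*$; hence the dynamic optimal switching time $t_d^*=D^*/(B^*+D^*)$ equals $\tfrac12$, which is precisely why the channel-independent choice $t_d=\tfrac12$ is the correct one. To perform the static minimization I would reduce to the half-space $\{\beta\le\delta\}$ (the other half gives the same value by symmetry), where $\min\{B,D\}=D$ so the constraint becomes $A+\tfrac12(1-\delta)^+\le r$ and no longer involves $\beta$. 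A direct derivative computation gives $\partial F/\partial\beta\ge 1>0$, so $F$ is strictly increasing in $\beta$; I would therefore set $\beta$ to its least admissible value $\beta=(1-\alpha_n)^+$ forced by the support constraint $\alpha_n+\beta\ge1$. Substituting this collapses the $\beta$-terms of $F$ to $n-\sum_i\alpha_i$, and combining with the $\alpha$-terms turns the objective into $\sum_{i=1}^{n}2(n+1-i)\alpha_i+\big[n\delta+\sum_{i=1}^{n-1}(1-\alpha_i-\delta)^+-n\big]$, whose first summand is exactly the Zheng--Tse exponent of the $(n+1)\times n$ point-to-point channel.

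It then remains to minimize this reduced objective over $(\bar\alpha,\delta)$ subject to $A+\tfrac12(1-\delta)^+\le r$ and the ordering/support constraints, and to show the value equals $d^{ptp}_{(n+1),n}(r)$. The mechanism is a cancellation: along the active constraint the bracketed ``extra'' term exactly compensates the tightening of the direct-link budget forced by $\delta$, so the objective is constant and equal to the $(n+1)\times n$ DMT. The case $n=1$ makes this transparent — there the reduced objective is $2\alpha_1+\delta-1$, and eliminating $\delta$ via $(1-\alpha_1)+\tfrac12(1-\delta)=r$ yields the constant $2(1-r)=d^{ptp}_{2,1}(r)$, independent of $\alpha_1$. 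With the lower bound established, combining it with the trivial upper bound gives $d^{stat}_{(n,1,n)}(r)=d^{ptp}_{(n+1),n}(r)=d^{*}_{(n,1,n)}(r)$, and since the achieving $t_d=\tfrac12$ is channel-independent, the QMF relay needs no CSI, not even the switching time.

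I expect the main obstacle to be this final minimization, specifically justifying that enlarging the outage region from the dynamic set $\{A+\frac{BD}{B+D}\le r\}$ to the \emph{non-convex} static set $\{A+\tfrac12\min\{B,D\}\le r\}$ does not lower the dominant exponent. The non-convexity (the superlevel sets of $\min\{B,D\}$ are complements of convex sets) precludes a one-line convexity-symmetrization argument; the resolution is exactly the case split $\{\beta\le\delta\}$ versus $\{\beta\ge\delta\}$ combined with the monotonicity $\partial F/\partial\beta\ge1$, which pins $\beta$ and removes it from the constraint, rendering each piece a genuine convex program that coincides with the $(n+1)\times n$ point-to-point DMT.
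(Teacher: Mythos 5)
Your reduction is sound up to the last step: fixing $t_d=\tfrac12$, invoking the QMF constant-gap bound together with Lemma~\ref{lem:cut-set-upper-bound}, splitting the non-convex static outage set into the half-spaces $\{\beta\le\delta\}$ and $\{\beta\ge\delta\}$, and then eliminating $\beta$ via $\partial F/\partial\beta\ge 1$ and the support constraint $\alpha_n+\beta\ge 1$ all check out, and they land you exactly on the optimization problem the paper works with in \eqref{eq:objective-function-n1n-lower-bound-0} (your $\delta$ playing the role of its $\beta_1$). The difference in route up to that point is cosmetic: the paper imports that static optimization from \cite{OCM}, whereas you re-derive it from the paper's own cut-set and QMF machinery; either way, the theorem reduces to showing that the minimum of that problem equals $d^{ptp}_{(n+1),n}(r)$.

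The gap is that you do not solve that optimization for $n\ge 2$, and the mechanism you offer in its place is false. The reduced objective $\sum_{i}2(n+1-i)\alpha_i+n\delta-n+\sum_{i\le n-1}(1-\alpha_i-\delta)^+$ is \emph{not} constant along the active constraint $\sum_i(1-\alpha_i)+\tfrac12(1-\delta)=r$ once $n\ge 2$: for $n=2$ and $r=\tfrac12$, the feasible points $(\alpha_1,\alpha_2,\delta)=(1,1,0)$ and $(0.6,0.9,1)$ both lie on the active constraint yet give objective values $4$ and $4.2$, respectively. The exact cancellation you exhibit for $n=1$ occurs only because the sum $\sum_{i\le n-1}(1-\alpha_i-\delta)^+$ is then empty, making the objective affine. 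For general $n$ the minimum does equal $d^{ptp}_{(n+1),n}(r)$, but establishing this is the substantive content of the paper's proof: an induction on $n$ showing that for $0\le r\le 1$ the optimum is attained at $\alpha_1=(1-r)^+$, $\alpha_i=1$ for $i\ge 2$, $\delta=1$, while for $r\ge 1$ setting $\alpha_1=0$ collapses the problem to the $(n-1,1,n-1)$ instance at multiplexing gain $r-1$. Without this (or an equivalent argument) your lower bound is unproven. A secondary, easily fixed point: you invoke Theorem~\ref{lem:DMT_n1n_channel} for the upper bound $d^{stat}\le d^{ptp}_{(n+1),n}$, but in the paper the lower-bound half of that theorem is itself proved via the present statement; you only need the independently proved inequality $d^*_{(n,1,n)}(r)\le d^{ptp}_{(n+1),n}(r)$, so cite that half alone to avoid circularity.
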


\begin{proof}[Proof]
The second equality in \eqref{eq:dmt-static-n1n} is just the result of Theorem \ref{lem:DMT_n1n_channel}. If the first equality holds, it means that the static QMF protocol (which, by Remark~\ref{thm:optimality_of_SCF}, achieves the DMT of the static HD-RC without any CSI at the relay node) achieves DMT of the dynamic $(n,1,n)$ HD-RC without any CSI at the relay node. It remains to prove the first equality.

The DMT of the symmetric $(n,k,n)$ static HD-RC was established as the solution of a convex optimization problem in \cite{OCM} and an analytic expression for an upper bound to the DMT was provided therein. The proof of the above lemma is given by obtaining an exact closed form solution to that optimization problem for the case of $k=1$. Our starting point is thus equation (13) in \cite{OCM} which is restated here for convenience,
\begin{IEEEeqnarray}{rl}
\label{eq:dmtstatn1n}
d_{(n,1,n)}^{stat}(r)=\min_{\{(\bar{\alpha},\beta_1)\in \mathcal{T}\}} \sum_{i=1}^n(2n-2i+1)\alpha_i & +n\beta_1-\sum_{i=1}^n(1-\alpha_i)^++\sum_{i=1}^{n-1}(1-\beta_1-\alpha_i)^+
\end{IEEEeqnarray}
where
\begin{IEEEeqnarray}{rl}
\mathcal{T}=\Big\{(\bar{\alpha},\beta_1): \sum_{i=1}^n(1-\alpha_i)^++\frac{1}{2}(1-\beta_1)^+ \leq r;
0\leq \alpha_1\leq \alpha_2\leq,\cdots \leq \alpha_n ;
0\leq \beta_1; (\beta_1+\alpha_n)\geq 1\Big\}.
\end{IEEEeqnarray}
Using an argument similar to that in the proof of Theorem \ref{thm_optimization_problem} in Appendix \ref{app:thm_optimization_problem}, it can be shown that a further restriction to $\alpha_n,\beta_1\in [0,1]$ can be made without changing the solution of \eqref{eq:dmtstatn1n} but it greatly simplifies the problem as
\begin{IEEEeqnarray}{rl}
\label{eq:objective-function-n1n-lower-bound-0}
d_{(n,1,n)}^{stat}(r)=\min_{\{(\bar{\alpha},\beta_1)\in \hat{\mathcal{T}}\}} \sum_{i=1}^n(2n-2i+2)\alpha_i  +n\beta_1-n+\sum_{i=1}^{n-1}(1-\beta_1-\alpha_i)^+.
\end{IEEEeqnarray}
where
\begin{IEEEeqnarray}{rl}
\label{eq:constraint-n1n-lower-bound-1}
\hat{\mathcal{T}}=\Big\{(\bar{\alpha},\beta_1): \sum_{i=1}^n(1-\alpha_i)+\frac{1}{2}(1-\beta_1) \leq r;
0\leq \alpha_1\leq \alpha_2\leq,\cdots \leq \alpha_n \leq 1;
0\leq \beta_1\leq 1 ; (\beta_1+\alpha_n)\geq 1\Big\}.
\end{IEEEeqnarray}
The rest of the proof that the solution of \eqref{eq:objective-function-n1n-lower-bound-0} is given by the lemma follows from induction, the details of which are relegated to Appendix~\ref{pf:lem:dmt-static-n1n}.
\end{proof}

\begin{rem}
\label{rem:n1n-dmt-FD-HD-equal}
Note that the DMT of the FD $(n,1,n)$ RC is also given be $d_{(n+1),n}^{ptp}(r)$~\cite{YEE}. Therefore, on the $(n,1,n)$ HD-RC, the DMT of the $(n,1,n)$ FD-RC can be achieved by an HD relay without any CSI at the relay node.
\end{rem}

\begin{rem}
Although, for a large number of $n\in \mathbb{N}$ the DMT of the static $(n,1,n)$ RC was computed in \cite{OCM} and was observed to be equal to $d_{(n+1),n}^{ptp}(r)$, the analysis of \cite{OCM} only proves that $d_{(n+1),n}^{ptp}(r)$ represents an upper bound to the DMT of the static $(n,1,n)$ RC. Therefore, the conclusion of Theorem~\ref{lem:dmt-static-n1n} cannot be obtained from the result of \cite{OCM}.
\end{rem}

%
%

\section{Conclusion}
\label{sec:conclusion}

The fundamental DMT of the three-terminal $(m,k,n)$ HD-RC is characterized. This allows an in-depth comparison of half-duplex and full duplex relaying as well as dynamic and static operation of the relay as a function of the numbers of antennas at the three nodes. Unlike in the single-antenna relay channel, half-duplex relaying in general results in a penalty relative to a full-duplex relaying and an improved performance relative to static half-duplex relaying at high SNR performance as measured by the DMT metric. The achievability of the fundamental DMT is shown via the dynamic QMF protocol~\cite{Avestimehr_Diggavi_Tse} which requires only the knowledge of the optimal switching time at the relay. Classes of HD-RCs for which dynamic operation of the relay doesn't improve performance over that of static relaying are identified. For such RCs the knowledge of switching time is not needed either. The problem of characterizing the DMT of the relay channel with multiple relays is one for future research as is the problem of finding finite block-length coding schemes that are DMT optimal.


\appendices
\section{Proof of Lemma~\ref{lem_conditional_independence}}
\label{pf_lem_conditional_independence}
Let the singular value decomposition of $H_{SD}$ be given as $U\Lambda_0 V^\dagger$, where $U\in \mathbb{C}^{n\times n}$ and $V\in \mathbb{C}^{m\times m}$ are mutually independent unitary random matrices distributed uniformly over the set of square unitary matrices of corresponding dimensions (e.g., see equation (3.9) in \cite{Edelman_Rao}). Using this fact we can write
\begin{IEEEeqnarray}{rl}
H_{SD}H_{SD}^\dagger=U\Lambda U^\dagger,~ \textrm{and}~ H_{SD}^\dagger H_{SD}=V\Lambda_2 V^\dagger,
\end{IEEEeqnarray}
where the sets of non-zero elements of $\Lambda_2$ and $\Lambda$ are identical. In particular, given one, the other is fixed. Putting this in the expressions for $W_2$ and $W_3$ we get
\begin{IEEEeqnarray}{rl}
W_2=H_{SR}V\left(I_m+\Lambda_2\right)^{-1}V^\dagger H_{SR}^\dagger = \widetilde{H}_{SR}\left(I_m+\Lambda_2\right)^{-1} \widetilde{H}_{SR}^\dagger,\\
W_3=H_{RD}^\dagger U\left(I_n+\Lambda\right)^{-1}U^\dagger H_{RD} = \widetilde{H}_{RD}^\dagger \left(I_n+\Lambda\right)^{-1} \widetilde{H}_{RD} ,
\end{IEEEeqnarray}
where $\widetilde{H}_{SR}=H_{SR}V$ and $\widetilde{H}_{RD}=U^\dagger H_{RD}$ are mutually independent random matrices that have the same distributions as $H_{SR}$ and $H_{RD}$, respectively, since $H_{SR}$ and $H_{RD}$ are unitarily invariant (cf. \cite{Tulino_Verdu}). Letting $A=\widetilde{H}_{SR}\left(I_m+\Lambda_2\right)^{-\frac{1}{2}}$ and $B=\widetilde{H}_{RD}^\dagger \left(I_n+\Lambda\right)^{-\frac{1}{2}}$ we realize that both $A$ and $B$ still have mutually independent Gaussian entries conditioned on $\Lambda $. Computing the conditional correlation between the two we get
\begin{IEEEeqnarray*}{rl}
    \mathbb{E}\left(B^\dagger A|\Lambda_2,\Lambda\right)= & \mathbb{E}\left( \left(I_n+\Lambda\right)^{-\frac{1}{2}}\widetilde{H}_{RD} \widetilde{H}_{SR}\left(I_m+\Lambda_2\right)^{-\frac{1}{2}}|\Lambda_2,\Lambda\right),\\
    =&  \mathbb{E}\left(\left(I_n+\Lambda\right)^{-\frac{1}{2}} U^\dagger {H}_{RD} {H}_{SR}V\left(I_m+\Lambda_2\right)^{-\frac{1}{2}}|\Lambda_2,\Lambda\right)
     =0_{k\times k},
\end{IEEEeqnarray*}
where the last step follows from the fact that components of $H_{SR}$ and $H_{RD}$ are zero mean and mutually independent. This, along with the fact that $A$ and $B$ are Gaussian~\cite{Bilodeau_Brenner} proves that conditioned on $\Lambda_2$ or $\Lambda$, they are independent. This in turn implies that $W_2 = A A^\dagger $ and $W_3 = B B^\dagger $ are independent given $\Lambda$. Consequently, the eigenvalues of $W_2$ are independent of the eigenvalues of $W_3$ given $\Lambda$.

\section{Proof of Lemma~\ref{lem:cut-set-upper-bound}}
\label{app:lem:cut-set-upper-bound}

\subsection{Proof of Part i}
The proof consists of upper bounding  in two steps the mutual information terms of the form $I(X;X+Z)$, subject to a sum power constraint on $X$. First, we use the fact Gaussian input is optimal, and then in the second step, we use the monotonically increasing property of the $\log\det(.)$ function in the cone of positive semi-definite matrices.

Suppose $Y=HX+Z$, where $Z\sim \mathcal{CN}(0,I)$, $H\in \mathbb{C}^{N\times M}$ and $\textrm{Cov}(X)\preceq Q$, then it is well known~\cite{CT} that
\begin{equation}
\label{eq:G-optimal}
    \max_{\{\textrm{Cov}(X)\preceq Q\}}I(HX+Z;X)=I(HX_G+Z;X_G)=\log\det\left(I+HQH^\dagger\right),
\end{equation}
where $X_G\sim \mathcal{CN}(0,Q)$. Similarly, for a sum power constraint we have
\begin{IEEEeqnarray}{rl}
\max_{\{\textrm{Tr}\left(\textrm{Cov}(X)\right)\preceq \rho\}}I(HX+Z;X)   =&\max_{\textrm{Tr}(Q)\leq \rho} \max_{\{\textrm{Cov}(X)\preceq Q\}}I(HX+Z;X) \nonumber \\
\label{eq:pf-lem-cut-set-ubound-a}
\stackrel{(a)}{=}&\max_{\textrm{Tr}(Q)\leq \rho} \log\det\left(I+HQH^\dagger\right),\\
\label{eq:pf-lem-cut-set-ubound-b}
{\leq }& \log\det\left(I+\rho HH^\dagger\right),
\end{IEEEeqnarray}
where step $(a)$ follows from \eqref{eq:G-optimal} and the last step follows from the fact that $Q\preceq \rho I$ and $\log\det(.)$ is a monotonically increasing function in the cone of semi-definite matrices.

Using equation \eqref{eq:pf-lem-cut-set-ubound-b} we have
\begin{IEEEeqnarray*}{rl}
\max_{P(X_S,X_R)} I(X_S,X_R;Y_D|p_2)=& ~ \max_{P(X_S,X_R)}I\left(H_{SR,D}\left[\begin{array}{c}X_S\\ X_R\end{array}\right]+Z_D;\left[\begin{array}{c}X_S\\ X_R\end{array}\right]\right) \\
\leq & ~\log \det\left(I_{n}+\rho H_{SR,D}H_{SR,D}^\dagger\right)=\log(L_{SR,D}),
\end{IEEEeqnarray*}
where $H_{SR,D}=[H_{SD}~H_{RD}]$. Using a similar method we obtain
\begin{IEEEeqnarray*}{rl}
\max_{P(X_S,X_R)} I(X_S;Y_D|X_R,p_2) \leq & ~ \log\det\left(I_n+\rho H_{SD} H_{SD}^\dagger\right)=\log(L_{SD});\\
\max_{P(X_S,X_R)} I(X_S;Y_D|p_1) \leq & ~ \log \det\left(I_n+\rho H_{SD} H_{SD}^\dagger\right)=\log(L_{SD});\\
\max_{P(X_S,X_R)} I(X_S;Y_R,Y_D|p_1) \leq & ~ \log \det\left(I_n+\rho H_{S,RD}H_{S,RD}^\dagger\right)=\log(L_{S,RD}).
\end{IEEEeqnarray*}
Finally, substituting the above set of upper bounds in equation \eqref{eq:cut-set-bound-around-source} and \eqref{eq:cut-set-bound-around-destination} we get
\begin{IEEEeqnarray*}{rl}
\max_{P(X_S,X_R)} I_{C_S}(t_d) \leq & t_d \log\left(L_{S,RD}\right)+(1-t_d) \log\left(L_{SD}\right)\triangleq I^{'}_{C_S}(t_d), \\
\max_{P(X_S,X_R)} I_{C_D}(t_d)\leq & t_d \log\left(L_{SD}\right)+(1-t_d) \log\left(L_{SR,D}\right)\triangleq I^{'}_{C_D}(t_d).
\end{IEEEeqnarray*}
This proves the first part of the lemma.

\subsection{Proof of Part ii}

Let $P^*$ represent the distribution where $X_S\sim \mathcal{CN}(0,\frac{\rho}{m}I_m)$ and $X_R\sim \mathcal{CN}(0,\frac{\rho}{k}I_k)$ and $X_S$ and $X_R$ are mutually independent. Note that $P^*$ satisfies the input power constraints at the source and relay given in \eqref{eq:power-constraint-at-source}  and \eqref{eq:power-constraint-at-relay}. Then denoting the mutual information $I(X_S,X_R;Y_D|p_2)$ evaluated at $P^*$ by $I(X_S,X_R;Y_D|p_2)\Big\rvert_{ P^*}$ we see
\begin{IEEEeqnarray}{rl}
I(X_S,X_R;Y_D|p_2)\Big\rvert_{ P^*}=& ~ I\left(H_{SR,D}\left[\begin{array}{c}X_S\\ X_R\end{array}\right]+Z_D;\left[\begin{array}{c}X_S\\ X_R\end{array}\right]\right)\Bigg\rvert_{ P^*}, \nonumber \\
= & ~\log \det\left(I_{n}+\frac{\rho}{m} H_{SD}H_{SD}^\dagger+\frac{\rho}{k} H_{RD}H_{RD}^\dagger\right),\nonumber \\
\stackrel{(a)}{\geq} & ~\log \det\left(\frac{I_n}{m+k}+\frac{\rho}{m+k} H_{SD}H_{SD}^\dagger+\frac{\rho}{m+k} H_{RD}H_{RD}^\dagger\right),\nonumber \\
\label{eq:MI-at-P*-a}
= & ~\log \det\left(I_{n}+\rho H_{SR,D}H_{SR,D}^\dagger\right)-(m+k),\nonumber \\
=& \log(L_{SR,D})-m
\end{IEEEeqnarray}
where step $(a)$ follows from the fact that $\log\det(.)$ is a monotonically increasing function in the cone of positive semi-definite matrices. Using a similar method we get
\begin{IEEEeqnarray}{rl}
\label{eq:MI-at-P*-b}
I(X_S;Y_D|X_R,p_2)\Big\rvert_{ P^*}\geq &\log\det\left(I_n+\rho H_{SD} H_{SD}^\dagger\right)-m =\log(L_{SD})-m ;\\
\label{eq:MI-at-P*-c}
I(X_S;Y_D|p_1)\Big\rvert_{ P^*}\geq & \log\det\left(I_n+\rho H_{SD} H_{SD}^\dagger\right)-m =\log(L_{SD})-m;\\
\label{eq:MI-at-P*-d}
I(X_S;Y_R,Y_D|p_1)\Big\rvert_{ P^*}\geq &\log \det\left(I_n+\rho H_{S,RD}H_{S,RD}^\dagger\right)-m =\log(L_{S,RD})-m .
\end{IEEEeqnarray}

Now, from the definition of $\bar{C}(\mathcal{H},t_d)$ in equation \eqref{eq:rate-upper-bound-intermediate1} we get
\begin{IEEEeqnarray*}{rl}
\bar{C}(\mathcal{H},t_d)= & \max_{\{P(X_S,X_R)\}} \min \{I_{C_S}(t_d),I_{C_D}(t_d)\},\\
\stackrel{(a)}{\geq } &\max_{\{P(X_S,X_R)=P^*\}} \min \left\{I_{C_S}(t_d),I_{C_D}(t_d)\right\},\\
=& \min \left\{I_{C_S}(t_d)\Big\rvert_{ P^*},I_{C_D}(t_d)\Big\rvert_{ P^*}\right\},\\
\stackrel{(b)}{\geq } &  \min \{I_{C_S}^{'}(t_d)-m, I_{C_D}^{'}(t_d)-t_d m-(1-t_d)(m+k)\},\\
\geq & \min \{I_{C_S}^{'}(t_d), I_{C_D}^{'}(t_d)\}-(m+k),
\end{IEEEeqnarray*}
where step $(a)$ follows from the fact that instead of maximizing over all possible input distributions, we are evaluating the right hand side of the equation at a particular distribution $P^*$ and in step $(b)$ we substituted the set of lower bounds from equations \eqref{eq:MI-at-P*-a}-\eqref{eq:MI-at-P*-d} in the expressions for $I_{C_S}(t_d)$ and $I_{C_D}(t_d)$.

\section{Proof of Lemma \ref{lem:d-in-terms-of-R*}}
\label{App:lem:d-in-terms-of-R*}
We will prove that
\begin{equation*}
    R_U^*=r^*(\bar{\alpha},\bar{\beta},\bar{\delta})\log(\rho),
\end{equation*}
which, when substituted in equation \eqref{eq:d-and-Ou-relation}, proves the lemma.
From the expression of $L_{S,RD}$ in equation \eqref{eq_expression_Lsrd1} and using elementary properties of determinants, we have
\begin{IEEEeqnarray*}{rl}
\log(L_{S,RD})=& ~ \log\det\left(I_{n+k}+\rho H_{S,RD}H_{S,RD}^{\dagger}\right),\\
=& ~ \log\det\left(I_m+\rho H_{S,RD}^\dagger H_{S,RD}\right)\stackrel{(a)}{=}\log\det\left(I_m+\rho H_{SR}^\dagger H_{SR}+\rho H_{SD}^\dagger H_{SD}\right),\\
=& ~ \log\det\left(I_m+\rho H_{SR}^\dagger H_{SR}\left(I_m+\rho H_{SD}^\dagger H_{SD}\right)^{-1}\right)+\log\det\left(I_m+\rho H_{SD}^\dagger H_{SD}\right),\\
=& ~ \log\det\left(I_k+\rho H_{SR}\left(I_m+\rho H_{SD}^\dagger H_{SD}\right)^{-1}H_{SR}^\dagger \right)+\log(L_{SD}).
\end{IEEEeqnarray*}
where in step $(a)$ we have used the fact that $H_{S,RD}^\dagger=[H_{SR}^\dagger~ H_{SD}^\dagger]$. Hence, we have
\begin{IEEEeqnarray*}{rl}
\log\left(\frac{L_{S,RD}}{L_{SD}}\right)=&\log\det\left(I_k+\rho H_{SR}\left(I_m+\rho H_{SD}^{\dagger}H_{SD}\right)^{-1}H_{SR}^{\dagger}\right).
\end{IEEEeqnarray*}
 Similarly, from the expression for $L_{SR,D}$ in equation \eqref{eq_expression_Lsr_d} we get
\begin{IEEEeqnarray}{l}
\log\left(\frac{L_{SR,D}}{L_{SD}}\right)=\log\det\left(I_k+\rho H_{RD}^{\dagger}\left(I_n+\rho H_{SD}H_{SD}^{\dagger}\right)^{-1}H_{SR}\right).
\end{IEEEeqnarray}
Now, assuming $0 < \lambda_u \leq \cdots \leq \lambda_1$, $0 < \mu_p \leq \cdots \leq \mu_1$ and $0 < \gamma_q \leq \cdots \leq \gamma_1$ represent the ordered non-zero (w.p. 1) eigenvalues of the matrices $W_1\triangleq H_{SD}H_{SD}^{\dagger}$, $W_2\triangleq H_{SR}(I_n+\rho H_{SD}^{\dagger}H_{SD})^{-1}H_{SR}^{\dagger}$ and $W_3\triangleq H_{RD}^{\dagger}(I_n+\rho H_{SD}H_{SD}^{\dagger})^{-1}H_{RD}$ and substituting these in equation \eqref{eq:rate-upper-bound} we get
\begin{equation}
R_U^* =  \frac{ \log\left(\prod_{j=1}^{p}  (1+\rho \mu_j)\right) \log\Big(\prod_{l=1}^{q}  (1+\rho \gamma_l)\Big) }{\log\left(\prod_{j=1}^{p}  (1+\rho \mu_j)\right) + \log\Big(\prod_{l=1}^{q}  (1+\rho \gamma_l)\Big)} +\log \left(\prod_{i=1}^{u}  (1+\rho \alpha_i)\right),
\end{equation}
To further simplify the expression on the right hand side of the above equation we use the following transformations: $\lambda_i=\rho^{-\alpha_i}, ~1\leq i\leq u$, $\mu_j=\rho^{-\beta_j}, ~1\leq j\leq p$ and $\gamma_l=\rho^{-\delta_l}, ~1\leq l\leq q$ and get
\begin{equation}
\label{eq_dmt_rate_upper_bound_3}
R_U^* =  ~\log (\rho)\left[  \frac{\sum_{l=1}^{q}(1-\delta_l)^+ \sum_{j=1}^{p}(1-\beta_j)^+}{\sum_{l=1}^{q}(1-\delta_l)^+ +\sum_{j=1}^{p}(1-\beta_j)^+} \right]+ \log (\rho)\left(\sum_{i=1}^{u}(1-\alpha_i)^+\right)  \triangleq r^*(\bar{\alpha},\bar{\beta},\bar{\delta})\log(\rho).
\end{equation}

\section{Proof of theorem~\ref{thm_optimization_problem}}
\label{app:thm_optimization_problem}
From equation \eqref{eq:d-and-Ou-relation} $d^*(r)$ is equal to the negative SNR exponent of $\Pr\{\mathcal{O}_U\}$. However, from Lemma~\ref{lem:d-in-terms-of-R*}, $\Pr\{\mathcal{O}_U\}$ is exponentially equal to $\Pr\{r^*(\bar{\alpha},\bar{\beta},\bar{\delta})<r\}$. Hence,
\begin{IEEEeqnarray}{rl}
\Pr\{\mathcal{O}_U\}\dot{=}& ~ \rho^{-d^*(r)} \\ \dot{=} & ~ \Pr\{r^*(\bar{\alpha},\bar{\beta},\bar{\delta})<r\} \\ =& ~ \int_{\{(\bar{\alpha},\bar{\beta},\bar{\delta})\in \mathcal{O}_U\}}f\left(\bar{\alpha},\bar{\beta},\bar{\delta}\right) ~d\bar{\alpha} ~d\bar{\beta} ~d\bar{\delta},\nonumber\\
=& ~ \int_{\{(\bar{\alpha},\bar{\beta},\bar{\delta})\in \mathcal{O}_U\cap \mathcal{S}\}}\rho^{-E\left(\bar{\alpha},\bar{\beta},\bar{\gamma}\right)} ~d\bar{\alpha} ~d\bar{\beta} ~d\bar{\delta},
\end{IEEEeqnarray}
where $\mathcal{O}_U=\{(\bar{\alpha},\bar{\beta},\bar{\delta}):~r^*(\bar{\alpha},\bar{\beta},\bar{\delta})<r\}$ and $f(.)$ is the joint pdf of $(\bar{\alpha},\bar{\beta},\bar{\delta})$.

Roughly, the above integral is a sum of an infinite number of terms of the form $\rho^{-E(\bar{\alpha},\bar{\beta},\bar{\delta})}$, one for each $(\bar{\alpha},\bar{\beta},\bar{\delta})$-tuple in $\mathcal{O}_U$. Laplace's method suggest that at the asymptotic SNR only the term having minimum negative exponent dominates, i.e.,
\begin{IEEEeqnarray}{rl}
\label{eq_pf_thm_opt_problem_of}
d^*(r)=\min_{\{(\bar{\alpha},\bar{\beta},\bar{\delta})\in \mathcal{S}\cap \mathcal{O}_U\}}E\left(\bar{\alpha},\bar{\beta},\bar{\delta}\right),
\end{IEEEeqnarray}
where $\mathcal{S}$ represents the support set of the joint pdf of $(\bar{\alpha},\bar{\beta},\bar{\delta})$ given in equation \eqref{eq_support_set_of_pdf} and the expression of $E(.)$ is given in equation \eqref{eq_distribution_abs}. Suppose at a given $r$, the objective function attains the minimum value for an $\bar{\alpha}\in \mathcal{S}$ where $\alpha_i>1$ for one or more $i$'s. Let $\widetilde{\bar{\alpha}}=\min \{[1, 1, \cdots ,1],\bar{\alpha}\}$, where the minimization is component-wise. Clearly, $\widetilde{\bar{\alpha}}\in \mathcal{S}$ but at this point $E(.)$ has a strictly smaller value. This proves that in the optimal solution, $\alpha_i\in [0, 1]$ for all $i$. The same is true for $\bar{\beta}$ and $\bar{\gamma}$. This, however, simplifies both the objective function and the constraint set giving the optimization problem in equations \eqref{general_opt}-\eqref{redun4} and \eqref{eq_objective_func}in the statement of Theorem \ref{thm_optimization_problem}.

\section{Proof of Theorem~\ref{thm_simplified_optimization}}
\label{pf_thm_simplified_opt}
The proof essentially contains two simplifying steps which consecutively simplify the optimization problem of Theorem~\ref{thm_optimization_problem}. In the first step a transformation of variables yields an equivalent problem having three variables independent of the values of $m,k$ and $n$. 
Analyzing the domain of definition of the equivalent problem we find that in the optimal solution one of the variables is a function of the other two resulting in an optimization problem having only two variables. We start with the first step.
\\
{\bf Step~1:}~The objective function in \eqref{eq_objective_func} decreases strictly monotonically as $\alpha_i$ is decreased for any $i$ and the rate of decrease with $\alpha_i$ is smaller for a larger value of $i$. The same is true for $\bar{\beta}$ and $\bar{\delta}$. Thus, following a similar method as in $\cite{tse1}$, it can be shown that if $\sum_{i=1}^{u}(1-\alpha_i)=a$, $\sum_{j=1}^{p}(1-\beta_j)=b$, $\sum_{l=1}^{q}(1-\delta_l)=s$ and $\left(\bar{\alpha},\bar{\beta},\bar{\delta}\right)$ satisfy equations \eqref{redun2}-\eqref{redun4}, then the optimal choice of $\left(\bar{\alpha},\bar{\beta},\bar{\delta}\right)$ that minimizes $F(.)$ is given by $(\phi_{\alpha}(a),\phi_{\beta}(b),\phi_{\delta}(s))$, where
\begin{IEEEeqnarray}{l}
\label{eq_phi_alpha}
\phi_{\alpha}(a)=[\hat{\alpha}_1, \hat{\alpha}_2, \cdots, \hat{\alpha}_u ]^T~ :~\hat{\alpha}_i=\left(1-\left(a-i+1\right)^+\right)^+, ~1\leq i\leq u,\\
\label{eq_phi_beta}
\phi_{\beta}(b)=[\hat{\beta}_1, \hat{\beta}_2, \cdots, \hat{\beta}_p ]^T~ :~\hat{\beta}_j=\left(1-\left(b-j+1\right)^+\right)^+, ~1\leq j\leq p,\\
\label{eq_phi_gamma}
\phi_{\delta}(y)=[\hat{\delta}_1, \hat{\delta}_2, \cdots, \hat{\delta}_t ]^T~ :~\hat{\delta}_l=\left(1-\left(s-l+1\right)^+\right)^+, ~1\leq l\leq q.
\end{IEEEeqnarray}

Denoting by $\mathcal{T}(a,b,s)$ the following set
\begin{IEEEeqnarray*}{rl}
\Big\{(\bar{\alpha}, \bar{\beta},\bar{\delta}):& \sum_{i=1}^{u}(1-\alpha_i)=a, \sum_{j=1}^{p}(1-\beta_j)=b, \sum_{l=1}^{q}(1-\delta_l)=s,~\left(\bar{\alpha},\bar{\beta},\bar{\delta}\right) ~\textrm{satisfy equations \eqref{redun2}-\eqref{redun4}}\Big\},
\end{IEEEeqnarray*}
we have from the above argument that
\begin{equation}
\label{eq_pf_opt_problem_first_min}
\min_{\{\mathcal{T}(a,b,s)\}} F\left((\bar{\alpha}, \bar{\beta},\bar{\delta})\right)=F\left(\phi_{\alpha}(a),\phi_{\beta}(b),\phi_{\delta}(s)\right).
\end{equation}
Let us now define the following set of new variables
\begin{IEEEeqnarray}{l}
\mathcal{O}_1= \left\{(a,b,s): a+\frac{bs}{b+s}\leq r,~ (a+b)\leq m,~ (a+s)\leq n,~ 0\leq a\leq u,~ 0\leq b\leq p,~0\leq s\leq q\right\}.
\end{IEEEeqnarray}
It is clear from the definition of $\mathcal{T}(a,b,s)$ that,
\begin{equation}
\hat{\mathcal{O}}_1 \triangleq \bigcup_{\{(a,b,s)\in \mathcal{O}_1\}}\mathcal{T}\left(a,b,s\right)\supset \hat{\mathcal{O}}.
\end{equation}
Since the minimum of a function over a set is not larger than the minimum of that function over a subset of it, the above relation along with equation~\eqref{eq_pf_opt_problem_first_min} imply that
\begin{IEEEeqnarray}{rl}
\label{eq_equivalent_opt_lb}
\min_{\{(a,b,s)\in \mathcal{O}_1\}} F\left(\phi_{\alpha}(a),\phi_{\beta}(b),\phi_{\delta}(s)\right)=& ~ \min_{\{(\bar{\alpha}, \bar{\beta},\bar{\delta})\in \hat{\mathcal{O}}_1\}} F\left(\bar{\alpha}, \bar{\beta},\bar{\delta}\right) \\
\leq & ~ \min_{\{(\bar{\alpha}, \bar{\beta},\bar{\delta})\in \hat{\mathcal{O}}\}} F\left(\bar{\alpha}, \bar{\beta},\bar{\delta}\right).
\end{IEEEeqnarray}
Before proceeding further we take note of a few properties of the newly defined variables $a, b$ and $s$. From the definition of $\phi_i$'s it is clear that if $(a,b,s)\in \mathcal{O}_1$, then $\left(\phi_{\alpha}(a),\phi_{\beta}(b),\phi_{\delta}(s)\right)$ satisfy equations \eqref{mi_constraint} and \eqref{redun2}-\eqref{redun4}. Suppose for some $(i+j)=(m+1)$, $(\hat{\alpha}_i+\hat{\beta_j})<1$, then it can be shown that $\sum_{i=1}^{u}(1-\hat{\alpha}_i)+\sum_{j=1}^{p}(1-\hat{\beta}_j)> m$, which is impossible. Thus $(\hat{\alpha}_i+\hat{\beta_j})\geq 1$ for all $(i+j)\geq (m+1)$. Similarly, it can be shown that $(\hat{\alpha}_i+\hat{\delta_l})\geq 1$ for all $(i+l)\geq (n+1)$, which in turn imply that the $\left(\phi_{\alpha}(a),\phi_{\beta}(b),\phi_{\delta}(s)\right)$ tuple also satisfies equations \eqref{redun1} and \eqref{redun12}. That is $ (a,b,s)\in \mathcal{O}_1 ~\Rightarrow~ (\phi_{\alpha}(a),\phi_{\beta}(b),\phi_{{\delta}}(s))\in \hat{\mathcal{O}} $ which implies that
\begin{IEEEeqnarray*}{l}
\label{eq_equivalent_opt_ub}
\min_{\{(\bar{\alpha}, \bar{\beta},\bar{\delta})\in \hat{\mathcal{O}}\}} F\left(\bar{\alpha}, \bar{\beta},\bar{\delta}\right) \leq \min_{\{(a,b,s)\in \mathcal{O}_1\}} F\left(\phi_{{\alpha}}(a),\phi_{{\beta}}(b),\phi_{{\delta}}(s)\right).
\end{IEEEeqnarray*}
Combining this with equation \eqref{eq_equivalent_opt_lb}, we get
\begin{equation}
\label{eq_equivalent_opt_1}
\min_{\{(\bar{\alpha}, \bar{\beta},\bar{\delta})\in \hat{\mathcal{O}}\}} F\left(\bar{\alpha}, \bar{\beta},\bar{\delta}\right) ~=~ \min_{\{(a,b,s)\in \mathcal{O}_1\}} F\left(\phi_{\alpha}(a),\phi_{\beta}(b),\phi_{\delta}(s)\right).
\end{equation}
Therefore, we have an equivalent optimization problem to that presented in Theorem~\ref{thm_optimization_problem}, but with fewer variables, i.e., $d^*(r)$ can be equivalently written as
\begin{equation}
d^*(r)=\min_{\left\{(a,b,s)\in \mathcal{O}_1 \right\}}~ F\left(\phi_{{\alpha}}(a),\phi_{{\beta}}(b),\phi_{{\delta}}(s)\right).
\end{equation}

When $(a+b)=m$ or $(a+s)=n$, the objective function has a property which we state now that will be helpful to solve the minimization problem in the next section.
\begin{cl}
\label{cl_a_plus_b_equal_1}
The fundctions $F\left(\phi_{\alpha}(a),\phi_{\beta}(m-a),\phi_{\delta}(s)\right)$ and $F\left(\phi_{{\alpha}}(a),\phi_{{\beta}}(b),\phi_{{\delta}}(n-a)\right)$ are monotonically decreasing with $a$ for a given $s$ and $b$, respectively, whereas $F\left(\phi_{{\alpha}}(a),\phi_{{\beta}}(m-a),\phi_{{\delta}}(n-a)\right)$ is monotonically decreasing with $a$.
\end{cl}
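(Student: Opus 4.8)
The plan is to exploit the fact that each of the three maps in the claim, e.g.\ $a\mapsto F(\phi_{\alpha}(a),\phi_{\beta}(m-a),\phi_{\delta}(s))$, is continuous and \emph{piecewise-linear} in $a$: every coordinate $\hat\alpha_i,\hat\beta_j,\hat\delta_l$ is piecewise-linear in its argument and the operations $(\cdot)^+$ and finite summation preserve piecewise-linearity. Hence it suffices to show that the slope of $F$ along the path is nonpositive on each open linear piece. I would first record the structural fact underlying everything: for a generic (non-breakpoint) $a$ there is exactly one index $i^*$ with $\hat\alpha_{i^*}\in(0,1)$ and exactly one $j^*$ with $\hat\beta_{j^*}\in(0,1)$, every coordinate of smaller index being $0$ and every one of larger index being $1$. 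Along $b=m-a$ these fractional indices obey the crucial identity $i^*+j^*=m+1$, and as $a$ increases $\hat\alpha_{i^*}$ decreases at unit rate while $\hat\beta_{j^*}$ increases at unit rate, all other coordinates being locally constant. (Along $s=n-a$ the analogue reads $i^*+l^*=n+1$.)

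Given this, I would differentiate $F$ term by term. The linear part contributes $-(n+m+2k-2i^*+1)+(k+m-2j^*+1)$, which, after substituting $j^*=m+1-i^*$, collapses to $-n-k+4i^*-2m-2$. In the coupling sum $\sum_{i+j\le m}(1-\alpha_i-\beta_j)^+$ only terms with $i=i^*$ or $j=j^*$ vary; the structure of $\hat\alpha,\hat\beta$ (zeros, one fractional entry, then ones) shows that exactly $j^*-1$ of them are active and increasing while exactly $i^*-1$ are active and decreasing, the single term $(i^*,j^*)$ being excluded precisely because $i^*+j^*=m+1>m$, so this sum has slope $j^*-i^*=m+1-2i^*$. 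In the remaining sum $\sum_{i+l\le n}(1-\alpha_i-\delta_l)^+$ only the $i=i^*$ terms vary, each contributing a nonnegative unit slope, with the active count bounded by the index constraint by $(n-i^*)^+$. Adding the three contributions bounds the total slope by $i^*-m-k-1$, which is strictly negative since $i^*\le u=\min\{m,n\}\le m$; this settles the first assertion.

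For the third assertion I would run the identical computation with both $b=m-a$ and $s=n-a$ active, so that $\hat\delta_{l^*}$ now also increases at unit rate with $i^*+l^*=n+1$. Both coupling sums then telescope symmetrically, contributing slopes $m+1-2i^*$ and $n+1-2i^*$, while the linear part contributes $6i^*-2m-2n-3$; these add to exactly $2i^*-m-n-1\le -1$ since $2i^*\le m+n$. The second assertion needs no fresh work: $F$ is invariant under the simultaneous interchange $m\leftrightarrow n$, $\beta\leftrightarrow\delta$, $p\leftrightarrow q$, $j\leftrightarrow l$ (the $\alpha$-coefficient $n+m+2k-2i+1$ being symmetric in $m,n$), so monotonicity of $F(\phi_{\alpha}(a),\phi_{\beta}(b),\phi_{\delta}(n-a))$ in $a$ is exactly the first assertion applied to the transposed channel $(n,k,m)$.

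The main obstacle I anticipate is not the interior slope computation but the bookkeeping at breakpoints and domain boundaries. At integer values of $a$ the map is merely piecewise-linear, so one must argue with one-sided slopes; and when $a$ is small (so $b=m-a$ is near $p$) or near $u$, the fractional index may sit at the end of its vector, a coordinate may saturate, or one of the active-set counts $i^*-1$, $j^*-1$, $n-i^*$ may be clipped by the dimensions $u,p,q$. I would handle these by verifying that the identities $i^*+j^*=m+1$ (resp.\ $i^*+l^*=n+1$) and the bound $i^*\le u\le\min\{m,n\}$ persist throughout the admissible range of $a$ prescribed by $\mathcal{O}_1$, so that any clipping can only lower the already-negative slope, leaving the monotonicity intact.
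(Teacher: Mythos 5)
Your proposal is correct, and it rests on the same structural facts as the paper's proof (the staircase form of $\phi_\alpha,\phi_\beta,\phi_\delta$ and the identity $\hat\alpha_i+\hat\beta_j=1$ on $i+j=m+1$, $\le 1$ on $i+j\le m$, forced by $a+b=m$), but it organizes them differently. The paper substitutes these identities to strip the $(\cdot)^+$ off the $\beta$-coupling sum, eliminates the $\hat\beta_j$'s algebraically, and then argues that the reduced expression is increasing in each $\hat\alpha_i$ while each $\hat\alpha_i$ is non-increasing in $a$; you instead differentiate $F$ along the slice piece by piece, tracking the single fractional coordinates $i^*,j^*$ (with $i^*+j^*=m+1$) and summing one-sided slopes. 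Your bookkeeping checks out: the linear part contributes $4i^*-n-k-2m-2$, the $\beta$-coupling sum exactly $j^*-i^*=m+1-2i^*$, and the $\delta$-coupling sum at most $(n-i^*)^+$, for a total of at most $i^*-m-k-1<0$; the doubly-constrained case gives $2i^*-m-n-1\le -1$; and the second assertion does follow from the $m\leftrightarrow n$, $\beta\leftrightarrow\delta$ symmetry of $F$. One point in favor of your version: the paper's reduced expression still carries the term $\sum_{i+l\le n}(1-\hat\alpha_i-\hat\delta_l)^+$, which is \emph{decreasing} in $\hat\alpha_i$, so its claim that the reduced function is increasing in each $\hat\alpha_i$ silently requires exactly the dominance estimate $(m+n+k+1-2i)-(n-i)=m+k+1-i>0$ that you make explicit via the bound $(n-i^*)^+$ on the last sum. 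Your anticipated boundary issues are real but benign: within $\mathcal{O}_1$ the slice $b=m-a$ forces $m-p\le a\le u$, which keeps $i^*\le u$ and $j^*\le p$, and continuity plus nonpositive interior slopes settles the breakpoints.
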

\begin{proof}
It can be shown using equations \eqref{eq_phi_alpha}-\eqref{eq_phi_gamma} that when $(a+b)=m$ we have
\begin{equation*}
\label{eq_iplusj_equalto1}
    (\hat{\alpha}_i+\hat{\beta}_j)=1, ~\forall (i+j)=(m+1)~\textrm{and}~
    (\hat{\alpha}_i+\hat{\beta}_j)\leq 1, ~\forall (i+j)\leq m.
\end{equation*}
Using these relations in the expression for $F\left(\bar{\hat{\alpha}},\bar{\hat{\beta}},\bar{\hat{\delta}}\right)$, after some algebra we get
\begin{IEEEeqnarray*}{rl}
\label{eq_ab_m}
F\left(\bar{\hat{\alpha}},\bar{\hat{\beta}},\bar{\hat{\delta}}\right)=& \sum_{i=1}^{p}(m+n+k+1-2i)\hat{\alpha}_i+\sum_{l=1}^{q}(n+k+1-2l)\hat{\delta}_l-ku+\sum_{\substack{i,l=1\\l+i\leq n}}^{u,q}(1-\hat{\alpha}_i-\hat{\delta}_l)^+.\\
\end{IEEEeqnarray*}
The above function is a strictly monotonically increasing function of $\hat{\alpha}_i$ for each $1\leq i\leq u$. Each of the $\hat{\alpha}_i$'s in turn is a monotonically decreasing function of $a$ which makes the above function a monotonically decreasing function of $a$.

Similarly, it can be shown that $F\left(\phi_{{\alpha}}(a),\phi_{{\beta}}(b),\phi_{{\delta}}(n-a)\right)$ is a monotonically decreasing function of $a$. However, when both $(a+b)=m$ and $(a+s)=n$, then we have
\begin{IEEEeqnarray*}{rl}
    (\hat{\alpha}_i+\hat{\beta}_j)=1,& ~ \forall ~ (i+j)=(m+1)~\textrm{and}~
    (\hat{\alpha}_i+\hat{\beta}_j)\leq 1, ~\forall ~ (i+j)\leq m;\nonumber\\
        (\hat{\alpha}_i+\hat{\delta}_l)=1,& ~ \forall ~ (i+l)=(n+1)~\textrm{and}~
    (\hat{\alpha}_i+\hat{\delta}_l)\leq 1, ~\forall ~ (i+l)\leq n.
\end{IEEEeqnarray*}
Using this in the expression for $F\left(\bar{\hat{\alpha}},\bar{\hat{\beta}},\bar{\hat{\delta}}\right)$ we get
\begin{IEEEeqnarray*}{rl}
\label{eq_ab_and_as_mn}
F\left(\phi_{{\alpha}}(a),\phi_{{\beta}}(m-a),\phi_{{\delta}}(n-a)\right)=F\left(\bar{\hat{\alpha}},\bar{\hat{\beta}},\bar{\hat{\delta}}\right)=& \sum_{i=1}^{p}(m+n+1-2i)\hat{\alpha}_i,
\end{IEEEeqnarray*}
which by a similar argument as above is a monotonically decreasing function of $a$.
\end{proof}

{\bf Step 2:} In this final step, we determine the minimum of $F(.)$ on $\mathcal{O}_1$ and establish the theorem. Note that if $a, b, s\in \mathcal{O}_1$, then $b\leq \min\{(m-a),p\}$ and $s\leq \min\{(n-a),q\}$. Let us denote these maximum values of $b$ and $s$ by $b_m$ and $s_m$, respectively. Depending on the value of $a$ the set of feasible $(b,s)$ pairs takes on different shapes as shown in the following figures. For example, when $a\in \mathcal{R}_1=\{a:\frac{b_m(r-a)}{(b_m-r+a)}\leq s_m \}$ the feasible set of $(b,s)$ pairs is the region ABCDE shown in Fig. \ref{a_range-a}.

\begin{figure}[htp]
  \begin{center}
    \subfigure[$\mathcal{R}=\left\{a: \frac{b_m(r-a)}{(b_m-r+a)}\leq s_m\right\}$]{\label{a_range-a}\includegraphics[scale=.3]{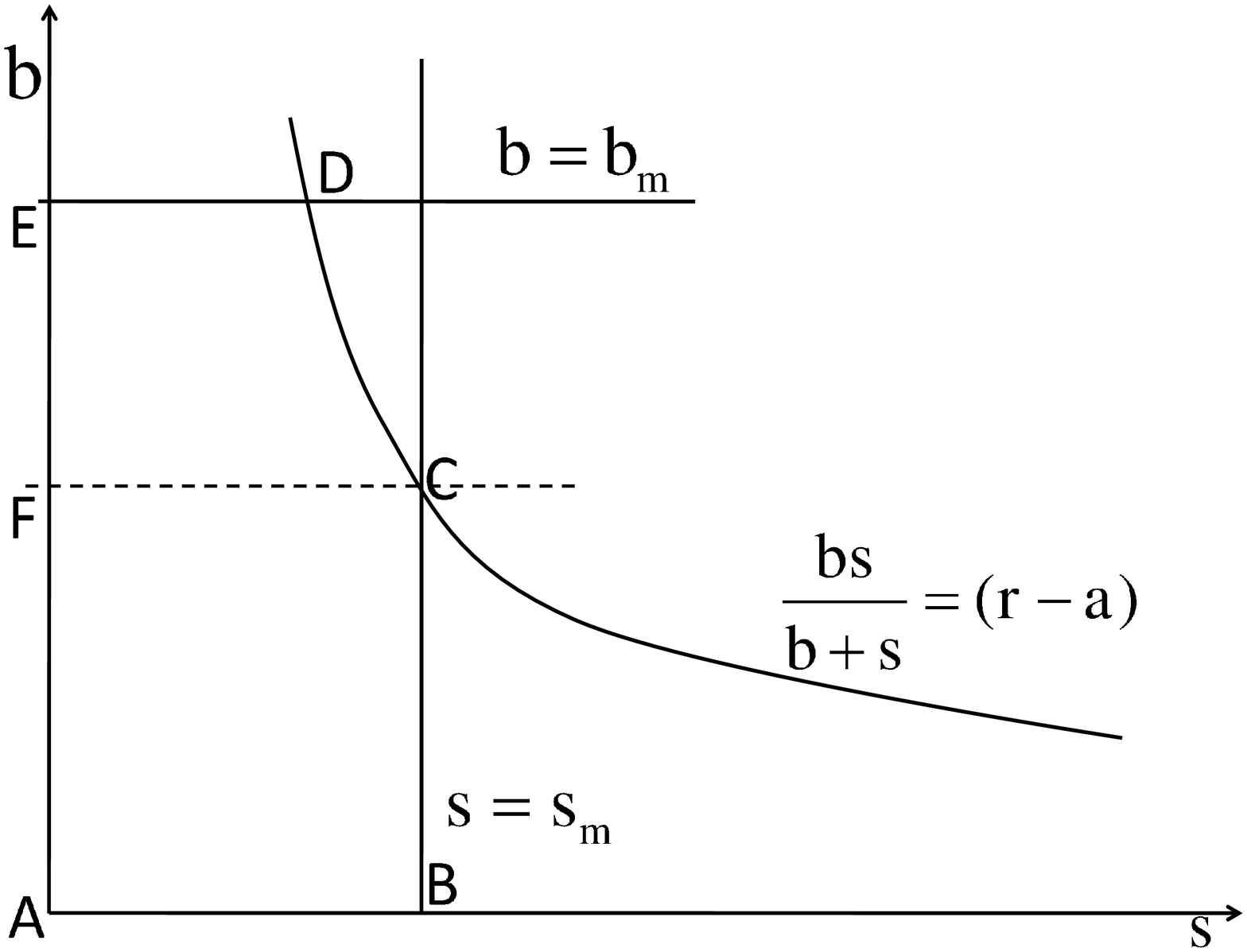}}
    \subfigure[$\mathcal{R}^c=\left\{a: s_m < \frac{b_m(r-a)}{(b_m-r+a)}\right\}$] {\label{a_range-b}\includegraphics[scale=0.3]{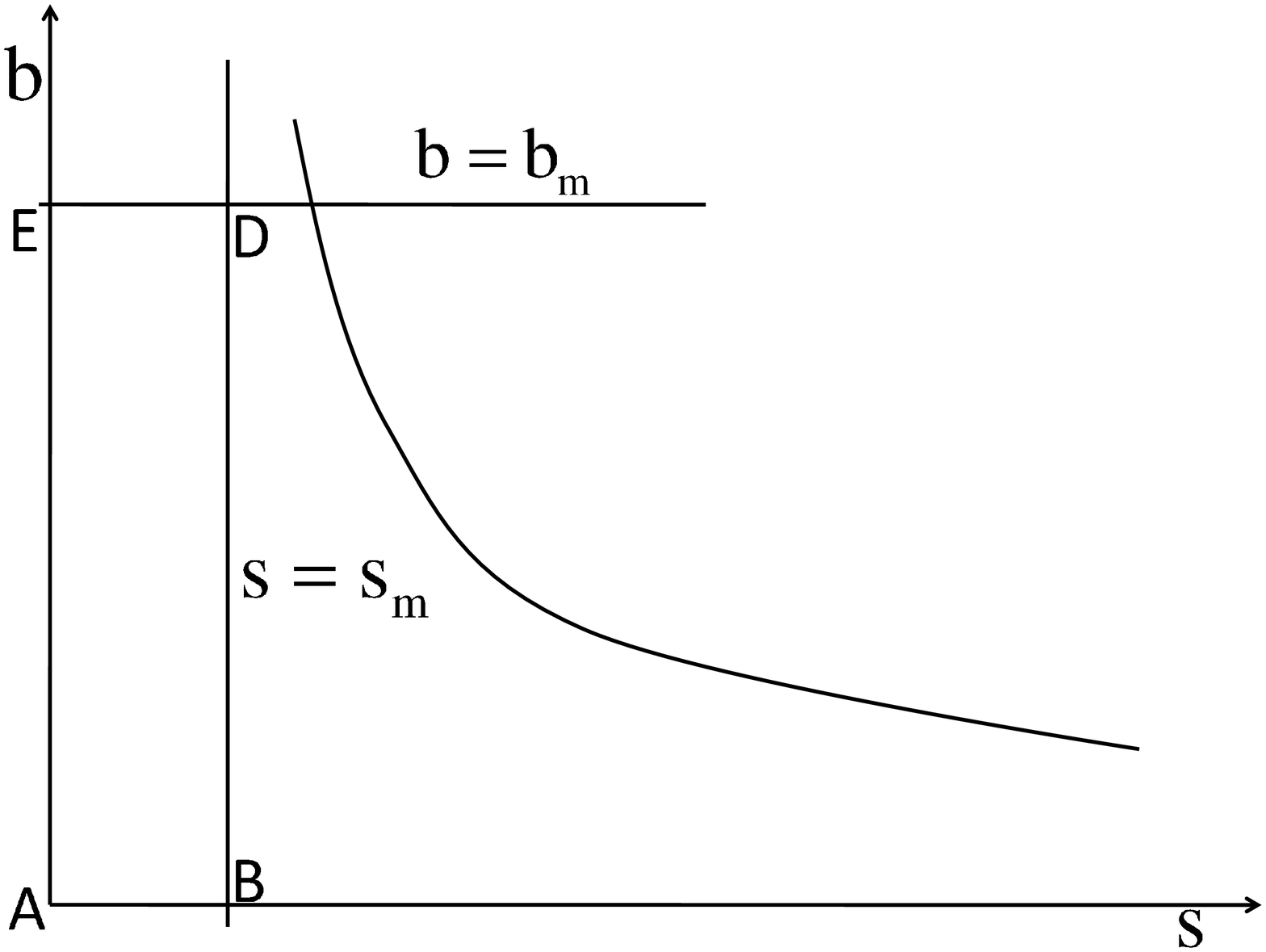}}
  \end{center}
  \caption{Sets of feasible $(b,s)$ tuples for different values of $a$.}
  \label{protocol_comparison_scenario1}
\end{figure}

For any given value of $a$ the following observations will help us solve the problem:
\begin{itemize}
\item  The optimal $(b,s)$ pair always lies on the boundary BCDE or BDE, because the objective function is monotonically decreasing with both $b$ and $s$.
\item By the same argument the optimal point on the line segment BC and ED are C and D, respectively.
\end{itemize}

Now, we argue that the optimal solution does not lie in $\mathcal{O}_1\cap \mathcal{R}^c$. Note that when $a\in \mathcal{R}^c$ the optimal solution for the $(b,s)$ tuple is point D where $(b,s)=(b_m,s_m)$. However, when $b=b_m$ we have either $b=p$ or $b=(n-a)$. In both of these cases the objective function is monotonically decreasing with $a$ (e.g., see Claim~\ref{cl_a_plus_b_equal_1}). The same is true for $s$. Therefore, it is clear from the definition of $\mathcal{O}_1$ that when $(b,s)=(b_m,s_m)$, $a$ should be (also can be) increased until the constraint $a+\frac{b_m s_m}{b_m+s_m}\leq r$ becomes active. In that case however, we have $a\in \mathcal{R}$ because
\begin{equation*}
    \frac{b_m s_m}{b_m+s_m}= (r-a) \quad \Longrightarrow \quad \frac{b_m(r-a)}{(b_m-r+a)} = s_m.
\end{equation*}
So, the objective function does not attain its minimum value when $a\in \mathcal{R}^c$ and we need to optimize the objective function only over the set $\mathcal{O}_1\cap \mathcal{R}$. In the definition of $\mathcal{R}$ the condition in terms of $s_m$ and $b_m$ can be converted to constraints on $a$ as
\begin{IEEEeqnarray}{rl}
\label{eq_region_R}
\mathcal{R}= \left[\max\{r-\frac{pq}{(p+q)},r-\sqrt{(m-r)(n-r)},a_n^*,a_m^*\}, r\right],
\end{IEEEeqnarray}
where
\begin{IEEEeqnarray}{rl}
a_n^*=&\left(\frac{n+r}{2}\right)-\sqrt{\left(\frac{n-r}{2}\right)^2+q(n-r)};\nonumber\\
a_m^*=&\left(\frac{m+r}{2}\right)-\sqrt{\left(\frac{m-r}{2}\right)^2+p(m-r)}.
\end{IEEEeqnarray}
Also by the previous argument the optimal $(b,s)$ tuple lies on the arc CD and satisfies $\frac{bs}{(b+s)}=(r-a)$. Further on the arc CD $b$ can take any value between point, E where $b=b_m$ and F, where $b=\frac{s_m(r-a)}{(s_m-r+a)}$ and thus lies in the range $\mathcal{B}=\left[\frac{s_m(r-a)}{(s_m-r+a)}, b_m\right]$. Using these facts, we see that the optimal solution is given by
\begin{equation}
\label{eq_pf_simple_opt_temp1}
d^*(r)=\min_{\left\{a\in \mathcal{R},~b\in \mathcal{B} \right\}} F\left(\phi_{{\alpha}}(a),\phi_{{\beta}}(b),\phi_{{\delta}}\left(\frac{b(r-a)}{(b-r+a)}\right)\right).
\end{equation}

\section{Proof of Theorem~\ref{thm_closedform_sol_symmetric_case}} \label{app-E}
To prove the theorem we evaluate the minimum value of the objective function in the optimization problem of Theorem~\ref{thm_simplified_optimization} over different carefully chosen subsets of the feasible set. The choice of these subsets also helps us to obtain a closed form expression for the optimal solution in each subset. The union of these sets might not be equal to the feasible set. The minimum of these different optimal solutions represent the minimum value of the objective function over a subset of the feasible set and hence yields only an an upper bound to the actual minimum. The proof is divided into different cases and each case considers a particular subset of the feasible set.

{\it Case 1} ($\mathcal{O}_1\cap \{a=r\}$): We know the optimal $(a,b,s)$ tuple satisfies $a+\frac{bs}{(b+s)}=r$. So, $a=r$ implies either $b=0$ or $s=0$. Since we are considering the symmetric case ($m=n$), without loss of generality we assume $s=0$. From the definition of $\phi_{\delta}$ we get $\hat{\delta}_l=1$ for $1\leq l\leq q$. Since the objective function is monotonically decreasing with $b$ for a given $a$ and $s$ to minimize the objective function the maximum possible value of $b$ should be chosen, i.e., $b=b_m=\min \{(n-r),p\}$. We need to consider two different cases: 1) $(n-r)\leq p$ and 2) $(n-r)\geq p$. In the first case, $b=(n-r)$ and $(a+b)=n$ which along with Claim~\ref{cl_a_plus_b_equal_1} implies that
\begin{IEEEeqnarray}{rl}
d_{11}(r) \triangleq \min F =&\sum_{i=1}^{n}(2n+k-2i+1)\hat{\alpha}_i+ \sum_{j=1}^{p}(k+n-2j+1)\hat{\delta}_j -kn +\sum_{\substack{i,j=1\\j+i\leq n}}^{n,q}(1-\hat{\alpha}_i-\hat{\delta}_j) ,\nonumber\\
\label{eq_closed_bound11}
\stackrel{(a)}{=}&\sum_{i=1}^{n}(2n+k-2i+1)\hat{\alpha}_i=d_{n,(n+k)}(r), ~(n-p)\leq r\leq n,
\end{IEEEeqnarray}
where step $(a)$ is obtained by putting $\hat{\delta}_l=1,~\forall l$ and the last step follows from the definition of $\phi_{\alpha}(a)$. Next we consider the case when $(n-a) \geq p$ and $b=p$, which in turn imply $\hat{\beta}_j=0, ~1\leq j\leq p$. Putting this in the objective function,  we get
\begin{IEEEeqnarray}{rl}
d_{12}(r)\triangleq \min F=&\sum_{i=1}^{n}(2n+2k-2i+1)\hat{\alpha}_i+ -kn +\sum_{\substack{i,j=1\\j+i\leq n}}^{n,p}(1-\hat{\alpha}_i) ,\nonumber\\
\label{eq_closed_bound12}
=&\sum_{i=1}^{n}(2n+k-2i+1)\alpha_i=d_{n,(n+k)}(r), ~0\leq r\leq (n-p).
\end{IEEEeqnarray}
Combining equations \eqref{eq_closed_bound11} and \eqref{eq_closed_bound12} we get the minimum value of the objective function over the chosen subset
\begin{equation}
\label{eq_dmt_symmetric_bound_1}
d_{U_1}\triangleq \min \{d_{11}(r),d_{12}(r)\}=d_{n,(n+k)}(r), ~0\leq r\leq n.
\end{equation}

{\it Case 2} ($\mathcal{O}_1\cap \{b=s=(n-a)\}$): Putting $b=s=(n-a)$ in the constraint $a+\frac{bs}{(b+s)}=r$ which is always active we get $a=(2r-n)$. Now, $a\in \mathcal{R}$ if and only if
\begin{equation}
\max \left\{r-\frac{p}{2}, a^*_n, 2r-n\right\}\leq (2r-n) \quad \Longrightarrow \quad (n- \frac{p}{2})\leq r.
\end{equation}
Since $(a+b)=n=(a+s)$, we know from Claim~\ref{cl_a_plus_b_equal_1} that the objective function gets simplified to
\begin{IEEEeqnarray}{rl}
d_{U_2}(r)\triangleq \min F = & ~\sum_{i=1}^{n}(2n-2i+1)\alpha_i \\ =& ~ d_{n,n}(2r-n) \\ =& ~ d_{2n,2n}(2r), ~ (n-\frac{p}{2})\leq r\leq n ~[\because a=(2r-n)].
\end{IEEEeqnarray}

{\it Case 3} ($\mathcal{O}_1\cap \{a=0, b=p\}$): We know from Theorem~\ref{thm_simplified_optimization} that $a\in \mathcal{R}$ if and only if
\begin{equation}
\max \left\{r-\frac{p}{2}, a^*_n, 2r-n\right\}\leq 0 \quad \Longrightarrow \quad (n- \frac{p}{2})\leq r.
\end{equation}
Further, from the definition of $\phi_{\alpha}$ and $\phi_{\beta}$ we get $\hat{\beta}_j=0, ~\forall j\leq p$ and $\hat{\alpha}_i=1, ~\forall i\leq n$. Putting this in the objective function we have
\begin{IEEEeqnarray}{rl}
d_{U_3}(r)\triangleq \min F &=n(n+2k) + \sum_{l=1}^{p}(k+n-2l+1)\hat{\delta}_j -2kn, \nonumber \\
&=n^2+\sum_{l=1}^{p}(n+k-2l+1)\left(1-\left(\frac{pr}{(p-r)}-l+1\right)^+\right)^+,~ 0\leq r\leq \frac{p}{2},.
\end{IEEEeqnarray}
where the last step follows from the fact that the optimizing $(a,b,s)$ tuple satisfies\footnote{Recall the optimal solution lies on the arc CD in Fig. \ref{a_range-a}.} $a+\frac{bs}{(b+s)}=r$.

{\it Case 4} ($\mathcal{O}_1\cap \{b=s=N, 1\leq N\leq p\}$): From the constraint $a+\frac{bs}{(b+s)}=r$ we get $a=r-\frac{N}{2}$. Now, $a\in \mathcal{R}$ if and only if
\begin{equation}
\max \left\{r-\frac{p}{2}, (2r-n), a^*_n\right\}\leq r-\frac{N}{2},~\Longrightarrow~\frac{N}{2}\leq r\leq \min \left\{n-\frac{N}{2}, n-\frac{N^2}{(2p-N)} \right\}.
\end{equation}
Since $b=s=N$, from the definition of $\phi_{i}$'s, we have $\hat{\delta}_j, \hat{\beta}_j=1,~\forall j\geq (N+1)$ and $\hat{\delta}_j, \hat{\beta}_j=0,~\forall j\leq N$. Putting this in the objective function we have
\begin{IEEEeqnarray}{ll}
d_{U_{(3+N)}}(r)\triangleq \min F&=\sum_{i=1}^{n}(2n+2k-2i+1)\hat{\alpha}_i + \sum_{j=(N+1)}^{p}2(k+n-2j+1)+ -2kn +2\sum_{j=1}^{N}\sum_{i=1}^{(n-j)}(1-\hat{\alpha}_i),\nonumber \\
&=\sum_{i=1}^{(n-N)}(2n+2k-2N-2i+1)\hat{\alpha}_i+N^2,\nonumber \\
&=N^2+d_{(n-N),(n+2k-N)}\left(r-\frac{N}{2}\right),~ \frac{N}{2}\leq r\leq \min \left\{n-\frac{N}{2}, n-\frac{N^2}{(2p-N)}\right\}.
\end{IEEEeqnarray}

{\it Case 5} ($\mathcal{O}_1\cap \{b=(n-a), s=N, 1\leq N\leq p\}$): We further assume $k\geq n$, i.e., $p=q=n$. Using the fact that the optimal solution always lies on the arc CD in Fig.~\ref{a_range-a} we have
\begin{equation}
a+\frac{N(n-a)}{(N+n-a)}=r.
\end{equation}
Solving the above equation for $a$ we get
\begin{equation}
\label{eq_exp_for_a_N}
a_N=\frac{(n+r)}{2}-\sqrt{\left(\frac{(n-r)}{2}\right)^2+N(n-r)}, \quad 1\leq N\leq p.
\end{equation}
Now, $a\in \mathcal{R}$ if and only if
\begin{IEEEeqnarray}{l}
\max \left\{a^*_n,(2r-n),r-\frac{n}{2}\right\}\leq \frac{(n+r)}{2}-\sqrt{\left(\frac{(n-r)}{2}\right)^2+N(n-r)} ,
\end{IEEEeqnarray}
which implies
\begin{IEEEeqnarray}{l}
\max \left\{\frac{Nn}{(N+n)}, n-p \right\}\leq r\leq n-\frac{N}{2}.
\end{IEEEeqnarray}
From Claim~\ref{cl_a_plus_b_equal_1} we get
\begin{IEEEeqnarray}{rl}
d_{U_{(3+p+N)}}(r)\triangleq \min F&=\sum_{i=1}^{n}(2n+k-2i+1)\hat{\alpha}_i + \sum_{j=1}^{n}(k+n-2j+1)\hat{\delta}_j-kn
+\sum_{i=1}^{n}\sum_{l=1}^{(n-i)}(1-\hat{\alpha}_i-\hat{\delta}_l). \nonumber
\end{IEEEeqnarray}
Since $s=N$, from the definition of $\phi_{\delta}$, we have $\hat{\delta}_j=1,~\forall j\geq (N+1)$ and $\hat{\delta}_j=0,~\forall j\leq N$. Putting this in the above equation we get
\begin{IEEEeqnarray*}{ll}
d_{U_{(3+p+N)}}(r)&=\sum_{i=1}^{n}(2n+k-2i+1)\hat{\alpha}_i -N(n+k-N) +\sum_{i=1}^{n}\sum_{l=1}^{(n-i)\land N}(1-\hat{\alpha}_i),\\
&\stackrel{(a)}{=}\sum_{i=1}^{n-N}(2n+k-N-2i+1)\hat{\alpha}_i -\frac{N(2k-N+1)}{2} +\sum_{i=(n-N+1)}^{n}(n+k-i+1),\\
&=\sum_{i=1}^{(n-N)}(2n+k-N-2i+1)\left(1-\left(a_N-i+1\right)^+\right)^++N^2,~\frac{Nn}{(N+n)}\leq r\leq n-\frac{N}{2},
\end{IEEEeqnarray*}
where step $(a)$ follows from the fact that $\sum_{i=1}^{n}(n-i)\land N=\frac{N(2n-N-1)}{2}$.

\section{Proof of Theorem \ref{lem:DMT_1k1_channel}}
\label{pf:lem:DMT_1k1_channel}
The optimization problem of Theorem~\ref{thm_simplified_optimization} is solved analytically for $m=n=1$. Denoting the optimal solution for this case by $d_{(1,k,1)}(r)$ and specializing Theorem~\ref{thm_simplified_optimization}, we get
\begin{IEEEeqnarray}{l}
\label{eq:1k1fn}
d_{(1,k,1)}(r)=\min_{\{a\in\mathcal{R},b\in \mathcal{B}\}}(2k+1)(1-a)+k(1-b)+k\left(1-s\right)-2k,
\end{IEEEeqnarray}
where $s=\frac{b(r-a)}{(b-r+a)}$ and $\mathcal{R}$ and $\mathcal{B}$ can be computed from equation \eqref{eq_region_R} and Theorem~\ref{thm_simplified_optimization} by setting $m=n=1$. Since the objective function is symmetric with respect to $b$ and $s$, without loss of generality, we can assume that $b\geq s=\frac{b(r-a)}{(b-r+a)}$, which in turn implies $b\geq 2(r-a)$. Differentiating with respect to $b$, we see that the function in \eqref{eq:1k1fn} is a monotonically decreasing function of $b$ for any given $a$ when $b\geq 2(r-a)$. It is thus minimized when $b=\max \mathcal{B}=\max \left[\frac{(1-a)(r-a)}{(1-r)}~(1-a)\right]=(1-a)$. Putting this in the above equation we get
\begin{IEEEeqnarray}{rl}
d_{(1,k,1)}(r)=&\min_{\{a\}}~(2k+1)(1-a)+ka+k\left(1-\frac{(1-a)(r-a)}{(1-r)}\right)-2k\nonumber\\
\label{eq_pf_1k1_DMT_temp}
=&\min_{\{a\}}~(k+1)(1-r)+\underbrace{(r-a)\left(1-\frac{k(r-a)}{(1-r)}\right)}_{T(r-a)},
\end{IEEEeqnarray}
where $0\leq (r-a)\leq (r\land (1-r))$. Note that $T(r-a)$ in the above equation is a concave function of $(r-a)$ of the form depicted in Fig.~\ref{fig_concavity} (in this figure, $k=3$ and $r=.35$) which intersects the x-axis at $(r-a)=R_1=\frac{(1-r)}{k}$. Thus the objective function is minimized for the following values of $(r-a)$

\begin{IEEEeqnarray*}{l}
(r-a)=\left\{\begin{array}{ll}0,&\textrm{if } (r-a)\leq \frac{(1-r)}{k};\\
(r\land (1-r)),&\textrm{if } (r-a)> \frac{(1-r)}{k}\end{array}\right.~\Longrightarrow~(r-a)=\left\{\begin{array}{ll}0,&\textrm{if }~0\leq r\leq \frac{1}{(1+k)};\\
r,&\textrm{if } ~\frac{1}{(1+k)}\leq r\leq  \frac{1}{2};\\
(1-r),&\textrm{if } ~\frac{1}{2}\leq r\leq 1.\end{array}\right.
\end{IEEEeqnarray*}

\begin{figure}[htp]
  \begin{center}
\includegraphics[scale=.5]{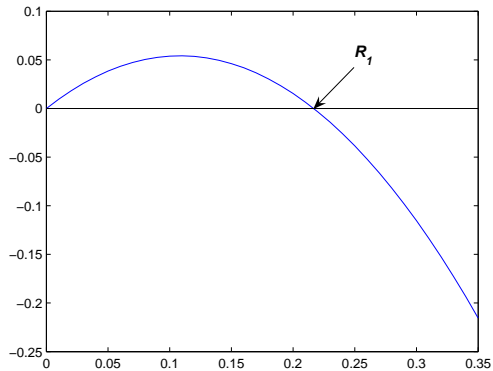}
  \end{center}
\caption{Plot of $T(r-a)$ vs. $(r-a)$.}
  \label{fig_concavity}
\end{figure}

Putting these values for optimal $(r-a)$ in equation \eqref{eq_pf_1k1_DMT_temp}, we get the DMT of the $(1,k,1)$ HD-RC as in \eqref{eq:dmt1k1rc}, thus proving the theorem.



\section{Proof of Theorem \ref{lem:DMT_n1n_channel}}
\label{pf:lem:DMT_n1n_channel}
To prove this result we solve the optimization problem in Theorem~\ref{thm_optimization_problem} analytically for the case $m=n$ and $k=1$. Let the optimal solution be denoted by $d_{(n,1,n)}^*(r)$. In what follows, the computation of $d_{(n,1,n)}^*(r)$ is carried out in two steps. First, we compute an upper bound, $d_{(n,1,n)}^u(r)\geq d_{(n,1,n)}^*(r)$, by computing the minimum of the objective function over a carefully chosen subset of the feasible set of the optimization problem in Theorem~\ref{thm_optimization_problem}. Then we compute the DMT of the static $(n,1,n)$ RC, denoted as $d_{(n,1,n)}^{stat}(r)$, which serves as a lower bound (since dynamic protocols include static protocols as a special case).

\subsection{The Upper bound}
Putting $m=n$ and $k=1$ in the optimization problem of Theorem~\ref{thm_optimization_problem} we get
\begin{IEEEeqnarray}{rl}
d_{(n,1,n)}^*(r)=\min_{\{(\bar{\alpha},\beta_1,\delta_1)\in \mathcal{S}\}} \sum_{i=1}^n(2n-2i+3)\alpha_i & +n\beta_1+n\delta_1-2n\nonumber\\
&+\sum_{i=1}^{n-1}\left\{(1-\beta_1-\alpha_i)^++(1-\delta_1-\alpha_i)^+\right\}.
\end{IEEEeqnarray}
where
\begin{IEEEeqnarray}{rl}
\mathcal{S}=\Big\{(\bar{\alpha},\beta_1,\delta_1): \sum_{i=1}^n(1-\alpha_i)+\frac{(1-\beta_1)(1-\delta_1)}{(1-\beta_1)+(1-\delta_1)}&\leq r; ~
0\leq \alpha_1\leq \alpha_2\leq,\cdots,\leq \alpha_n \leq 1;\nonumber \\
0\leq \beta_1 \leq 1,~0\leq \delta_1 &\leq 1; ~
(\beta_1+\alpha_n)\geq 1,~(\delta_1+\alpha_n) \geq 1\Big\}.
\end{IEEEeqnarray}
Now, suppose we reduce the size of $\mathcal{S}$ by putting the additional constraints $\delta_1=1$ and $(\alpha_n+\beta_1)=1$ and denote the resulting feasible set by $\hat{\mathcal{S}}$, i.e.,
\begin{IEEEeqnarray}{rl}
\hat{\mathcal{S}}=\Big\{(\bar{\alpha},\beta_1,\delta_1): \sum_{i=1}^n(1-\alpha_i)&\leq r; ~
0\leq \alpha_1\leq \alpha_2\leq,\cdots,\leq \alpha_n  \leq 1;~ 0\leq \beta_1 \leq 1,~0\leq \delta_1 \leq 1; ~  \beta_1+\alpha_n =  1\Big\}. \nonumber
\end{IEEEeqnarray}
The objective function minimized over $\hat{\mathcal{S}}$ would clearly be an upper bound on $d_{(n,1,n)}^*(r)$. Denoting the minimum over $\hat{\mathcal{S}}$ as $d^u_{(n,1,n)}(r)$, we have
\begin{IEEEeqnarray}{rl}
d_{(n,1,n)}^u(r)=\min_{\{(\bar{\alpha},\beta_1,\delta_1)\in \hat{\mathcal{S}}\}} \sum_{i=1}^n(2n-2i+3)\alpha_i +n\beta_1-n
\label{eq:objective-function-n1n-original}
+\sum_{i=1}^{n-1}(1-\beta_1-\alpha_i)^+,
\end{IEEEeqnarray}
and
\begin{equation*}
    d_{(n,1,n)}^*(r)\leq d_{(n,1,n)}^u(r).
\end{equation*}
Further, since $(\alpha_n+\beta_1)=1$ we have $(\alpha_i+\beta_1)\leq 1$ for $1\leq i\leq (n-1)$ which in turn implies that
\begin{equation*}
    \sum_{i=1}^{n-1}(1-\beta_1-\alpha_i)^+=\sum_{i=1}^{n-1}(1-\beta_1-\alpha_i).
\end{equation*}
Substituting this in equation \eqref{eq:objective-function-n1n-original} and eliminating $\beta_1$ from the objective function we get
\begin{IEEEeqnarray}{rl}
d_{(n,1,n)}^u(r)=\min_{\{\bar{\alpha}\}} \sum_{i=1}^n(2n-2i+2)\alpha_i
\end{IEEEeqnarray}
subject to
\begin{IEEEeqnarray*}{rl}
\sum_{i=1}^n(1-\alpha_i)\leq r~\textrm{and }~
0\leq \alpha_1\leq \alpha_2\leq,\cdots,\leq \alpha_n \leq 1.
\end{IEEEeqnarray*}
This is the optimization problem that arises in solving for the $(n+1,n)$ point-to-point MIMO channel \cite{tse1}.
Hence, we have
\begin{equation}\label{eq:dmt-n1n-upper-bound}
    d_{(n,1,n)}^*(r)\leq d_{(n,1,n)}^u(r)=d_{(n+1),n}^{ptp}(r),~0\leq r\leq n.
\end{equation}

\subsection{Lower bound}
The DMT of the static ($n,1,n$) HD-RC, denoted as $d_{(n,1,n)}^{stat}(r)$, and shown in Theorem \ref{lem:dmt-static-n1n} proved in Appendix \ref{pf:lem:dmt-static-n1n} to be equal to that of the $(n+1,n)$ point-to-point MIMO channel, is a lower bound to $ d_{(n,1,n)}^*(r) $.
\begin{equation}
\label{eq:dmt-n1n-lower-bound}
   d_{(n,1,n)}^*(r) \geq d_{(n,1,n)}^{stat}(r)=d_{(n+1),n}^{ptp}(r),~0\leq r\leq n.
\end{equation}
Combining this result with the upper bound in \eqref{eq:dmt-n1n-upper-bound}, we have Theorem \ref{lem:DMT_n1n_channel}.

\section{Proof of Theorem~\ref{lem:dmt-static-n1n} (Contd.)}
\label{pf:lem:dmt-static-n1n}

It is proved by mathematical induction that
\begin{equation}
\label{eq:dmt-n1n-static-RC}
    d_{(n,1,n)}^{stat}(r)=d_{(n+1),n}^{ptp}(r),~0\leq r\leq n.
\end{equation}
For $n=1$, the result is given in \cite{PAT}. Assuming that \eqref{eq:dmt-n1n-static-RC} is true for $n=(N-1)$,
we prove that it is also true for $n=N$. Now, from the objective function in equation \eqref{eq:objective-function-n1n-lower-bound-0} and the constraint \eqref{eq:constraint-n1n-lower-bound-1} it is clear that for $0\leq r\leq 1$, the objective function decays at the fastest rate if $\alpha_1$ is decreased with increasing $r$.\footnote{For $0\leq r\leq \frac{1}{2}$, the objective function decays at the same rate if $\beta_1$ is decreased, but then for $\frac{1}{2}\leq r\leq 1$, the objective function decreases at a strictly smaller rate than if $\alpha_1$ was decreased from the beginning. } Therefore, for $0\leq r\leq 1$, the optimal solution is $\alpha_1=(1-r)^+$, $\alpha_i=1$ for $2\leq i\leq N$ and $\beta_1=1$. Putting this in equation \eqref{eq:objective-function-n1n-lower-bound-0} we get
\begin{IEEEeqnarray}{rl}
d_{(N,1,N)}^{stat}(r)=&2N(1-r)+d_{(N-1),N}^{ptp}(0),~0\leq r\leq 1;\nonumber\\
\label{eq:objective-function-n1n-lower-bound-a}
=&d_{(N+1),N}^{ptp}(r),~0\leq r\leq 1.
\end{IEEEeqnarray}

On the other hand, since $\alpha_1=(1-r)^+$ for $r\geq 1$, substituting $\alpha_1=0$ in equation \eqref{eq:objective-function-n1n-lower-bound-0} we see that the optimization problem can be written as
\begin{IEEEeqnarray*}{rl}
\label{eq:dmtstatoptblah}
d_{(N,1,N)}^{stat}(r)=&\min_{\{(\bar{\alpha},\beta_1)\}} \sum_{i=2}^N(2N-2i+2)\alpha_i  +(N-1)\beta_1-(N-1)+\sum_{i=2}^{N-1}(1-\beta_1-\alpha_i)^+,\nonumber\\
=&\min_{\{(\bar{\hat{\alpha}},\beta_1)\}} \sum_{i=1}^{(N-1)}(2(N-1)-2i+2)\hat{\alpha}_i  +(N-1)\beta_1-(N-1)+\sum_{i=1}^{N-1}(1-\beta_1-\hat{\alpha}_i)^+,
\end{IEEEeqnarray*}
subject to the following constraints
\begin{IEEEeqnarray*}{rl}
\sum_{i=1}^{(N-1)}(1-\hat{\alpha}_i)+\frac{1}{2}(1-\beta_1)\leq (r-1);  \;
0\leq \hat{\alpha}_1\leq \hat{\alpha}_2\leq,\cdots \leq \hat{\alpha}_{(N-1)} \leq 1; \;
0\leq \beta_1\leq 1 ~\textrm{and}~(\beta_1+\hat{\alpha}_{(N-1)}) \geq 1.
\end{IEEEeqnarray*}
Evidently, the solution of \eqref{eq:dmtstatoptblah} at a given $r$ is the DMT of the static $(N-1,1,N-1)$ HD-RC evaluated at $r-1$,
which by the induction assumption is
\begin{IEEEeqnarray*}{rl}
d_{(N,1,N)}^{stat}(r)=& ~ d_{(N-1,1,N-1)}^{stat}(r-1), ~ 0\leq (r-1)\leq (N-1) \nonumber \\
=& ~ d_{(N-1),N}^{ptp}(r-1), ~ 0\leq (r-1)\leq (N-1) \nonumber \\
=& ~ d_{(N+1),N}^{ptp}(r), ~ 1\leq r\leq N.
\end{IEEEeqnarray*}
Combining this with equation \eqref{eq:objective-function-n1n-lower-bound-a} we get $
d_{(N,1,N)}^{stat}(r)= ~ d_{(N+1),N}^{ptp}(r), ~ 0\leq r\leq N $
Hence, by induction we have for all $n\in \mathbb{N}$,
\begin{IEEEeqnarray*}{rl}
d_{(n,1,n)}^{stat}(r)=& ~ d_{(n+1),n}^{ptp}(r), ~ 0\leq r\leq n .
\end{IEEEeqnarray*}

\bibliographystyle{IEEETran}
\bibliography{mybibliography}

\end{document}